\renewcommand{\hat}[1]{\widehat{#1}}
\title{Covariate-assisted graph matching}
\author{Trisha Dawn\thanks{\href{mailto:trisha@stat.tamu.edu}{trisha@stat.tamu.edu}}, Jes\'us Arroyo\thanks{\href{mailto:jarroyo@tamu.edu}{jarroyo@tamu.edu}}\\
\normalsize{Department of Statistics, Texas A\&M University}}
\begin{document}

\date{}
\maketitle

\begin{abstract}

Data integration is essential across diverse domains, from historical records to biomedical research, facilitating joint statistical inference.  A crucial initial step in this process involves merging multiple data sources based on matching individual records, often in the absence of unique identifiers. When the datasets are networks, this problem is  typically addressed through graph matching methodologies. For such cases, auxiliary features or covariates associated with nodes or edges can be instrumental in achieving improved accuracy. 
However, most existing graph matching techniques do not incorporate this information, limiting their performance against non-identifiable and erroneous matches.
To overcome these limitations, we propose two novel covariate-assisted seeded graph matching methods, where a partial alignment for a set of nodes, called seeds, is known. The first one solves a quadratic assignment problem (QAP) over the whole graph, while the second one only leverages the local neighborhood structure of seed nodes for computational scalability. Both methods are grounded in a conditional modeling framework, where elements of one graph’s adjacency matrix are modeled using a generalized linear model (GLM), given the other graph and the available covariates.
We establish theoretical guarantees for model estimation error and exact recovery of the solution of the QAP. The effectiveness of our methods is demonstrated through numerical experiments and in an application to matching the statistics academic genealogy  and the collaboration networks. By leveraging additional covariates, we achieve improved alignment accuracy. Our work highlights the power of integrating covariate information in the classical graph matching setup, offering a practical and improved framework for combining network data with wide-ranging applications.

\end{abstract}

\tableofcontents

\section{Introduction}\label{Intro}

Networks (or graphs) are a structured representation of objects or agents and the relationships between them. The nodes or vertices of a graph denote the units, and edges encode their interactions. These data representations are widely used across various application areas, including social networks \citep{tabassum2018social}, 
protein interactions \citep{kuzmanov2013protein} and power grids \citep{pagani2013power}, among many others.
Network data analysis encompasses a collection of techniques for examining these relationships and characterizing the overall structure of the system. Extensive surveys for statistical models and methods can be found in \cite{goldenberg2010survey, newman2018networks, sengupta2025statistical}. 

In recent years, there has been a growing interest in integrating of multiple data modalities for the analysis of networks. Examples include multilayer networks \citep{kivela2014multilayer,peixoto2015inferring}, multi-view data \citep{salter2017latent,gao2022testing} or network data with covariates \citep{fosdick2015testing,binkiewicz2017covariate,james2024learning}. Analysis of such datasets offers the opportunity to uncover new insights by combining complementary sources of information. Because these modalities are obtained from different sources, a crucial step in their joint analysis is to establish the correspondence between the entities associated with the nodes across the different datasets.  

Building on the need for accurate cross dataset alignment, this article focuses on the problem of matching nodes across two different network data, which is known as the problem of graph matching. This problem arises when the node correspondences across networks are unknown. 
Some applications where graph matching plays a vital role include video analysis \citep{caetano2009learning}, social network re-identification \citep{pedarsani2011privacy}, alignment of biological networks \citep{pedigo2023bisected}, fingerprint recognition \citep{cui2024contactless}, unsupervised word translation \citep{hartmann2019comparing}, and symbol or string matching \citep{zafarani2013connecting}. Comprehensive surveys of graph matching methodologies along with applications can be found in \cite{foggia2014graph, yan2016short}.

In many practical scenarios, network data is accompanied by auxiliary information such as edge or node covariates. Examples include user-profile data in social networks or cellular function of nodes in gene regulatory networks \citep{newman2016structure}. These features can serve as a powerful tool for different inference tasks, including community detection \citep{yang2013community}, node embeddings \citep{ma2020universal}, and vertex nomination \citep{levin2020vertex}. However, in the graph matching problem, most of the existing methodologies are limited to using only network structural information. When additional features are incorporated, this is often done through pre-computed similarity matrices from the features, limiting the flexibility of these methodologies.

In this paper, we study the problem of graph matching with auxiliary edge and node covariates. In particular, we focus on the seeded version of this problem, where a subset of nodes has a known correspondence in advance, referred to as the seed nodes. In some applications, the seeds arise naturally. For example, in social network re-identification, the users connecting their accounts across various social media platforms can serve as seeds \citep{narayanan2008robust, narayanan2009anonymizing}. Even the knowledge of a few seeds has proven to substantially enhance matching outcomes \citep{kazemi2015growing, fishkind2019seeded}. In our setup, the use of seeds is crucial for inferring the effect of auxiliary covariates in the matching problem.

This paper presents methodologies for covariate-assisted seeded graph matching. 
Below we summarize our main contributions.
\begin{figure}
    \centering
    \includegraphics[width=0.8\textwidth]{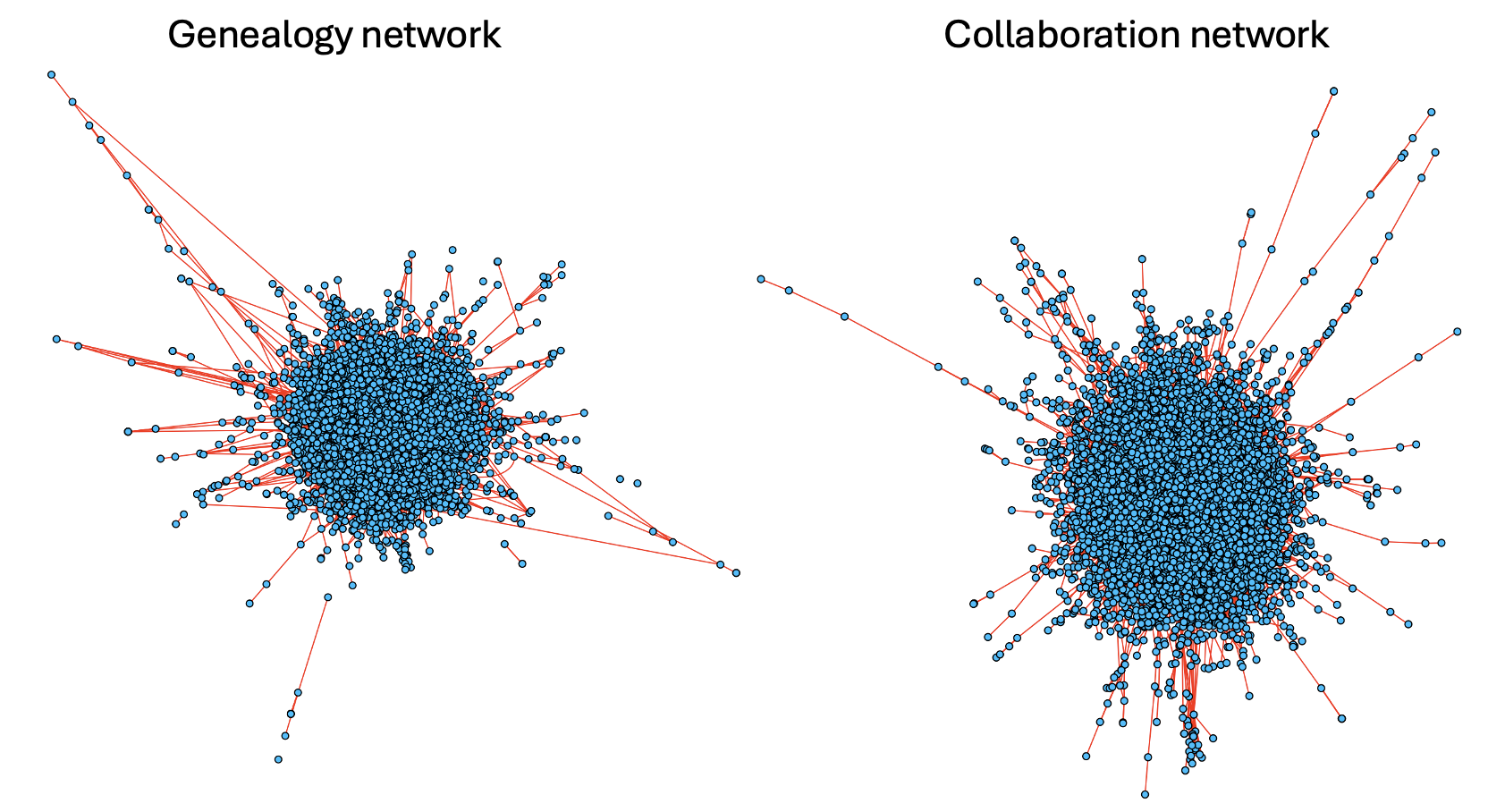}
    \caption{Visualization of the genealogy network (left) obtained from the Mathematics Genealogy Project, and the collaboration network (right) from the MADStat dataset \citep{DVN_V7VUFO_2022}. The networks represent the relationships among $8,627$ statisticians, either by doctoral advisor-advisee (genealogy) or co-autorship of a journal paper (collaboration). As these networks come from different sources, a crucial step in their joint analysis is to find the correspondence between their nodes.}
    \label{fig: networksAB}
\end{figure}

\begin{itemize}
    \item We propose a framework for covariate-assisted graph matching, based on a model that expresses one graph in terms of a function of the other graph and the available covariate information to guide the matching process.
     While we focus on a generalized linear regression model, we show how this framework is flexible and can be extended to other types of prediction methods, such as random forests.
    \item We propose two methods for covariate-assisted seeded graph matching. The first method is based on approximately solving a Quadratic Assignment Problem (QAP) across the entire collection of nodes and edges, and the second one leverages the neighborhood structure of the seed nodes for computational scalability. While the QAP-based method achieves better matching accuracy, its computational cost is significantly higher, and hence, the suitability of the methods depend on the specific application as well as on the practitioner.
    \item We establish an estimation error bound for the fitted prediction values of the model, and provide an exact matching result of the estimated vertex correspondence under the exact solution of the proposed QAP, along with a theoretical application of these results to an Erd\H{o}s--R\'enyi graph model with covariates. 
    \item We evaluate the performance of our proposed methods through simulations, comparing graph matching accuracy and run time. We also demonstrate their performance on a pair of real networks, showing significant improvement over existing approaches that do not incorporate covariates.
\end{itemize}

To motivate our methodology and illustrate the role of covariate information in practice, we study an application of matching an academic genealogy network and a collaboration network (Figure~\ref{fig: networksAB}), each containing $8,627$ nodes, representing statisticians. Matching these two networks not only can help reveal how academic lineage translates into collaboration, but also can provide insights into broader fundamental questions arising from the ``science of science'' field \citep{fortunato2018science,wang2021science}, such as the development of individual scientific careers, the evolution of various disciplines, and factors shaping scholarly impact. The genealogy network encodes doctoral advisor\mbox{-}student relationships based on academic lineage, while the collaboration network captures coauthorship relationships with edges denoting scholarly collaborations. The former is constructed from the Mathematics Genealogy Project and the latter from the Multi\mbox{-}Attribute Dataset on Statisticians (MADStat) \citep{DVN_V7VUFO_2022}. As these two networks are obtained from different sources, a crucial step for joint analysis is to match the individuals across the networks. Use of auxiliary information associated with the genealogy network, including graduation institution, country and year, offers substantial improvement in the matching accuracy.  Unlike traditional graph matching approaches that ignore auxiliary features, our method explicitly incorporates both node and edge-level covariates for matching the two networks.

The rest of the article is organized as follows. Section~\ref{Litreview} provides a review of the existing literature. Section~\ref{method} introduces the general formulation of the problem, the proposed model, and the proposed methods. Section~\ref{theory} establishes theoretical guarantees for model estimation and a perfect matching result for the solution of the matching problem.  Section~\ref{sim} reports results from numerical simulations, comparing the performance and run time of different methods. Finally, Section~\ref{dataApp} applies the proposed and competing methods to match the academic genealogy and collaboration networks, followed by a discussion in Section~\ref{discussion}.

\subsection{Related work}\label{Litreview}

The goal of this paper is to match entities across two different data sources including network data with covariates. One of the main techniques for entity resolution is record linkage, which consolidates information from various sources to enable joint statistical inference. 
For extensive overview of the available statistical methods, refer to \cite{dey2010efficient, winkler2014matching}. These approaches are often focused on matching individual observations based on covariate data itself, but are unable to account for the network structure across the observations.
Although the literature on graph matching is usually treated separately from general record linkage, it can be viewed as a special case of record linkage when the data sources to be matched are graph valued objects.

Graph matching has a long and rich history. The existing literature can be viewed both from an algorithmic and a modeling perspective. The algorithms available often include optimization techniques and different search strategies (see \citep{qiao2021igraphmatch} for an overview). A popular approach consists of finding the permutation that minimizes a matching error or graph edit distance across two graphs \citep{pedarsani2011privacy,riesen2009approximate}, which can be written as a quadratic assignment problem (QAP) \citep{edwards1980branch}
Given two adjacency matrices $A,\widetilde{B}\in\{0,1\}^{n\times n}$ for graphs with the same number of vertices, the problem of graph matching is usually defined by finding a permutation matrix $\hat{Q}\in\Pi_n$ (with $\Pi_n$ the set of $n\times n$ permutation matrices) that makes the rows and columns of $A$ and $\Tilde{B}$ structurally most similar, by solving a QAP given by
\begin{equation}\label{eq: classicGM}
   \hat{Q} = \argmin_{Q \in \Pi_n} \| A - Q^\top \widetilde{B} Q \|^2_F = \argmax_{Q\in\Pi_n} \operatorname{Tr}\left(AQ^\top \widetilde{B}Q\right).
\end{equation}
 Here, $\|\cdot\|_F$ denotes the Frobenius norm of a matrix and $\operatorname{Tr}(\cdot)$ denotes the trace of a matrix.
Unfortunately, this optimization problem is NP hard \citep{finke1987quadratic}. Even the special case of exact graph matching, also known as graph isomorphism, remains an open problem with respect to computational complexity, where the best known algorithm runs in quasi-polynomial time \citep{babai2016graph}. However, several relaxation-based efficient approximations exist to serve as a  practical tool \citep{vogelstein2011large, fishkind2012seeded, lyzinski2015graph}.

In addition to these optimization approaches, other prominent classes of matching methods utilize special properties of the pair of graphs to perform the alignment. These methods include spectral-based approaches \citep{umeyama2002eigendecomposition, fan2019spectralgraphmatchingregularized}, methods based on degree-profile information \citep{ding2021efficient} or subtree-counts \citep{mao2023random}. These methods can perform seedless graph matching in polynomial time at the cost of stronger signal strength requirements. A different line of work involving machine learning methods includes graph neural networks \citep{derr2021deep, jiang2022glmnet} using a semi-supervised or supervised learning framework. Seeded graph matching methods are usually based on percolation, where the matches propagate starting from some initial seed set \citep{yartseva2013performance, kazemi2015growing}, or are based on restricted optimization problems to the neighborhood of the seed nodes \citep{lyzinski2014seeded,mossel2020seeded}.

The methods discussed so far only deal with graph structural information but do not incorporate additional features, such as edge or node covariates. The use of these features is relatively unexplored in the context of graph matching. Notable existing approaches include learning a similarity matrix instead using features \citep{zhang2018attributed}, or constructing seed nodes based on node features \citep{zhang2024attributed}. These approaches implicitly assume that the features contribute to the similarity between graphs, but are unable to leverage the effect of different attributes and graph information.

Alongside algorithmic developments, significant advances have also been made in developing various random graph models to explore the difficulty and viability of matching accuracy. The proposed models in literature include correlated Erd\H{o}s--R\'enyi models \citep{pedarsani2011privacy, lyzinski2014seeded} with some extensions to inhomogeneous Bernoulli graphs \citep{onaran2016optimal, lyzinski2018information,arroyo2021maximum}.
Several existing algorithms, including many of the ones previously mentioned, possess theoretical guarantees for recovery of the latent vertex alignment under these models.  In particular, the optimization problem in Equation~\eqref{eq: classicGM} is guaranteed to consistently recover the matching permutation under the correlated Erd\H{o}s--R\'enyi model \citep{lyzinski2014seeded,Cullina2016,Cullina2017} with sharp information-theoretic thresholds for matching recovery already established \citep{wu2022settling}. 

A practical limitation of the correlated Erd\H{o}s-R\'enyi modeling framework is the underlying assumption that the graphs have identical marginal distributions, which can be unrealistic in practice
\citep{lyzinski2020matchability}. In our approach, we use the edge and node covariate information to model the heterogeneity in the marginal distributions for the two graphs. This often arises in practice where the two graphs to be matched are generated from two different mechanisms but share common vertices. For example, users across different social media platforms might behave differently, which generates different network structures to be matched for social network re-identification \citep{li2016matching}. We consider a similar example in our motivating application (see Section~\ref{dataApp}) where the genealogy network presents a tree-like structure while the collaboration network has many loops and a higher density.

To leverage available node or edge covariate information in addition to graph structure, we consider a regression model for network responses. This problem has been widely studied in the literature, with several approaches including latent class models \citep{nowicki2001estimation}, latent distance models \citep{hoff2002latent, pattison2003dynamic, wu2017generalized, zhang2022joint} and latent eigenmodels \citep{hoff2007modeling, loyal2025spike}. Extensions of latent space models have also been proposed, for example in \cite{pan2021inference, huang2024pcabm} to incorporate block structures. In this work, we develop a unified framework that builds on the strengths of existing graph matching algorithms and regression models for network responses to incorporate node and edge level covariates to improve graph matching accuracy in practical data integration tasks.  

\section{Methodology}\label{method}
The goal of this paper is to find a correspondence between the vertices of two graphs, generally known as the graph matching problem. A graph $G=(V, E)$ consists  of the set of nodes (or vertices) $V$ and the set of edges between the nodes $E\subset V\times V$. In this paper, we consider two such graphs $G_1 = (V_1, E_1)$ and $G_2 = (V_2, E_2)$. We assume the graphs have $n$ nodes each and there is a one-to-one correspondence between the node sets $V_1$ and $V_2$, in which case we can write their labels as $V_l = [n]=\{1, \cdots, n\},$ $l = 1, 2$, but the edges $E_1$ and $E_2$ are potentially different. For simplicity, we consider simple graphs, meaning the edges are unweighted, undirected and with no self-loops, but extensions to other settings are discussed. The graphs are represented by their corresponding adjacency matrices $A, {B} \in \{0, 1\}^{n \times n}$ for $G_1, G_2$ respectively, which are symmetric with zeros in their diagonals. Presence of an edge between nodes $i$ and $j$ in $G_1$ (resp. $G_2$) is denoted by $A_{ij} = 1$ (resp. ${B}_{ij} = 1)$ and no edge is denoted by zero. We will use $A, {B}$ interchangeably with $G_1, G_2$ in the subsequent sections. 

To represent the fact that the correspondence between the nodes is unknown, we denote by $\widetilde{B}\in\{0,1\}^{n\times n}$ to the observed adjacency matrix of $G_2$, with its vertices shuffled by an unknown permutation matrix $Q^\ast\in\Pi_n$, where $\Pi_n$ is the set of all permutation matrices of size $n$, in which case $\widetilde{B} = Q^{{\ast}^\top} B Q^\ast$. Hence, the observed data consist of matrices $A$ and $\widetilde{B}$, and the goal is to infer $Q^\ast$ to determine the unknown alignment.

 In addition to graph data, we also consider edge or node covariate information coming from one of the graphs to inform the graph matching problem.  
Edge covariates are encoded in vectors denoted by $Y_{ij} = (Y_{ij}^{(1)}, \ldots, Y^{(d_1)}_{ij})\in\mathbb{R}^{d_1}$ for $i,j\in[n]$, and let $Y=\{Y_{ij}:i,j\in[n]\}$. Similarly, node covariates are represented by vectors $Z_{i} = (Z_i^{(1)}, \ldots, Z_i^{(d_2)})\in\real^{d_2},$ $i = 1, \cdots, n$, and let $Z = \{Z_{i}:i\in[n]\}$. For each node covariate $k\in[d_2]$, we assume that there is a known deterministic function $h_k:\real^2\rightarrow \real$ to transform this covariate into an edge covariate, such as a similarity or distance function. For example, this function can denote the absolute difference $h_k(Z^{(k)}_i, Z^{(k)}_j) = |Z^{(1)}_i - Z^{(1)}_j|$, or an indicator variable $h_k(Z_i^{(k)}, Z_j^{(k)}) = \mathbbm{1}(Z_i^{(k)} = Z_j^{(k)})$.

To make the graph matching problem with covariates more tractable, we assume that the vertex correspondence for a subset of nodes $\mathcal{S}\subset[n]$ is known. This set $\mS$ is referred to as the \emph{seed nodes}, with $s=|\mathcal{S}|$ denoting the total number of seeds. In real-world applications, the set $\mathcal{S}$ is either known a priori or can be obtained using available information, such as unique identifiers or manual inspection. Without loss of generality, we assume that $Q^\ast_{\mS,\mS} = \mathrm{I}_s$  where $ \mathrm{I}_s$ be the $s\times s$ identity matrix, which ensures that the rows and columns of $A$ and $\widetilde{B}$ with indexes in $\mS$ are aligned.

To summarize, given an adjacency matrix from a graph $A$ with edge and node covariate information $Y,Z$, a second observed graph with its vertices partially permuted $\widetilde{B}$, and a set of seeds $\mathcal{S}\subset[n]$ of nodes with known correspondence, our goal is to infer the correspondence of the remaining $\wbmS = [n]\setminus \mathcal{S}$ nodes.

\subsection{Modeling graph matching with covariates}\label{ProposedModel}

We consider a conditional model for the graph $B$ given the graph $A$ and its covariates.
In particular, we assume that the entries
of $B$ are independent Bernoulli random variables with probabilities $P_{ij}$, $i,j\in[n]$,  which are a function of the graph $A$ and the covariates. Specifically, we assume that the edges of the graph $B$ follow a generalized linear model \citep{mccullagh1989generalized} that is a function of $A$ and the covariate data $Y, Z$ of the form
\begin{equation}
    \begin{split}
    B_{ij} | A, Y, Z \sim \text{Ber}(P_{ij}),\quad\quad i,j\in[n], i>j,\label{eq:glm-for-B}\\
    g(P_{ij}) = \theta_0^\ast + A_{ij}\theta_1^\ast +  Y_{ij}^\top\theta^\ast_{\text{edge}} + h(Z_i, Z_j)^\top\theta^\ast_{\text{node}},
    \end{split}
\end{equation}
where $h(Z_i, Z_j) = (h_1(Z_i^{(1)}, Z_j^{(1)}), \ldots, h_{{d_2}}(Z_i^{(d_2)}, Z_j^{(d_2)}))$. Here, $g$ is the link function corresponding to the specific model, with possible choices including logistic or linear. The parameters $\theta_0^\ast,\theta_1^\ast\in\real$ are the intercept and network coefficients, respectively, $\theta^\ast_{\text{edge}}\in\real^{d_1}$ is a vector of the edge covariate coefficients, and $\theta^\ast_{\text{node}}\in\real^{d_2}$ is the node covariates coefficient vector. Letting $\theta^\ast = (\theta_0^\ast, \theta_1^\ast, \theta_{\text{edge}}^\ast, \theta_{\text{node}}^\ast)\in\real^{d}$, $d = d_1 + d_2 + 2$, be the coefficient vector, $X_{ij} = (1, A_{ij}, Y_{ij}, h(Z_{i},Z_{j}))$ be the vector of predictors, and $\mu = g^{-1}$ be the inverse link or mean function, we write the model in a more compact notation as
\begin{equation*}
    B_{ij}|X \sim\text{Ber}(P_{ij}), \quad\quad\quad P_{ij} =  \mu(X_{ij}^\top\theta^\ast).
\end{equation*}
Due to the graphs being undirected with no self-loops, we assume that $P_{ij}=P_{ji}$, $X_{ij} = X_{ji}$, and $P_{ii}=0$, $X_{ii} = \mathrm{0}_d$,where $0_d$ is the zero vector of length $d$, for all $i,j\in[n]$.

The model in Equation~\eqref{eq:glm-for-B} is related to the literature of regression models for network-valued responses (see Section~\ref{Intro}). Since our goal here is to match the graphs, our model choice intends to be flexible enough to capture variation across the graphs, while remaining feasible to estimate in the presence of unknown node correspondences across the graphs. Since $\widetilde{B} = Q^*{B}Q^{*\top}$, this model in turn implies that the elements of the observed matrix $\widetilde{B}$ are also conditionally modeled given elements of $A$, $Y$ and $Z$ under some unknown permutation matrix $Q^\ast\in\Pi_n$. 

Notably, our modeling framework encompasses the popular correlated Erd\H{o}s--R\'enyi (ER) model \citep{pedarsani2011privacy,lyzinski2014seeded} as a special case. In this model, with edge probability $p$ and correlation $\varrho$, the distribution of the edges of $A$ and $B$ satisfies $\p(A_{ij}=1) = \p(B_{ij}=1) = p$ and $\text{Corr}(A_{ij}, B_{ij}) = \varrho \geq 0$, independently for each $i > j$. Equivalently, this can be written as
\begin{align*}
     \p(B_{ij} = 1 \mid A_{ij}) = (1-\varrho)p + \varrho A_{ij},
\end{align*}
which corresponds to a special case of our model with the identity link function and with coefficients $\theta_0^\ast = (1-\varrho)p$ and $\theta_1^\ast = \varrho$.

\begin{remark} The model formulation in Equation~\eqref{eq:glm-for-B} can easily handle extensions to different graph structures. For instance, directed networks might often have different values for $P_{ij}$ and $P_{ji}$. The choice of a Bernoulli distribution for the response can also be modified to handle numerical or categorical variables with other distributions. More generally, our methodology can be extended to any prediction framework with $B$ as the response and $A,Y,Z$ as covariates. Examples include, nonparametric methods such as random forests \citep{breiman2001random} or penalized estimators to handle high-dimensional covariates \citep{hastie2015statistical}. Our choices here are motivated to strike a balance between model flexibility and theoretical analysis, but we illustrate extensions of this framework in Section~\ref{dataApp}.

\end{remark}

\subsection{Matching via a quadratic assignment problem}\label{algo1detailed}

Recall that our goal is to find the permutation matrix $Q^* \in \Pi_n$ such that the vertices of $Q^\ast \widetilde{B} Q^{*\top}=B$ correspond to the ones of $A$.  
As $P = \mathbbm{E}[B|A, Y, Z]$, by properties of the expectation, it is clear that the expected mean squared error $\mathbbm{E}[\|B - M\|_F^2|A, Y, Z]$ is minimized with respect to $M\in\real^{n\times n}$ if $M=P$. Hence, this property implies that the permutation $Q^\ast$ satisfies
\begin{equation}\label{eq: QuadOptim-Expectation}
    {Q}^\ast \in \argmin_{Q \in \Pi_n} \mathbbm{E}\left[ \| P - Q \widetilde{B} Q^\top \|^2_F \Big|A,Y,Z\right].
\end{equation}
Moreover, the solution to the above optimization problem is unique whenever $P \neq QPQ^\top$ for every $Q\in \Pi_n,$ $Q\neq I_n$, that is,
there are no automorphisms of $P$ other than the identity. However, this optimization problem in Equation~\eqref{eq: QuadOptim-Expectation} requires to compute an expectation with respect to $B$ and needs $P$ to be known and demands a suitable estimator of $P$ to work with. Thus, given some estimator $\widehat{P}$ of $P$ (to be defined later), motivated by \eqref{eq: QuadOptim-Expectation} we propose to estimate $Q^\ast$ using the following optimization problem
\begin{equation}\label{eq: QuadOptim}
    \widehat{Q} = \argmin_{Q \in \Pi_n} \| \widehat{P} - Q \widetilde{B} Q^\top \|^2_F,\quad\quad\quad\text{subject to }Q_{\mS,\mS}=I_s.
\end{equation}
This problem, equivalent to a quadratic assignment problem, 
attempts to find the permutation matrix $\hat{Q}$ which minimizes the sum of squared errors between a permuted version of $\widetilde{B}$ and an estimator of its expectation. This optimization problem is conceptually similar to the classic formulation of graph matching with a quadratic assignment problem as in Equation~\eqref{eq: classicGM}, but the adjacency matrix $A$ is replaced with $\widehat{P}$. The constraint $Q_{\mS, \mS} = I_s$ incorporates the knowledge of the node correspondence in the seeds. While an exact solution of this problem is computationally hard \citep{finke1987quadratic}, several existing approaches can approximate its optimal value; we defer the discussion of this point to the end of the section.

Finding a suitable estimator of $P$ under the model in Equation~\eqref{eq:glm-for-B} can be performed when the knowledge of some correspondences between the edges of $A$ and $\widetilde{B}$ is available. We thus use the set of seeds $\mathcal{S}\subset[n]$ to fit a regression model in order to estimate the coefficient vector ${\theta^\ast}$ in Equation~\eqref{eq:glm-for-B}, and denote this estimator as $\widehat{\theta}$. As we assume that the nodes in $\mathcal{S}$ are correctly aligned across the graphs ($Q^\ast_{\mathcal{S}, \mathcal{S}} = I_s$),
this estimator only requires the submatrices $A_{\mS, \mS}, \widetilde{B}_{\mS, \mS}$ and the covariate information for the seed nodes $Y_{\mS, \mS} = \{Y_{ij}: i,j\in\mS\},$ $Z_{\mS} = \{Z_i: i\in\mS\}$. The estimator for $\theta$, obtained via minimization of the generalized linear model loss function, is given by
\begin{equation}\widehat{\theta} = \argmin_{\theta\in\real^{d}}  L(\theta), \quad\quad \text{where }L(\theta) = \sum_{i,j\in \mathcal{S}, i>j}\left\{-\widetilde{B}_{ij}(X_{ij}^\top \theta) + \psi(X^\top_{ij}\theta)\right\}.
\label{eq:glm-loss-function}
\end{equation}
Here, $\psi:\real\rightarrow\real$ is the cumulant function of the generalized linear model and satisfies $\psi'(t) = \mu(t)$. Examples include $\psi(t) = t^2/2$ for ordinary least squares, and $\psi(t) = \log(1+e^t)$ for logistic regression. 

Once the estimator for the coefficient vector is obtained, this is extended to estimate the probability matrix $P$ on the entire graph via the regression model as
\begin{equation}\label{eq:estimatorP}
    \widehat{P}_{ij} = \mu\left(X^\top_{ij}\widehat{\theta} \right) = \mu\left(\widehat{\theta}_0 + A_{ij}\widehat{\theta}_1  + Y^\top_{ij}\widehat{\theta}_{\text{edge}} + h(Z_i, Z_j)^\top\widehat{\theta}_{\text{node}}\right), \quad \quad i,j\in[n], i\neq j,
\end{equation}
and $\widehat{P}_{ii}=0$. Given $\widehat{P}$, the solution to the quadratic assignment problem in Equation~\eqref{eq: QuadOptim} is used to estimate the unshuffling permutation $Q^\ast$. This process is summarized in Algorithm~\ref{alg1}. 

Solving the quadratic assignment problem in the last step of the method is a well-known NP hard problem, and therefore, there are no known polynomial-time algorithms available. While exact solutions are only possible for very small graphs, relaxations of this problem are shown to be accurate and perform well in practice \citep{zaslavskiy2008path, lyzinski2015graph}. These methods relax the permutation matrix constraint to its convex envelope and then use a continuous optimization algorithm, such as gradient descent. In our implementation, we use the Fast Approximate Quadratic Programming (FAQ) method with seeds \citep{vogelstein2015fast,fishkind2019seeded}, which is based on the Frank-Wolfe algorithm.

\begin{algorithm}
\caption{QAP-based Covariate Assisted Graph Matching with Seeds (CovQAP)}
\label{alg1}
\begin{algorithmic}
 \State \textbf{Input:} Adjacency matrices $A$, $\widetilde{B}$, edge covariates $Y$, node covariates $Z$, seed nodes $\mS$.
 \Procedure{\State 1. Extract the information from the seed nodes $\mathcal{S}$, denoted by $A_{\mathcal{S}, \mathcal{S}}, \widetilde{B}_{\mathcal{S}, \mathcal{S}}$, $Y_{\mathcal{S}, \mathcal{S}}, Z_{\mathcal{S}}$.
\State 2. Fit the regression model as in Eq.~\eqref{eq:glm-for-B} using the seed nodes to obtain $\widehat{\theta} = (\widehat{\theta}_0, \widehat{\theta}_1,  \widehat{\theta}_\text{edge}, \widehat{\theta}_\text{node})$.
 \State 3. Obtain estimate of $P$ on the entire graph, denoted by $\widehat{P}_{ij} = \mu(X_{ij}^\top \widehat{\theta})$ as in Eq.~\eqref{eq:estimatorP}.
 \State 4. Solve the quadratic assignment problem in Eq.~\eqref{eq: QuadOptim} using $\widehat{P}$, $\widetilde{B}$ and $\mS$ to obtain $\widehat{Q}$.
 \State \textbf{Output:} Permutation matrix $\hat{Q}\in\Pi_n$.}
\EndProcedure
\end{algorithmic}
\end{algorithm}

\subsection{An efficient algorithm via seeded neighborhood matching}
We now present a computationally efficient method to solve the graph matching problem with covariates. This proposed algorithm is based on only using the local neighborhood of the seed nodes on the unseeded ones. The idea of utilizing these neighborhoods has been explored previously in the seeded graph matching literature with not covariates \citep{lyzinski2014seeded,mossel2020seeded}. This idea involves matching the non-seed vertices of the given graphs according to the similarity of their connections to the nodes in the local neighborhood  seeded set of nodes $\mS$. Our algorithm follows this idea and incorporates the estimation of $P$ in order to handle the covariates.

Specifically, for a pair of non-seed vertices $i,j\in\wbmS = [n]\setminus \mS$, corresponding to the graphs $A$ and $\widetilde{B}$ respectively, we compare the neighborhood of these vertices among the seed nodes in $\mS$. To motivate the idea, assuming first that $P$ is known, for two unmatched vertices $i$ and $j$ we can compare the values of $P_{ik}$ and $\widetilde{B}_{jk}$ for all $k\in\mathcal{S}$ to determine if $i$ and $j$ are a correct match. In particular, if the pair $i$ and $j$ is the correct correspondence, then $\mathbbm{E}[\widetilde{B}_{jk}|A,Y,Z] = P_{ik}$, which implies that the expected square distance between the vectors $P_{i,\mS}, \widetilde{B}_{j,\mS}\in\real^s$, given by
$\mathbbm{E}\left[\|P_{i,\mS} - B_{j,\mS}\|^2 |A,Y,Z\right] = \sum_{k\in\mathcal{S}} \mathbbm{E}(P_{ik} - \widetilde{B}_{jk}|A,Y,Z)^2,$
is minimized when $i$ and $j$ are a match. Using this idea, we estimate the permutation matrix $\widehat{Q}$ by minimizing a measure of the quality of a match that compares the rows of the matrices $\widehat{P}_{\wbmS, \mS},\widetilde{B}_{\wbmS, \mS}\in\real^{(n-s)\times s}$, formed by the entries of $\widehat{P}$ and $\widetilde{B}$ in $\wbmS\times \mS$, according to the squared error given by 
\begin{equation}
\widehat{Q}_{\wbmS, \wbmS} = \argmin_{Q\in\Pi_{n-s}} \left\|\widehat{P}_{\wbmS, \mS} - Q \widetilde{B}_{\wbmS, \mS}\right\|^2_F =
    \argmax_{Q\in\Pi_{n-s}}\trace(Q\widetilde{B}_{\wbmS,\mS}\widehat{P}_{\wbmS,\mS}^\top), \label{eq:lin_assign_prob}
\end{equation}
and we set $\widehat{Q}_{\mS, \mS} = I_s$. This formulation is thus equivalent to a linear assignment problem, which can be solved exactly in polynomial time by the Hungarian algorithm \citep{Kuhn1955}.

The estimation procedure of our second method using the neighborhood of the seed nodes is summarized in Algorithm~\ref{alg2}.
The following proposition justifies this approach by showing that the true permutation minimizes the expected loss function based on the true probability matrix $P$.

\begin{proposition}\label{prop:linearassignmentsol}
Suppose that $B,P\in\real^{n\times n}$ are matrices such that $\mathbbm{E}[B] = P$, and let $\widetilde{B} = Q^\ast B Q^{\ast^\top}$, where $Q^\ast\in\Pi_n$ is a permutation matrix. For any given set of seeds $\mS\subset[n]$ and $\wbmS = [n]\setminus\mS$, the permutation $Q^\ast$ satisfies
    \begin{equation}\label{eq:prop-linearassignment}
        Q^\ast_{\wbmS, \wbmS} \in \argmax_{Q\in\Pi_{n-s}} \mathbbm{E}\left[\trace\left(Q \left(\widetilde{B}_{\wbmS, \mS} P^\top_{\wbmS, \mS} \right)\right)\right].
    \end{equation}
    Moreover, $Q^\ast$ is the unique solution of the above optimization problem if $P_{\wbmS, \mS}$ does not have repeated rows.
\end{proposition}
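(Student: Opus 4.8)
The plan is to exploit the determinism of $P$ and the block structure induced by the seed alignment to reduce the optimization to a trace maximization against a Gram matrix, which an elementary convexity (AM--GM) argument then resolves. Throughout I would use the standing assumption $Q^\ast_{\mS,\mS}=I_s$, which guarantees that $Q^\ast_{\wbmS,\wbmS}$ is itself a permutation matrix and makes the reduction possible.

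First I would push the expectation inside the trace. Since $P$ is deterministic and $\operatorname{Tr}$ is linear, $\mathbbm{E}[\operatorname{Tr}(Q\widetilde{B}_{\wbmS,\mS}P^\top_{\wbmS,\mS})] = \operatorname{Tr}(Q\,\mathbbm{E}[\widetilde{B}_{\wbmS,\mS}]\,P^\top_{\wbmS,\mS})$, so it suffices to identify $\mathbbm{E}[\widetilde{B}_{\wbmS,\mS}]$. From $\mathbbm{E}[B]=P$ and $\widetilde{B}=Q^\ast B Q^{\ast\top}$ I get $\mathbbm{E}[\widetilde{B}]=Q^\ast P Q^{\ast\top}$. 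Here I would invoke $Q^\ast_{\mS,\mS}=I_s$: a permutation fixing every seed can send neither a seed to a non-seed nor the reverse, so $Q^\ast$ is block diagonal with respect to $[n]=\mS\cup\wbmS$, with diagonal blocks $I_s$ and $R^\ast:=Q^\ast_{\wbmS,\wbmS}\in\Pi_{n-s}$. Multiplying out the blocks yields $\mathbbm{E}[\widetilde{B}_{\wbmS,\mS}]=R^\ast P_{\wbmS,\mS}$ (up to a transpose depending on the orientation convention for $Q^\ast$), and in either case the objective reduces to $f(Q)=\operatorname{Tr}(Q R^\ast W)$ with the Gram matrix $W:=P_{\wbmS,\mS}P^\top_{\wbmS,\mS}$.

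Next I would substitute $M=QR^\ast$; since $R^\ast$ is a fixed permutation, right-multiplication is a bijection of $\Pi_{n-s}$, so $M$ ranges over all of $\Pi_{n-s}$ and the problem becomes $\max_{M\in\Pi_{n-s}}\operatorname{Tr}(MW)$, with the claim amounting to optimality of $M=I_{n-s}$ (equivalently, $Q$ equal to the seed-block permutation $Q^\ast_{\wbmS,\wbmS}$). Writing $v_a:=P_{a,\mS}$ for the rows of $P_{\wbmS,\mS}$, I note $W_{ab}=\langle v_a,v_b\rangle$ and $\operatorname{Tr}(MW)=\sum_a\langle v_{\sigma(a)},v_a\rangle$, where $\sigma$ is the permutation encoded by $M$. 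The key estimate is the elementary bound $\langle v_{\sigma(a)},v_a\rangle\le\tfrac12\bigl(\|v_{\sigma(a)}\|^2+\|v_a\|^2\bigr)$; summing and using that $\sigma$ is a bijection (so $\sum_a\|v_{\sigma(a)}\|^2=\sum_a\|v_a\|^2$) gives $\operatorname{Tr}(MW)\le\sum_a\|v_a\|^2=\operatorname{Tr}(W)$, attained at $M=I_{n-s}$. This shows $Q^\ast_{\wbmS,\wbmS}$ is a maximizer.

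Finally, for uniqueness I would track the equality case: $\langle x,y\rangle=\tfrac12(\|x\|^2+\|y\|^2)$ holds iff $x=y$, so $\operatorname{Tr}(MW)=\operatorname{Tr}(W)$ forces $v_{\sigma(a)}=v_a$ for every $a$. When $P_{\wbmS,\mS}$ has no repeated rows the $v_a$ are pairwise distinct, whence $v_{\sigma(a)}=v_a$ implies $\sigma(a)=a$, i.e. $M=I_{n-s}$ and $Q=Q^\ast_{\wbmS,\wbmS}$ is the unique optimizer. I expect the only genuine obstacle to be the second step: correctly translating the seed-fixing assumption into the block-diagonal factorization and keeping the transpose/orientation of $Q^\ast$ consistent with the conventions used elsewhere in the paper. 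The optimization itself is a one-line convexity argument, and the distinct-rows hypothesis plugs directly into the equality condition.
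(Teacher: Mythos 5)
Your proof is correct, and at its core it is the same argument as the paper's: both reduce the problem to showing that $\sum_a \langle v_{\sigma(a)}, v_a\rangle \le \sum_a \|v_a\|^2$ over permutations $\sigma$ of the rows $v_a$ of $P_{\wbmS,\mS}$, and both obtain uniqueness from the equality case together with the distinct-rows hypothesis. The difference is in the bookkeeping. The paper assumes without loss of generality that $\mS=\{1,\dots,s\}$ and $Q^\ast=I_n$ (so $\widetilde{B}=B$), computes the two expectations as explicit double sums, and cites the rearrangement inequality; you keep $Q^\ast$ general, factor it into blocks $I_s$ and $R^\ast=Q^\ast_{\wbmS,\wbmS}$, pass to the Gram matrix $W=P_{\wbmS,\mS}P^\top_{\wbmS,\mS}$ via the change of variables $M=QR^\ast$, and prove $\operatorname{Tr}(MW)\le\operatorname{Tr}(W)$ by the elementary bound $\langle x,y\rangle\le\tfrac12(\|x\|^2+\|y\|^2)$ — which is exactly the inequality the paper's rearrangement step supplies. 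Your route buys a self-contained inequality and an explicit treatment of general $Q^\ast$; the paper's WLOG buys brevity but silently sidesteps the orientation question you flagged. On that question, you should resolve it rather than leave it as an ``up to a transpose'' caveat, because it is not vacuous: under the convention stated in the proposition, $\widetilde{B}=Q^\ast BQ^{\ast\top}$, your own computation gives $\mathbbm{E}[\widetilde{B}_{\wbmS,\mS}]=R^\ast P_{\wbmS,\mS}$, so $M=QR^\ast=I_{n-s}$ corresponds to $Q=R^{\ast\top}=(Q^\ast_{\wbmS,\wbmS})^\top$, not to $Q^\ast_{\wbmS,\wbmS}$ as you assert; the stated conclusion is exact under the main text's convention $\widetilde{B}=Q^{\ast\top}BQ^\ast$ (Section~\ref{method}), for which $\mathbbm{E}[\widetilde{B}_{\wbmS,\mS}]=R^{\ast\top}P_{\wbmS,\mS}$ and $M=QR^{\ast\top}=I_{n-s}$ gives $Q=Q^\ast_{\wbmS,\wbmS}$. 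The two conventions are genuinely inconsistent within the paper itself, and the WLOG reduction $Q^\ast=I_n$ (under which the distinction disappears) is what allows the paper's proof to avoid confronting it; your explicit treatment surfaces it, which is a point in your argument's favor, but the final identification of the maximizer must then be stated under the correct convention.
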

According to the proposition, the correct permutation $Q^\ast$ is the solution to the expected loss function using $P$. However, in contrast with the solution in Equation~\eqref{eq: QuadOptim-Expectation}, the uniqueness of the solution here depends on the particular value of $\mS$ to make the rows of $P_{\wbmS, \mS}$ distinguishable. In addition, the size of $\mS$ affects the quality of the estimation error. Compared to the optimization problem in Equation~\eqref{eq: QuadOptim}, which uses all the $O(n^2 - s^2)$ edges available to perform the matching, the neighborhood method uses only a submatrix of size $O((n-s)s)$. This can result in some statistical efficiency loss, but with the advantage of gaining computational scalability as algorithms with computational complexity $O((n-s)^3)$ can obtain the exact solution of the problem \citep{vogelstein2015fast}. We compare these trade-offs in simulated data in Section~\ref{sim}.

\begin{algorithm}[ht]
\caption{Seeded Neighborhood-based Covariate Assisted Graph Matching (CovNeigh)}
\label{alg2}
\begin{algorithmic}
 \State \textbf{Input:} Adjacency matrices $A$, $\widetilde{B}$, edge covariates $Y$, node covariates $Z$, seed nodes $\mS$.
 \Procedure{\State 1. Extract the information from the seed nodes $\mathcal{S}$, denoted by $A_{\mathcal{S}, \mathcal{S}}, \widetilde{B}_{\mathcal{S}, \mathcal{S}}$, $Y_{\mathcal{S}, \mathcal{S}}, Z_{\mathcal{S}}$.
\State 2. Fit the regression model as in Eq.~\eqref{eq:glm-for-B} using the seed nodes to obtain $\widehat{\theta} = (\widehat{\theta}_0, \widehat{\theta}_1,  \widehat{\theta}_\text{edge}, \widehat{\theta}_\text{node})$.
 \State 3. Obtain estimate of $P$ on the entire graph, denoted by $\widehat{P}_{ij} = \mu(X_{ij}^\top \widehat{\theta})$ as in Eq.~\eqref{eq:estimatorP}.
 \State 4. Solve the linear assignment problem for $\widetilde{B}_{\Bar{\mathcal{S}}, \mathcal{S}}^\top \widehat{P}_{\Bar{\mathcal{S}}, \mathcal{S}}$ as in Eq.~\eqref{eq:lin_assign_prob}.
 \State \textbf{Output:} Permutation matrix $\hat{Q} \in \Pi_n$.}
\EndProcedure
\end{algorithmic}
\end{algorithm}

\section{Theory}\label{theory}

In this section, we study the theoretical properties of the estimator obtained from the optimization problem in Equation~\eqref{eq: QuadOptim} under the generalized linear model in Equation~\eqref{eq:glm-for-B}. Our proposed algorithm for graph matching relies on the estimator of $P$. Hence, in the following, we first investigate the estimation error 
based on $\hat{P}$ using only the seeds. This is followed by an analysis of the accuracy of the inferred permutation matrix obtained as a solution of the quadratic assignment problem. We provide conditions under which the estimated permutation matches all the vertices correctly with high probability and study the implications of this result in the Erd\H{o}s-R\'enyi model.

\subsection{Model estimation error}\label{linearmodel}

In order to solve our graph matching problem, it is essential to obtain an accurate estimate of $P$. Here, we study the error of the estimator $\widehat{P}$ defined in Equation~\eqref{eq:estimatorP}, by providing a probabilistic upper bound on the difference between the entries of $\widehat{P}$ and $P$. 
Recall that, the estimate $\hat{P}$ is constructed based on $\hat{\theta}$ which is in turn constructed using the information corresponding to the seeds $\mathcal{S}$. 
In order to obtain a bound on the estimation error for $P$, we consider some assumptions on the Hessian matrix $\nabla^2 L(\theta^\ast)\in\real^{d\times d}$, which is given by
\begin{equation*}
    \nabla^2 L(\theta^\ast)  = \sum_{i,j\in\mS, i>j}\mu'(X_{ij}^\top \theta^\ast) X_{ij}X_{ij}^\top.
\end{equation*}
We also introduce a quantity $\rho_s\in(0,1)$ to measure the variance of the subgraph defined by the edges within the seeded vertices, defined as
$$\rho_s = \frac{2}{s(s-1)} \sum_{i,j\in \mS, i>j} P_{ij}(1-P_{ij}).$$
\begin{assumption}\label{assump:eigenval-hessian} 
    The eigenvalues of the Hessian matrix $\nabla^2L(\theta^\ast)$ are positive and bounded away from zero. Specifically, there exists a positive constant $C$
    such that $\lambda_{\min}(\nabla^2 L(\theta^\ast)) \geq C L_s s^2$, where $\lambda_{\min}(\cdot)$ denotes the minimum eigenvalue for a given matrix, $s$ is the cardinality of $\mathcal{S}$ and $L_s$ satisfies $L_s \geq \rho_s$.
\end{assumption}

Essentially, Assumption~\ref{assump:eigenval-hessian} ensures that the collinearity between the network $A$ and its covariates, or between the covariates themselves, is not too high.
In addition, we assume that the magnitude of the covariate vectors $X_{ij}$ is bounded, which allows to obtain a uniform bound over all the entries of $\widehat{P}$.

 \begin{assumption}\label{assump:Bdd-norm-covariates}
    The Euclidean norm of each $X_{ij}\in\real^{d}$, $i>j$ 
    is bounded. That is, there exists some positive constant $C_d$, dependent on the number of covariates $d$, such that $\left\|X_{ij}\right\| \leq C_d$ for all $i,j\in[n], i>j$. 
\end{assumption}

Additionally, we impose certain conditions on the inverse link function of the generalized linear model. These conditions are sufficient to ensure that the loss function satisfies a local strong convexity property around $\theta^\ast$. The following assumption is mild, and it is satisfied by several models, including linear ($\mu(t) = t$) and logistic ($\mu(t) = e^{t}/(1+e^{t})$).

\begin{assumption}\label{assumption:linkfunction}
The inverse link function $\mu = g^{-1}$ is strictly convex ($\mu'(t)>0$ for all $t\in\real$), and has bounded first and second derivatives, that is, there exist $M_1, M_2\in\real$ such that 
    $|\mu'(t)| \leq M_1$ and $|\mu''(t)| \leq M_2$ for all $t$.
\end{assumption}

As the estimator of the regression coefficient vector $\widehat{\theta}$ is constructed using only the information contained on the edges between the seeds in $\mathcal{S}$, the estimation error of $\widehat{P}_{ij} = g^{-1}(X_{ij}^\top \widehat\theta)$ depends on the coefficient vector $\widehat{\theta}$ obtained using the seeded set $\mathcal{S}$. Naturally, we would like the estimation of $P_{ij}$ to be  accurate in order to solve the graph matching problem. 
The following theorem ensures that
error bound for the maximum absolute error $\left| \widehat{P}_{ij} - P_{ij}\right|$ over all edge pairs $i,j \in [n]$ is of the order $O_{\mathbbm{P}}({\sqrt{\log s}}/({L^{1/2}_ss}))$, which allows to control the estimation accuracy as a function of the number of seeds.

\begin{theorem}\label{thm:estimation-error-P}
    Given two graphs $A$, $\widetilde{B}$, the associated edge and node covariate information $Y, Z$, and a set of seed nodes $\mathcal{S}$, let $\widehat{P}$ be the estimator defined in Equation~\eqref{eq:estimatorP}. Under Assumptions~\ref{assump:eigenval-hessian}, \ref{assump:Bdd-norm-covariates} and \ref{assumption:linkfunction}, there exists constants $c_1$, $c_2$ and $c_3$ that depend on the constants in the assumptions, such that, if $s^2L_s \geq c_3 \log s$, then with probability at least $1 - c_2 s^{-1}$,
\begin{equation}\label{eq: thm1statement}
\max_{i, j\in [n]: j>i} \left| \widehat{P}_{ij} - P_{ij} \right| \leq  c_1  \frac{\sqrt{\log s}}{L_s^{1/2} s}.
\end{equation}
\end{theorem}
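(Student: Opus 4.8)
The plan is to reduce the uniform error $\max_{i>j}|\widehat P_{ij}-P_{ij}|$ to the Euclidean estimation error $\|\widehat\theta-\theta^\ast\|$, and then to bound the latter by the standard recipe for convex $M$-estimators: concentration of the score $\nabla L(\theta^\ast)$ together with local strong convexity of $L$ near $\theta^\ast$. For the reduction, the mean value theorem gives $\widehat P_{ij}-P_{ij}=\mu'(\xi_{ij})X_{ij}^\top(\widehat\theta-\theta^\ast)$ for some intermediate point $\xi_{ij}$, so by Assumption~\ref{assumption:linkfunction} ($|\mu'|\le M_1$), Cauchy--Schwarz, and Assumption~\ref{assump:Bdd-norm-covariates} ($\|X_{ij}\|\le C_d$),
\[
\max_{i>j}|\widehat P_{ij}-P_{ij}|\le M_1 C_d\,\|\widehat\theta-\theta^\ast\|.
\]
It thus suffices to show $\|\widehat\theta-\theta^\ast\|\le c\,\sqrt{\log s}/(L_s^{1/2}s)$ on an event of probability at least $1-c_2 s^{-1}$.

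To control the score, I would use that the seeds are correctly aligned, so $\widetilde B_{ij}=B_{ij}$ and $\mathbbm{E}[\widetilde B_{ij}\mid A,Y,Z]=P_{ij}$ for $i,j\in\mS$, whence
\[
\nabla L(\theta^\ast)=\sum_{i,j\in\mS,\,i>j}\big(P_{ij}-\widetilde B_{ij}\big)X_{ij}
\]
is a sum of conditionally independent, mean-zero vectors of norm at most $C_d$, with variance proxy $\sum_{i>j}P_{ij}(1-P_{ij})\|X_{ij}\|^2\le \tfrac{1}{2} C_d^2 s^2\rho_s\le \tfrac{1}{2} C_d^2 s^2 L_s$. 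A vector Bernstein inequality (with the dimension $d$ fixed) then yields $\|\nabla L(\theta^\ast)\|\le c_0\, s\sqrt{L_s\log s}$ with probability at least $1-c_2 s^{-1}$; the hypothesis $s^2 L_s\ge c_3\log s$ is precisely what makes the variance term dominate the range term in this bound.

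For the lower curvature, Assumption~\ref{assump:eigenval-hessian} gives $\lambda_{\min}(\nabla^2 L(\theta^\ast))\ge C L_s s^2$, while $|\mu''|\le M_2$ and $\|X_{ij}\|\le C_d$ yield the perturbation bound $\big\|\nabla^2 L(\theta)-\nabla^2 L(\theta^\ast)\big\|\le M_2 C_d^3\,\tfrac{s^2}{2}\,\|\theta-\theta^\ast\|$, so $\nabla^2 L(\theta)\succeq\tfrac{C}{2}L_s s^2 I_d$ on the ball of radius $r_0=CL_s/(M_2 C_d^3)$. A convexity-localization argument—$L$ strictly increases on the sphere of radius $r=4\|\nabla L(\theta^\ast)\|/(CL_s s^2)$ once $r\le r_0$, forcing the minimizer $\widehat\theta$ inside—then gives $\|\widehat\theta-\theta^\ast\|\le 2\|\nabla L(\theta^\ast)\|/(CL_s s^2)\le (2c_0/C)\,\sqrt{\log s}/(L_s^{1/2}s)$. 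Combined with the reduction step, this proves the theorem with $c_1=2c_0 M_1 C_d/C$.

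I expect the curvature/localization step to be the main obstacle. The subtlety is that the radius $r_0$ on which the Hessian is controlled scales like $L_s$, whereas the target rate scales like $\sqrt{\log s}/(L_s^{1/2}s)$; closing the loop thus requires $\sqrt{\log s}/(L_s^{1/2}s)\lesssim L_s$, a genuine interplay between the seed count $s$ and the seed-variance scale $L_s$ that is more delicate than the gradient-domination condition by itself. Calibrating the vector Bernstein constants so that the variance-dominated regime activates exactly under $s^2 L_s\ge c_3\log s$ is the other point requiring care; the remaining manipulations amount to bookkeeping of the constants from Assumptions~\ref{assump:eigenval-hessian}--\ref{assumption:linkfunction}.
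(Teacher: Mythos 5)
Your overall skeleton---reduce the uniform error to $\|\hat{\theta}-\theta^\ast\|$ via the mean value theorem, concentrate the score $\nabla L(\theta^\ast)$ with a Bernstein inequality, and localize the minimizer through local strong convexity---is the same as the paper's, and your reduction step and score-concentration step match the paper's Lemma~\ref{UpperBddP_probabilistic_lemma} essentially verbatim. The gap is in the curvature step, and it is genuine. You control the Hessian \emph{additively} in operator norm, $\|\nabla^2 L(\theta)-\nabla^2 L(\theta^\ast)\| \le M_2C_d^3\frac{s^2}{2}\|\theta-\theta^\ast\|$, so your strong-convexity ball has radius $r_0 \asymp L_s$: the perturbation scales like $s^2\|\theta-\theta^\ast\|$ with no factor of $L_s$, and must be dominated by $\lambda_{\min}(\nabla^2 L(\theta^\ast)) \ge CL_s s^2$. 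Closing your localization loop therefore needs $\sqrt{\log s}/(L_s^{1/2}s) \lesssim L_s$, i.e.\ $s^2 L_s^3 \gtrsim \log s$, which you flag as "delicate" but do not resolve. This condition is \emph{not} implied by the theorem's hypothesis $s^2L_s \ge c_3\log s$; it is strictly stronger whenever $L_s \to 0$, which is precisely the regime the paper cares about. Concretely, in the sparse Erd\H{o}s--R\'enyi setting of Proposition~\ref{prop: eigenval-hessian} with $L_s \asymp \log n/n$ and $s \asymp \sqrt{n}$, the stated hypothesis holds, yet $s^2L_s^3 \asymp (\log n)^3/n^2 \ll \log s$, and meeting your condition would require $s \gtrsim n^{3/2}/\log n > n$ seeds, which is impossible. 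The obstruction is structural to the additive argument: localizing at radius $r_0$ instead of at the target rate leads to the same requirement $s^2L_s^3 \gtrsim \log s$, because the gradient bound $\|\nabla L(\theta^\ast)\| \lesssim s\sqrt{L_s \log s}$ must be compared against $L_s s^2 r_0 \asymp L_s^2 s^2$.

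The fix---and the place where the paper's proof genuinely differs from yours---is its Lemma~\ref{lemma:glm-restricted-strong-convexity}, which compares the Hessians \emph{multiplicatively, term by term}, rather than additively in operator norm. Since $\mu'$ is positive and continuous and $|X_{ij}^\top\theta^\ast| \le C_d\|\theta^\ast\|$ ranges over a fixed compact set, one has $\mu'(X_{ij}^\top\theta^\ast) \ge \mu'_{\min} > 0$ for a constant $\mu'_{\min}$; the Lipschitz bound $|\mu''|\le M_2$ then gives $\mu'(X_{ij}^\top\widetilde{\theta}) \ge c\,\mu'(X_{ij}^\top\theta^\ast)$ simultaneously for all pairs $(i,j)$ on a ball of \emph{constant} radius $\kappa = (1-c)\mu'_{\min}/(M_2C_d)$, whence $\Delta^\top\nabla^2 L(\widetilde{\theta})\Delta \ge c\,\Delta^\top\nabla^2 L(\theta^\ast)\Delta \ge c\,C L_s s^2\|\Delta\|^2$ on that ball. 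The point is that whatever collinearity makes $\lambda_{\min}$ of order $L_s s^2$ rather than $s^2$ is inherited multiplicatively, instead of being compared against an $s^2$-sized additive perturbation. With a constant radius, the localization step (the paper's Lemma~\ref{lemma:thetahat-close-to-theta}) only requires $\|\nabla L(\theta^\ast)\| \lesssim L_s s^2 \kappa$, which fails with probability at most $(d+1)\exp(-C'\lambda_{\min}(\nabla^2 L(\theta^\ast)))$, and the exponent is of order $s^2L_s \ge c_3 \log s$---exactly the stated hypothesis. Once the minimizer is localized in this constant-radius ball, the rest of your argument (the gradient bound, $\|\hat{\theta}-\theta^\ast\| \le 2\|\nabla L(\theta^\ast)\|/(c\,CL_s s^2)$, and the $M_1C_d$ reduction) goes through unchanged and yields the claimed rate.
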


\subsection{Exact graph matching recovery}

In order to solve the graph matching problem, our goal is to find the true permutation matrix $Q^\ast$ that aligns the vertices across the graphs, according to Equation~\eqref{eq: QuadOptim-Expectation}. Here, we show that the solution of the problem in Equation~\eqref{eq: QuadOptim} ensures that, when the signal is strong enough, $\widehat{Q} = Q^\ast$ with high probability, implying that all the vertices are matched correctly. This is equivalent to saying that $\|\hat{P} - Q^\ast \widetilde{B} {Q^\ast}^\top\|_F$ is smaller than $\| \hat{P} -  Q^\prime \widetilde{B} Q^{\prime^\top}\|_F$ for any $Q^\prime \neq Q^*$ with $Q'_{ii}=1$ for $i\in\mS$. Thus, in order to quantify the signal, we define some quantities related to the population version of the loss function. 

For a given $k\geq 2$, let $\Pi_{n,k}$ be the set of permutation matrices that permute exactly $k$ rows/columns. Given $Q\in\Pi_{n,k}$,  we define $M_P(Q) = \| P - QPQ^\top \|_F^2$ to reflect how different $P$ is from any permuted version of it. If $\mathcal{K} = \{i\in[n]: Q_{ii}=0\}$ is the set of node labels permuted by $Q$, denote by
    $\mathcal{Q}\subset [n]\times[n]$ to the set of edge pairs $(i,j)$ permuted by $Q$, that is,
\begin{equation}\label{eq:set-Q}
\mathcal{Q} = \{(i,j)\in[n]\times[n]: i\in\mathcal{K}\text{ or }j \in \mathcal{K}, i>j\}.
\end{equation}
Observe that $|\mathcal{Q}| = O(nk)$. The following two assumptions are introduced to ensure enough signal in the entries of $P$ to achieve correct graph matching.
\begin{assumption}\label{assump:Sparsity} For every $k\geq 2$ and $Q\in\Pi_{n,k}$, with $Q_{ii}=1$ for $i\in\mS$,
     $$ \sum_{(i,j) \in \mathcal{Q}} P_{ij} \geq c_1^* k \operatorname{log}n,$$ 
     where $c_1^\ast>0$ is a constant.
\end{assumption}
% c^* = 16/3
Assumption~\ref{assump:Sparsity} ensures that the graph $B$ is sufficiently dense. In particular,  Assumption~\ref{assump:Sparsity} holds when most of the entries of $P$ satisfy $P_{ij} \geq c\frac{\log n}{n}$ for some constant $c$, which is a mild assumption.

\begin{assumption}\label{assump:Diff-P-permutedP} 
For every $Q\in\Pi_{n,k}$, $k\geq 2 $, with $Q_{ii}=1$ for $i\in\mS$,
\begin{equation}\label{eq:assumption-mpq}
    M_P(Q) = \| P - QPQ^\top \|_F^2 \geq c_2^\ast \max \left\{  k \log n, \frac{\sqrt{\log s}  }{L_s^{1/2} s}\sum_{(i, j) \in \mathcal{Q}} P_{ij} \right\},
\end{equation}
where $c_2^\ast>0$ is a constant.
\end{assumption}

 Assumption~\ref{assump:Diff-P-permutedP} specifies a lower bound for the dissimilarity  between $P$ and a permuted version $QPQ^\top$ to ensure that the loss function is able to distinguish $Q$ from $Q^\ast$. The first term on the right-hand side of Equation~\eqref{eq:assumption-mpq} depends only on the model matrix $P$. Similar assumptions have been used in the graph matching setting with no covariates, and are shown to hold under different random graph models for $A$ \citep{arroyo2021maximum}. The second term depends on the model estimation error from Theorem~\ref{eq: thm1statement}, 
 This term also gives an idea about the number of seeds required to achieve accurate graph matching according to the  signal strength involving $P$ on the left-hand side of \eqref{eq:assumption-mpq}.
 In the next section, we verify this assumption under the popular  Erd\H{o}s-R\'enyi random graph model. 

The following theorem provides a consistency guarantee for the solution of the problem in Equation~\eqref{eq: QuadOptim}. 

\begin{theorem}\label{thm:consistency}
Given two graphs $A, \widetilde{B}$ of size $n$, the associated edge and node covariate information Y, Z, and a set of seed nodes $\mathcal{S}$, let $\widehat{Q}$ be the estimator obtained from the optimization problem in \eqref{eq: QuadOptim}.  Under the assumptions of Theorem~\ref{thm:consistency}, in addition to  Assumptions~\ref{assump:Sparsity} and \ref{assump:Diff-P-permutedP}, we have that
\begin{equation}
    \mathbbm{P}(\widehat{Q} = Q^*|A,Y,Z) \geq  1 - c \left(\frac{1}{n^2} + \frac{1}{s}\right),
\end{equation}
where $c>0$ is a constant that depends on the constants in the assumptions.
\end{theorem}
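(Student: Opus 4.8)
The plan is to show that, with high probability, $Q^\ast$ is the \emph{unique} minimizer of the objective $Q\mapsto\|\widehat{P}-Q\widetilde{B}Q^\top\|_F^2$ over seed-respecting permutations, by proving that the objective gap relative to $Q^\ast$ is strictly positive for every competitor. First I would reparametrize: for any $Q$ with $Q_{\mathcal{S},\mathcal{S}}=I_s$, set $R=QQ^{\ast\top}\in\Pi_{n,k}$ (which also fixes the seeds, so $k\ge2$), and use $Q^\ast\widetilde{B}Q^{\ast\top}=B$ and $Q\widetilde{B}Q^\top=RBR^\top$. Expanding the squared Frobenius norms and using that permutations preserve $\|\cdot\|_F$, the gap collapses to
\[
D(Q):=\|\widehat{P}-RBR^\top\|_F^2-\|\widehat{P}-B\|_F^2=2\langle\widehat{P},\,B-RBR^\top\rangle.
\]
Substituting $\widehat{P}=P+(\widehat{P}-P)$ and $B=P+W$ with $W=B-P$ the conditionally centered noise (given $A,Y,Z$), and using $\langle P,\,P-RPR^\top\rangle=\tfrac12\|P-RPR^\top\|_F^2=\tfrac12 M_P(R)$, this decomposes as
\[
D(Q)=M_P(R)+\underbrace{2\langle P-R^\top PR,\,W\rangle}_{T_1(R)}+\underbrace{2\langle\widehat{P}-P,\,B-RBR^\top\rangle}_{T_2(R)}.
\]
The signal term $M_P(R)$ is precisely the quantity lower-bounded by Assumption~\ref{assump:Diff-P-permutedP}, so the goal reduces to dominating $|T_1(R)|+|T_2(R)|$ by, say, $\tfrac12 M_P(R)$ uniformly over all $R\neq I$.

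For $T_1(R)$, which is a sum over edge pairs in $\mathcal{Q}$ of independent, bounded, mean-zero terms, I would apply Bernstein's inequality. The coefficients $(P-R^\top PR)_{ij}$ lie in $[-1,1]$ and $\sum_{i,j}(P-R^\top PR)_{ij}^2=M_P(R)$, so the variance proxy is at most a constant multiple of $M_P(R)$; choosing deviation level $\tfrac14 M_P(R)$ yields a tail of order $\exp(-c\,M_P(R))$. Since Assumption~\ref{assump:Diff-P-permutedP} forces $M_P(R)\ge c_2^\ast k\log n$, this is $\exp(-c'k\log n)$, which for $c_2^\ast$ large enough dominates the count $\binom{n}{k}k!\le n^k$ of permutations in $\Pi_{n,k}$; summing the union bound over $k\ge2$ leaves failure probability $O(n^{-2})$. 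For $T_2(R)$, on the event of Theorem~\ref{thm:estimation-error-P} we have $\|\widehat{P}-P\|_\infty\le\varepsilon_s:=c_1\sqrt{\log s}/(L_s^{1/2}s)$, whence $|T_2(R)|\le2\varepsilon_s\|B-RBR^\top\|_1$. A key structural observation is that, since $R$ bijects the moved set $\mathcal{K}\subseteq\overline{\mathcal{S}}$ onto itself, $\mathcal{Q}$ is $R$-invariant, so both $\sum_{\mathcal{Q}}B_{ij}$ and $\sum_{\mathcal{Q}}(RBR^\top)_{ij}$ have conditional mean $\mu_{\mathcal{Q}}:=\sum_{\mathcal{Q}}P_{ij}$; a multiplicative Chernoff bound, valid because Assumption~\ref{assump:Sparsity} gives $\mu_{\mathcal{Q}}\ge c_1^\ast k\log n$, shows $\|B-RBR^\top\|_1=O(\mu_{\mathcal{Q}})$ with probability $1-\exp(-c''k\log n)$. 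Hence $|T_2(R)|=O(\varepsilon_s\mu_{\mathcal{Q}})$, and the second branch of Assumption~\ref{assump:Diff-P-permutedP}, namely $M_P(R)\ge c_2^\ast\varepsilon_s\mu_{\mathcal{Q}}/c_1$, makes this at most $\tfrac14 M_P(R)$ once $c_2^\ast$ is taken sufficiently large relative to $c_1$. I would also note that $\widehat{P}$ depends only on the seed--seed submatrix $\widetilde{B}_{\mathcal{S},\mathcal{S}}$, whose entries are disjoint from (hence independent of) the edges in $\mathcal{Q}$, so the estimation event and the edge-count concentration can be handled separately.

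Combining, on the intersection of the three high-probability events one gets $D(Q)\ge\tfrac12 M_P(R)>0$ simultaneously for all $Q\neq Q^\ast$, forcing $\widehat{Q}=Q^\ast$. Collecting failure probabilities---$O(n^{-2})$ from the union bounds over $T_1$ and $T_2$ (the sums over $k$ being geometric) and $O(s^{-1})$ from Theorem~\ref{thm:estimation-error-P}---gives the stated bound $1-c(n^{-2}+s^{-1})$. (I read the hypothesis ``under the assumptions of Theorem~\ref{thm:consistency}'' as a typo for the assumptions of Theorem~\ref{thm:estimation-error-P}.) The main obstacle is the \emph{uniformity}: the fluctuation terms must be controlled simultaneously over the super-exponentially many competing permutations, which is possible only because the signal lower bounds in Assumptions~\ref{assump:Sparsity}--\ref{assump:Diff-P-permutedP} scale like $k\log n$ and thus yield per-permutation tail probabilities small enough to survive the $n^k$ union bound; calibrating the constants $c_1^\ast,c_2^\ast$ so that both $|T_1(R)|$ and $|T_2(R)|$ stay below $\tfrac14 M_P(R)$ with room to spare is the delicate bookkeeping.
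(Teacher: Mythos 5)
Your proposal is correct and follows essentially the same route as the paper's proof: your decomposition $D(Q)=M_P(R)+T_1(R)+T_2(R)$ is exactly the paper's identity $\hat{M}_{\hat{P}}(Q)=M_P(Q)-\Delta_1+\Delta_2$ (Lemma~\ref{HatM_Phat_and_P_lemma}) with $T_1=-\Delta_1$ and $T_2=\Delta_2$, your Bernstein bound on $T_1$ matches Lemma~\ref{ProbtermFirst_lemma}, your sup-norm--times--edge-count control of $T_2$ (via Theorem~\ref{thm:estimation-error-P}, the $R$-invariance of $\mathcal{Q}$, and concentration of $\sum_{(i,j)\in\mathcal{Q}}B_{ij}$ under Assumptions~\ref{assump:Sparsity} and \ref{assump:Diff-P-permutedP}) matches Lemmas~\ref{ProbAsSumProbs_lemma} and \ref{ProbtermSecond_lemma}, and the final union bound over $\Pi_{n,k}$ with the $n^k$ count is identical. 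Your reading of ``the assumptions of Theorem~\ref{thm:consistency}'' as a typo for the assumptions of Theorem~\ref{thm:estimation-error-P} is also consistent with how the paper actually invokes them.
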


Theorem~\ref{thm:consistency} presents the conditions required for the exact estimation of the  permutation matrix $Q^\ast$. This is equivalent to saying that the number of incorrectly matched vertices is zero with high probability, provided that the signal as quantified in Assumptions~\ref{assump:Sparsity} and \ref{assump:Diff-P-permutedP} is strong enough. 

In order for the method to consistently recover the correct permutation, the number of seeds $s$ needs to grow with $n$. The exact rate at which $s$ needs to grow depends on the condition in Assumption~\ref{assump:Diff-P-permutedP}, which depends on the estimation error. In the sparse regime where most of the entries of $P$ satisfy $P_{ij}\geq c \log n/n$, a rate of $s\asymp \sqrt{n}$ reduces the right-hand side of Equation~\eqref{eq:assumption-mpq} to $c^\ast_2 k \log n$. We study this result in detail under a specific model for $P$ next.

\subsection{Erd\H{o}s-R\'enyi model with covariates}\label{ERmodel}

The theoretical guarantees for model estimation error and exact recovery rely on several assumptions. Validation of such assumptions, in particular, Assumption~\ref{assump:Diff-P-permutedP}, is non-trivial, with some existing results available under some random graph models with no covariates \citep{lyzinski2015graph,arroyo2021graph}. In this section, we consider the specific case of the Erd\H{o}s-R\'enyi random graph model for $A$. Recall that for a graph with adjacency matrix $A$, we say that $A$ is distributed according to the ER model, denoted by $A\sim\text{ER}(p)$ with $p\in(0,1)$, if $\mathbbm{P}(A_{ij}=1) = p$ independently for all $i>j$.

We consider the ER model with probability $p_n$ for the first graph $A\sim \text{ER}(p_n)$. In addition, we model a single edge covariate $Y\in\real^{n\times n}$ as a random binary variable independent of $A$ with probability $q_n$, in which case $Y\sim \text{ER}(q_n)$. For simplicity, we consider a linear model with link function $g(t) = t$. Following the model in Equation~\ref{eq:glm-for-B}, the distribution of the second graph satisfies
\begin{align*}
    P_{ij} = \theta_0^\ast + \theta_1^\ast A_{ij} + \theta_2^\ast Y_{ij} \hspace{5pt} \text{with} \hspace{5pt} B_{ij}|A,Y \sim \text{Ber}(P_{ij}), 
\end{align*}
for some coefficients $\theta_0^\ast, \theta_1^\ast, \theta_2^\ast$.

\begin{proposition}\label{prop: eigenval-hessian}
    For graphs of size $n$, suppose the graph $A \sim \text{ER}(p_n)$ and the edge covariate $Y \sim \text{ER}(q_n)$.
    Then under the conditions (i) $\sqrt{p_n} + \sqrt{q_n} \leq 1 - {\nu}$ for some ${\nu} >0$, (ii) $\max \{p_n, q_n\} = o\left( s^2 \min\{p^2_n, q^2_n\} \right),$ and (iii) $\theta_0^\ast + \theta_1^\ast p_n + \theta_2^\ast q_n = O(\min\{p_n, q_n\}),$
Assumption~\ref{assump:eigenval-hessian} holds with high probability as $n$ increases. In particular, for some constant $c>0$, we have
\begin{align*}
      \mathbbm{P}\left(\lambda_{\min}\left(\nabla^2 L(\theta^*) \right) \geq c s^2 \min\{p_n, q_n\}\geq s^2 \rho_s \right) = 1 - o(1).
\end{align*}
\end{proposition}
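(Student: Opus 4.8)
The plan is to exploit that for the linear link $g(t)=t$ the mean function is $\mu(t)=t$ with $\mu'(t)\equiv 1$, so the Hessian collapses to the unweighted second-moment matrix of the predictors,
\[
\nabla^2 L(\theta^\ast)=\sum_{i,j\in\mS,\,i>j}X_{ij}X_{ij}^\top,\qquad X_{ij}=(1,A_{ij},Y_{ij})^\top,
\]
which, crucially, does not depend on $\theta^\ast$. Under the ER model the seed-pair entries are i.i.d.\ ($A_{ij}\sim\mathrm{Ber}(p_n)$, $Y_{ij}\sim\mathrm{Ber}(q_n)$, independent of each other), so this is a sum of $N=\binom{s}{2}$ independent rank-one terms with expectation $\mathbbm{E}[\nabla^2 L(\theta^\ast)]=N M$, where, using $A_{ij}^2=A_{ij}$, $Y_{ij}^2=Y_{ij}$ and independence,
\[
M=\begin{pmatrix}1 & p_n & q_n\\ p_n & p_n & p_nq_n\\ q_n & p_nq_n & q_n\end{pmatrix}.
\]
The argument then splits into a deterministic lower bound on $\lambda_{\min}(M)$, a concentration step for $\nabla^2 L(\theta^\ast)$ around $NM$, and a separate control of $\rho_s$.

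For the population matrix I would compute $\det M=p_nq_n(1-p_n)(1-q_n)$ and the sum of its $2\times 2$ principal minors, $e_2(M)=p_n(1-p_n)+q_n(1-q_n)+p_nq_n(1-p_nq_n)$. Since $M$ is positive definite (its leading minors are positive), its eigenvalues $\lambda_1\ge\lambda_2\ge\lambda_3>0$ obey $e_2(M)=\lambda_1\lambda_2+\lambda_1\lambda_3+\lambda_2\lambda_3\ge\lambda_1\lambda_2$, so that
\[
\lambda_{\min}(M)=\frac{\det M}{\lambda_1\lambda_2}\ge\frac{\det M}{e_2(M)}=\frac{p_nq_n(1-p_n)(1-q_n)}{p_n(1-p_n)+q_n(1-q_n)+p_nq_n(1-p_nq_n)}.
\]
Condition~(i) keeps $p_n,q_n$ bounded away from $1$ (indeed $1-p_n,1-q_n\ge\nu$), so the numerator is $\gtrsim_\nu p_nq_n$ while the denominator is at most $p_n+q_n+p_nq_n\lesssim\max\{p_n,q_n\}$; dividing yields $\lambda_{\min}(M)\ge c_\nu\min\{p_n,q_n\}$. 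I expect this to be the main obstacle: the crude estimate $\det M/(\operatorname{Tr} M)^2$ only gives order $p_nq_n$, which is too small by a factor $\max\{p_n,q_n\}$, and the correct scaling requires recognising that the two leading eigenvalues have product of order $\max\{p_n,q_n\}$, hence one must divide by $e_2(M)$ (equivalently, bound $\lambda_1\lambda_2$) rather than by the squared trace.

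Concentration is then routine. The only random entries of $\nabla^2 L(\theta^\ast)$ are the binomial sums $\sum A_{ij}$, $\sum Y_{ij}$ and $\sum A_{ij}Y_{ij}$ over seed pairs, each a sum of $N$ independent $\{0,1\}$ variables with variance $O(N\max\{p_n,q_n\})$. Because the matrix has fixed dimension $3$, matrix Bernstein (or entrywise Bernstein with a union bound over the few entries) gives $\|\nabla^2 L(\theta^\ast)-NM\|\lesssim\sqrt{N\max\{p_n,q_n\}}$, up to logarithmic factors that are benign since $d=3$ is constant, with probability $1-o(1)$. Condition~(ii), $\max\{p_n,q_n\}=o(s^2\min\{p_n,q_n\}^2)$, is exactly what makes this deviation negligible compared with the signal $N\lambda_{\min}(M)\asymp s^2\min\{p_n,q_n\}$, so Weyl's inequality gives $\lambda_{\min}(\nabla^2 L(\theta^\ast))\ge\tfrac12 N\lambda_{\min}(M)\ge c\,s^2\min\{p_n,q_n\}$ with high probability.

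It remains to verify the second inequality $c\,s^2\min\{p_n,q_n\}\ge s^2\rho_s$. Writing $\rho_s=\tfrac1N\sum_{i,j\in\mS,\,i>j}P_{ij}(1-P_{ij})\le\tfrac1N\sum_{i,j\in\mS,\,i>j}P_{ij}=\theta_0^\ast+\theta_1^\ast\tfrac1N\sum A_{ij}+\theta_2^\ast\tfrac1N\sum Y_{ij}$, the same binomial concentration shows $\rho_s$ is close to $\theta_0^\ast+\theta_1^\ast p_n+\theta_2^\ast q_n=O(\min\{p_n,q_n\})$ by condition~(iii), whence $\rho_s\le c'\min\{p_n,q_n\}$ with high probability. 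Taking $L_s\asymp\min\{p_n,q_n\}$ then makes both $L_s\ge\rho_s$ and $\lambda_{\min}(\nabla^2 L(\theta^\ast))\ge C L_s s^2$ hold, which is precisely Assumption~\ref{assump:eigenval-hessian}; intersecting the $O(1)$ many high-probability events preserves total probability $1-o(1)$.
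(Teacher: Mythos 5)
Your proof is correct, and it shares the paper's overall skeleton: decompose $\nabla^2 L(\theta^\ast)$ into its expectation plus a noise matrix $E$, lower-bound the smallest eigenvalue of the expectation, control $\|E\|$ via Bernstein concentration of the three binomial sums $W_1,W_2,W_3$ (made possible by condition (ii)), combine with Weyl's inequality, and handle $\rho_s$ by a separate concentration step together with condition (iii). The genuine difference is in the deterministic step, which you correctly identify as the crux. The paper writes $\mathbbm{E}[\nabla^2 L(\theta^\ast)]=D^{1/2}\Gamma D^{1/2}$ with $D$ its diagonal and applies Gershgorin's theorem to the rescaled matrix $\Gamma$ (off-diagonal entries $\sqrt{p_n},\sqrt{q_n},\sqrt{p_nq_n}$), so condition (i) yields $\lambda_{\min}(\Gamma)\geq 1-\sqrt{p_n}-\sqrt{q_n}\geq\nu$ and hence $\lambda_{\min}\geq\nu\min\{p_n,q_n\}\,s(s-1)/2$. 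You instead use $\lambda_{\min}(M)\geq\det M/e_2(M)$ with $\det M=p_nq_n(1-p_n)(1-q_n)$ and $e_2(M)\asymp\max\{p_n,q_n\}$, which gives the same order $\min\{p_n,q_n\}$. Your route is more elementary and, notably, uses condition (i) only through $1-p_n,1-q_n\geq\nu$, i.e., it needs only that $p_n,q_n$ stay bounded away from $1$; the Gershgorin argument genuinely needs $\sqrt{p_n}+\sqrt{q_n}\leq 1-\nu$ (e.g., for $p_n=q_n=0.3$ the Gershgorin discs contain $0$ and the paper's bound is vacuous, while your determinant bound still works). The trade-off is that the diagonal-rescaling-plus-Gershgorin argument transfers mechanically to more covariates, whereas with $d$ predictors your argument requires controlling $e_{d-1}$ of a larger matrix, which is less automatic. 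Your treatment of $\rho_s$ (bounding $\rho_s\leq\frac{2}{s(s-1)}\sum_{i>j}P_{ij}$ and concentrating the binomial sums, instead of the paper's direct Bernstein bound on $\sum P_{ij}(1-P_{ij})$ in Lemma~\ref{lemma:rho-s}) is a minor, equally valid variant. One cosmetic caveat, which your write-up shares with the paper itself: the displayed conclusion asks for a single constant $c$ satisfying both $\lambda_{\min}\geq cs^2\min\{p_n,q_n\}$ and $c\min\{p_n,q_n\}\geq\rho_s$, while both arguments naturally produce two different constants (one tied to $\nu$, one to the $O(\cdot)$ in condition (iii)); what is actually verified, in both cases, is Assumption~\ref{assump:eigenval-hessian}, where the two constants are decoupled.
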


Proposition~\ref{prop: eigenval-hessian} illustrates that the eigenvalues of $\nabla^2 L(\theta^*)$ are positive and bound away from zero with high probability under some conditions of the Erd\H{o}s-R\'enyi parameters and the size of the seed set $s$. This proposition ensures the local strong convexity of the Hessian matrix. Condition (i) is introduced for technical reasons, and it is mild, as it only excludes some dense graph regimes. Conditions (ii) and (iii) are introduced to ensure that the magnitudes of the parameters in the regression model are relativey similar. In particular, condition (ii) implies that $s^2\min\{p_n, q_n\}\to\infty$. Next we proceed to verify  Assumption~\ref{assump:Diff-P-permutedP}.

\begin{proposition}\label{prop:Bd-Diff-P-permutedP}
    Define $\tau_n = \min\{p_n, q_n\}$, and suppose that the following conditions hold:
     $$(i)\ \theta_0^\ast + \theta_1^\ast p_n + \theta_2^\ast q_n\geq \alpha_1\frac{\log n}{n}, \quad\quad \quad (ii) \ \frac{\tau_n^2}{p_n + q_n} \geq \alpha_2 \frac{\log n}{n}$$
     $$(iii )\ {{\theta^*_1}^2} + {{\theta^*_2}^2} \geq \alpha_3 \max\left\{ \frac{\log n}{\tau_n n}, \frac{\sqrt{\log s}(\theta_0^\ast + \theta_1^\ast p_n + \theta_2^\ast q_n)}{\tau_n^{3/2} s} \right\},$$
     where $\alpha_1, \alpha_2, \alpha_3$ are some positive constants. Then,     Assumption~\ref{assump:Diff-P-permutedP} holds with high probability.  That is, with probability at least $1-o(1)$, we have that for all $Q\in\Pi_n$, $Q_{ii} = 1, i\in\mS$,
$$ M_P(Q) \geq c_2^\ast \max \left\{ k \log n, \frac{ \sqrt{\log s} }{\tau_n^{1/2} s} \sum_{(i, j) \in \mathcal{Q}} P_{ij} \right\}.$$
       
\end{proposition}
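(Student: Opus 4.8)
The plan is to exploit the linearity of the model, which turns $M_P(Q)$ into an explicit quadratic form in the independent Bernoulli entries of $A$ and $Y$, and then to establish \eqref{eq:assumption-mpq} by combining an expectation computation with a uniform concentration argument over $\Pi_{n,k}$. Throughout I use $\tau_n=\min\{p_n,q_n\}$ and $\bar p = \theta_0^\ast+\theta_1^\ast p_n+\theta_2^\ast q_n$, and I read the statement with $L_s\asymp\tau_n$, as is consistent for this model via Proposition~\ref{prop: eigenval-hessian} (where $\rho_s\leq\bar p$).

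First I would reduce $M_P(Q)$ to a sum over the permuted edge pairs. Writing $\pi$ for the permutation underlying $Q$, we have $(QPQ^\top)_{ij}=P_{\pi(i)\pi(j)}$, and because the link is the identity the intercept cancels, giving $P_{ij}-P_{\pi(i)\pi(j)}=\theta_1^\ast(A_{ij}-A_{\pi(i)\pi(j)})+\theta_2^\ast(Y_{ij}-Y_{\pi(i)\pi(j)})=:\Delta_{ij}$. This difference vanishes on pairs fixed by $\pi$ (those with $\{\pi(i),\pi(j)\}=\{i,j\}$) and on pairs with both endpoints outside $\mathcal{K}$, so $M_P(Q)=2\sum_{(i,j)\in\mathcal{Q}'}\Delta_{ij}^2$, where $\mathcal{Q}'\subseteq\mathcal{Q}$ is the set of genuinely moved pairs. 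Using $A\perp Y$ and that $A_{ij},A_{\pi(i)\pi(j)}$ are independent $\mathrm{Ber}(p_n)$ on moved pairs (analogously for $Y$), the cross term drops and $\mathbbm{E}[\Delta_{ij}^2]=\sigma^2:=2(\theta_1^\ast)^2 p_n(1-p_n)+2(\theta_2^\ast)^2 q_n(1-q_n)$; bounding $1-p_n,1-q_n$ away from zero via the standing condition $\sqrt{p_n}+\sqrt{q_n}\leq 1-\nu$ yields $\sigma^2\asymp\tau_n((\theta_1^\ast)^2+(\theta_2^\ast)^2)$.

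Next I would control $\mathbbm{E}[M_P(Q)]=2|\mathcal{Q}'|\sigma^2$ via a counting bound $|\mathcal{Q}'|\gtrsim kn$: for each moved vertex $i\in\mathcal{K}$, at most one incident edge (the one to $\pi(i)$, and only when $i$ sits in a transposition) is fixed, so each moved vertex contributes at least $n-2$ moved incident pairs; summing and dividing by $2$ gives $|\mathcal{Q}'|\geq k(n-2)/2$. Hence $\mathbbm{E}[M_P(Q)]\gtrsim kn\,\tau_n((\theta_1^\ast)^2+(\theta_2^\ast)^2)$, and I would match this against the two branches of the maximum in \eqref{eq:assumption-mpq}: the first branch of condition (iii), $(\theta_1^\ast)^2+(\theta_2^\ast)^2\gtrsim\log n/(\tau_n n)$, gives $\mathbbm{E}[M_P(Q)]\gtrsim k\log n$, while the second branch, $(\theta_1^\ast)^2+(\theta_2^\ast)^2\gtrsim\sqrt{\log s}\,\bar p/(\tau_n^{3/2}s)$, together with $\mathbbm{E}[\sum_{(i,j)\in\mathcal{Q}}P_{ij}]=|\mathcal{Q}|\bar p\leq kn\bar p$, gives $\mathbbm{E}[M_P(Q)]\gtrsim\tfrac{\sqrt{\log s}}{\tau_n^{1/2}s}\sum_{\mathcal{Q}}P_{ij}$ in expectation.

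The \emph{main obstacle} is the uniform lower deviation of $M_P(Q)$ below its mean, simultaneously over all $k\geq 2$ and all $Q\in\Pi_{n,k}$. The terms $\Delta_{ij}^2$ are dependent, but the dependence is local: each entry $A_{uv}$ (resp. $Y_{uv}$) appears in at most two summands, namely those indexed by the pair $\{u,v\}$ and by its $\pi$-preimage, so the conflict graph on $\mathcal{Q}'$ has maximum degree at most $2$ and is therefore $3$-colorable. I would split $\mathcal{Q}'$ into three color classes on which the bounded summands $\Delta_{ij}^2\leq 2((\theta_1^\ast)^2+(\theta_2^\ast)^2)$ are independent, apply the lower-tail Bernstein inequality on each class (with variance-to-range ratio $\sigma^2/D\asymp\tau_n$), and obtain $\mathbbm{P}(M_P(Q)\leq\tfrac12\mathbbm{E}[M_P(Q)])\leq\exp(-c\,kn\tau_n)$; an analogous upper-tail Bernstein bound controls $\sum_{\mathcal{Q}}P_{ij}\leq 2|\mathcal{Q}|\bar p$. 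Finally I would union-bound over $|\Pi_{n,k}|\leq n^k$ permutations: condition (ii) forces $\tau_n\gtrsim\log n/n$, hence $n\tau_n\gtrsim\log n$, so for $\alpha_2$ large enough the exponent $c\,kn\tau_n$ exceeds $2k\log n$ and $\sum_{k\geq 2}n^k e^{-c\,kn\tau_n}=o(1)$. Combining the uniform lower bound with the expectation computation then yields Assumption~\ref{assump:Diff-P-permutedP} with probability $1-o(1)$.
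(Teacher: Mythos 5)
Your proof is correct, and its skeleton coincides with the paper's: you reduce $M_P(Q)$ to a sum over genuinely moved pairs (the paper's set $\Delta_A$, your $\mathcal{Q}'$), obtain the same expectation lower bound $\mathbbm{E}[M_P(Q)]\gtrsim kn\,\tau_n\bigl((\theta_1^\ast)^2+(\theta_2^\ast)^2\bigr)$ by the same counting argument, control $\sum_{(i,j)\in\mathcal{Q}}P_{ij}$ by an upper-tail Bernstein bound uniformly over $\Pi_{n,k}$ (this is the paper's Lemma~\ref{lemma: BernStein_ER}, and it is where condition (i) enters, which you should state explicitly), and close with a union bound over $|\Pi_{n,k}|\leq n^k$ using conditions (ii) and (iii). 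Where you genuinely depart from the paper is the lower-tail concentration of $M_P(Q)$: the paper treats $M_P(Q)$ as a function of $N\asymp kn$ independent Bernoulli entries with bounded differences $4(|\theta_1^\ast|+|\theta_2^\ast|)^2$ and invokes Proposition~3.2 of \cite{kim2002asymmetry} (following \cite{lyzinski2015graph}), which produces an exponent of order $kn\tau_n^2/(p_n+q_n)$ and is precisely why condition (ii) is assumed; you instead note that each entry of $A$ or $Y$ appears in at most two summands $\Delta_{ij}^2$, three-color the conflict graph, and apply Bernstein on each color class, giving the exponent $kn\,\sigma^2/D\gtrsim kn\tau_n$ (with $D=2((\theta_1^\ast)^2+(\theta_2^\ast)^2)$ the range of a summand). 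Your route is more elementary — no external concentration inequality — and strictly sharper, since $\tau_n\geq\tau_n^2/(p_n+q_n)$ always; it would in fact suffice to assume $n\tau_n\gtrsim\log n$ rather than condition (ii), though under the stated hypotheses both arguments go through. Two points need care in a full write-up. First, within a color class the summands are \emph{mutually} (not merely pairwise) independent because they are functions of pairwise disjoint collections of the independent entries; say so. Second, the classwise Bernstein step must use the \emph{global} deviation threshold: from $M_P(Q)\leq\tfrac12\mathbbm{E}[M_P(Q)]$ conclude that some class sum $S_c$ falls below its mean by at least $\tfrac16\sum_{c'}\mathbbm{E}[S_{c'}]\asymp kn\sigma^2$, and apply Bernstein at that absolute level, which yields $\exp(-c\,kn\sigma^2/D)$ irrespective of the class sizes; if instead each class were required to fall below half of \emph{its own} mean, a nearly empty color class would destroy the bound. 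With these clarifications your argument is complete.
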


Now using Propositions~\ref{prop: eigenval-hessian} and \ref{prop:Bd-Diff-P-permutedP}, we verify Assumptions~\ref{assump:eigenval-hessian} and \ref{assump:Diff-P-permutedP} respectively for the case of Erd\H{o}s-R\'enyi model, required for the graph matching consistency as mentioned in Theorem~\ref{thm:consistency}. The rest of the assumptions can be satisfied by appropriately choosing the link function, model parameters and normalizing the covariates. In the following theorem, we ensure the consistency holds for the Erd\H{o}s-R\'enyi case.   
\begin{theorem}\label{thm:ERconsistency}
   Suppose that the conditions of Propositions~\ref{prop: eigenval-hessian} and \ref{prop:Bd-Diff-P-permutedP} hold.
   Then, as $n,s$ go to infinity, we have that
    \begin{equation*}
    \mathbbm{P}(\widehat{Q} = Q^*) = 1 - o(1).
\end{equation*}
\end{theorem}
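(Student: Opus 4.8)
The plan is to deduce Theorem~\ref{thm:ERconsistency} directly from the general consistency result in Theorem~\ref{thm:consistency}, by verifying that all of its hypotheses (Assumptions~\ref{assump:eigenval-hessian}--\ref{assump:Diff-P-permutedP}) hold with high probability under the Erd\H{o}s--R\'enyi model with covariates. Two of these are exactly the content of Propositions~\ref{prop: eigenval-hessian} and~\ref{prop:Bd-Diff-P-permutedP}: under the stated conditions, Assumption~\ref{assump:eigenval-hessian} and Assumption~\ref{assump:Diff-P-permutedP} each hold on an event of probability $1-o(1)$. In particular, Proposition~\ref{prop: eigenval-hessian} lets me take $L_s \asymp \tau_n = \min\{p_n,q_n\}$, which is precisely the rate appearing in the error term of Proposition~\ref{prop:Bd-Diff-P-permutedP}, so the two results are compatible. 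It then remains to dispatch the three remaining assumptions, which are elementary in this setting.

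For those remaining hypotheses I would argue as follows. Assumption~\ref{assump:Bdd-norm-covariates} is immediate: in the ER setup $X_{ij}=(1,A_{ij},Y_{ij})$ with $A_{ij},Y_{ij}\in\{0,1\}$, so $\|X_{ij}\|\le\sqrt{3}$ deterministically and $C_d=\sqrt{3}$ works. Assumption~\ref{assumption:linkfunction} holds for the linear link $g(t)=t$, since $\mu(t)=t$ gives $\mu'(t)=1>0$ and $\mu''(t)=0$, so the derivatives are bounded (by $M_1=1$, $M_2=0$) and the loss is convex. The only remaining assumption requiring a probabilistic argument is the sparsity condition, Assumption~\ref{assump:Sparsity}: since $\mathbbm{E}[P_{ij}]=\theta_0^\ast+\theta_1^\ast p_n+\theta_2^\ast q_n \ge \alpha_1\log n/n$ by condition~(i) of Proposition~\ref{prop:Bd-Diff-P-permutedP} and $|\mathcal{Q}|=\Theta(nk)$, the mean of $\sum_{(i,j)\in\mathcal{Q}}P_{ij}$ is of order $k\log n$. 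A Bernstein bound on these bounded independent summands, followed by a union bound over the at most $\binom{n}{k}\le n^k$ distinct sets $\mathcal{Q}$ arising from $Q\in\Pi_{n,k}$ and a further union over $k$, shows the lower bound $c_1^\ast k\log n$ holds uniformly on an event of probability $1-o(1)$.

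With all five assumptions established on a common good event $\mathcal{E}$, taken as the intersection of the events from Propositions~\ref{prop: eigenval-hessian} and~\ref{prop:Bd-Diff-P-permutedP} together with the sparsity concentration event, I would finish by a conditioning argument. Theorem~\ref{thm:consistency} is stated conditionally on $A,Y,Z$ and is premised on its assumptions holding, which here are events measurable with respect to $(A,Y)$; on $\mathcal{E}$ it therefore yields $\mathbbm{P}(\widehat{Q}=Q^\ast\mid A,Y,Z)\ge 1-c(n^{-2}+s^{-1})$. Writing $\mathbbm{P}(\widehat{Q}=Q^\ast)\ge \mathbbm{E}\!\left[\mathbbm{P}(\widehat{Q}=Q^\ast\mid A,Y,Z)\,\mathbbm{1}_{\mathcal{E}}\right] \ge \left(1-c(n^{-2}+s^{-1})\right)\mathbbm{P}(\mathcal{E})$ and using $\mathbbm{P}(\mathcal{E}^c)=o(1)$ gives $\mathbbm{P}(\widehat{Q}=Q^\ast)=1-o(1)$ as $n,s\to\infty$.

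I expect the main obstacle to be this last conditioning step rather than the assumption checking. Propositions~\ref{prop: eigenval-hessian} and~\ref{prop:Bd-Diff-P-permutedP} only guarantee the hypotheses probabilistically, whereas Theorem~\ref{thm:consistency} is a conditional statement premised on those hypotheses holding, so care is needed to intersect the several high-probability events and to transfer from the conditional bound to an unconditional one without the good-event probability degrading. The uniform verification of Assumption~\ref{assump:Sparsity} over all permutations in $\Pi_{n,k}$ is the only place where a genuinely new concentration argument is required, but since the summands are bounded and $|\mathcal{Q}|$ grows like $nk$, the exponential tail comfortably beats the $n^k$ union-bound factor.
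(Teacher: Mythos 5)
Your proposal is correct and follows essentially the same route as the paper's own proof: both deduce the result from Theorem~\ref{thm:consistency} by invoking Propositions~\ref{prop: eigenval-hessian} and~\ref{prop:Bd-Diff-P-permutedP} for Assumptions~\ref{assump:eigenval-hessian} and~\ref{assump:Diff-P-permutedP}, checking Assumptions~\ref{assump:Bdd-norm-covariates} and~\ref{assumption:linkfunction} directly for binary covariates and the identity link, establishing Assumption~\ref{assump:Sparsity} via Bernstein's inequality under condition~(i) of Proposition~\ref{prop:Bd-Diff-P-permutedP}, and then passing from the conditional guarantee of Theorem~\ref{thm:consistency} to an unconditional one by restricting to the high-probability event on which the assumptions hold. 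Your write-up is, if anything, slightly more explicit than the paper's at the two places you flagged (the uniform union bound over $\Pi_{n,k}$ for Assumption~\ref{assump:Sparsity}, and the final conditioning/intersection-of-events step), but the argument is the same.
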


The conditions required by Theorem~\ref{thm:ERconsistency} impose constraints in the relationship between the number of seeds and the edge probabilities for $A$, $Y$ and $B$. When $p_n \asymp q_n$ and $\theta_1^\ast \asymp \theta_0^\ast\asymp 1$, the conditions reduce to $s^2 p_n\to \infty$, $\theta_0^\ast = O(p_n)$ and $\frac{\log n}{n} \lesssim p_n$. Which are mild. For instance, in the sparse regime when $p_n \asymp \frac{\log n}{n}$, the number of seeds required by the theorem to achieve exact recovery is of the order $\sqrt{n}$, whereas in the dense regime with $p_n \asymp 1$, a growing number of seeds $s\to\infty$ is enough.

\section{Simulations}\label{sim}
In this section, we present simulated experiments and compare the results obtained by our methods with alternative approaches for graph matching. The simulation study considers a comparison of five different methods implemented. First, our proposed methods that incorporate covariates, Algorithm~\ref{alg1} based on the QAP (``CovQAP") and Algorithm~\ref{alg2} based on the neighborhood of the seed nodes (``CovNeigh"). We also consider the analog methods that do not incorporate covariates, namely, a non-convex relaxation of the QAP method (``NoCovQAP") \citep{vogelstein2015fast}, and the LAP method using the neighborhood of the seeded nodes (``NoCovNeigh") \citep{lyzinski2014seeded,mossel2020seeded}. Finally, we combine the network $A$ and the edge covariate matrix $Y$ 
by constructing a similarity matrix $M =\frac 12 (A+Y)$, and we match this matrix to $\widetilde{B}$ via the non-convex relaxation of the QAP (``AvgSim"). In our implementation, for the QAP-based algorithms, we use the implementation of the Frank-Wolfe methodology from the R package \texttt{iGraphMatch} \citep{qiao2021igraphmatch} to optimize the nonconvex loss function.

In all simulation scenarios, we generate the first graph following an  Erd\H{o}s-R\'enyi model, that is, $A \sim G(n, p)$, with $n = 500$ being the number of vertices and $p = 0.1$ being the probability of edge inclusion. We consider a binary $(0/1)$ $n\times n$ edge covariate matrix $Y$, similarly generated independently from the Erd\H{o}s-R\'enyi model, that is $Y \sim G(n, q)$ with $q = 0.1$. We then sample the adjacency matrix of the second graph $B$ following a Bernoulli distribution from the corresponding probability matrix $P$ given by $P = \theta_0 + \theta_1  A + \theta_2 Y$  for real fixed values of $\theta_0, \theta_1$ and $\theta_2$, where $\theta_1$ and $\theta_2$ depend on parameters $\alpha$ and $\gamma$, representing the signal strength and the amount of covariate effect, respectively, defined as follows. For Figures~\ref{fig:easysetup} and \ref{fig:difficultsetup} we have $\theta_0 = 0.01, \theta_1 = \alpha(1-\gamma)$ and $\theta_0 = 0.6, \theta_1 = -\alpha(1-\gamma)$ respectively, while $\theta_2 = \alpha \gamma$ is the same for all figures. Given a fixed number of seeds and seed set $\mathcal{S}$, we construct $\widetilde{B}$ keeping the seeded vertices untouched and shuffling the remaining vertices, $n - |\mathcal{S}|$ of $B$ through a permutation matrix $Q^*$.

Using the above method, we generate the covariate matrix $Y$ and the observed graphs $A$, $\widetilde{B}$ for some fixed values of $\theta_0, \alpha, \gamma$ and number of seeds $|\mathcal{S}|$.
For each of these combinations, the experiment is repeated $50$ times and the average proportion of incorrectly matched vertices between $A$ and $\widetilde{B}$ is reported. We report the mean proportion of mismatches considering $\gamma \in\{0.05, 0.45, 0.85\}$ and for each of the $\gamma$ values, we provide the plot of proportion of mismatches as a function of $\alpha \in\{0.1, 0.15, \cdots, 0.55, 0.6\}$. This encompasses a whole range of situations, including different degrees of influence of the observed matrix $A$ and covariate matrix $Y$. We also repeat the simulations with two choices of seeds, $\{100, 250\}$, to study the combined interplay of the seed numbers and covariates. 

\begin{figure}[h!]
  \centering
  
  \begin{subfigure}{0.8\textwidth}
    \centering
    \includegraphics[width=\textwidth]{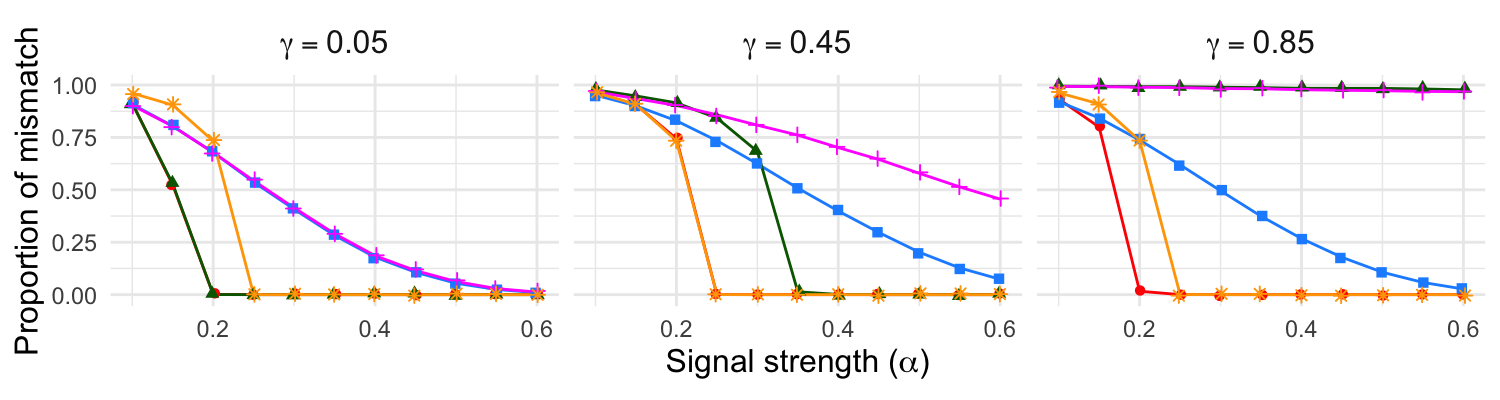}
    \caption{Using $100$ seeds out of $500$ vertices.}
    \label{fig:easy100}
  \end{subfigure}
  
  \begin{subfigure}{0.8\textwidth}
    \centering
    \includegraphics[width=\textwidth]{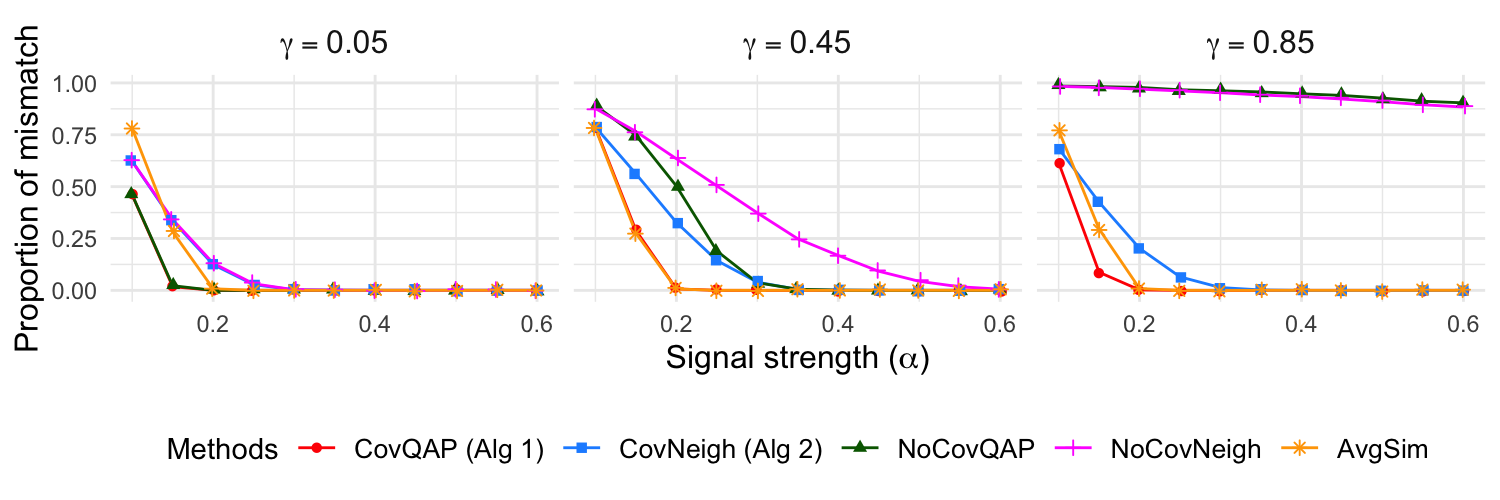}
    \caption{Using $250$ seeds out of $500$ vertices.}
    \label{fig:easy250}
  \end{subfigure}
  
  \caption{Comparison of different graph matching algorithms with varying covariate effect $(\gamma)$ and signal strength $(\alpha)$ when both the given network and the covariate have an effect in the same direction.}
  \label{fig:easysetup}
\end{figure}

In Figure~\ref{fig:easysetup}, we explore a reasonably easy setup to demonstrate situations where the competing methods are likely to work well. As expected, the proportion of mismatched vertices is high for very low values of $\alpha$ because of the lack of signal. This is the case where $P$ behaves like a constant matrix ($P_{ij} \approx \theta_0$ for all $i,j$) and all the methods fail due to a lack of information in common between $P$ and $\widetilde{B}$. When $\gamma$ is low, the graph $A$ has the most influence, and the no covariate methods perform almost as good as our proposed algorithms. But as $\gamma$ increases, the signal due to the covariates becomes more significant and, as expected, the no covariate methods worsen their performance in such situations. The method using the average similarity matrix performed reasonably well in all scenarios, as the covariate and the network have a similar effect, which is captured by $M$. The algorithms based on the seed node neighborhood, CovNeigh and NoCovNeigh, performed worse than their QAP counterparts.

\begin{figure}[h!]
  \centering
  
  \begin{subfigure}{0.8\textwidth}
    \centering
    \includegraphics[width=\textwidth]{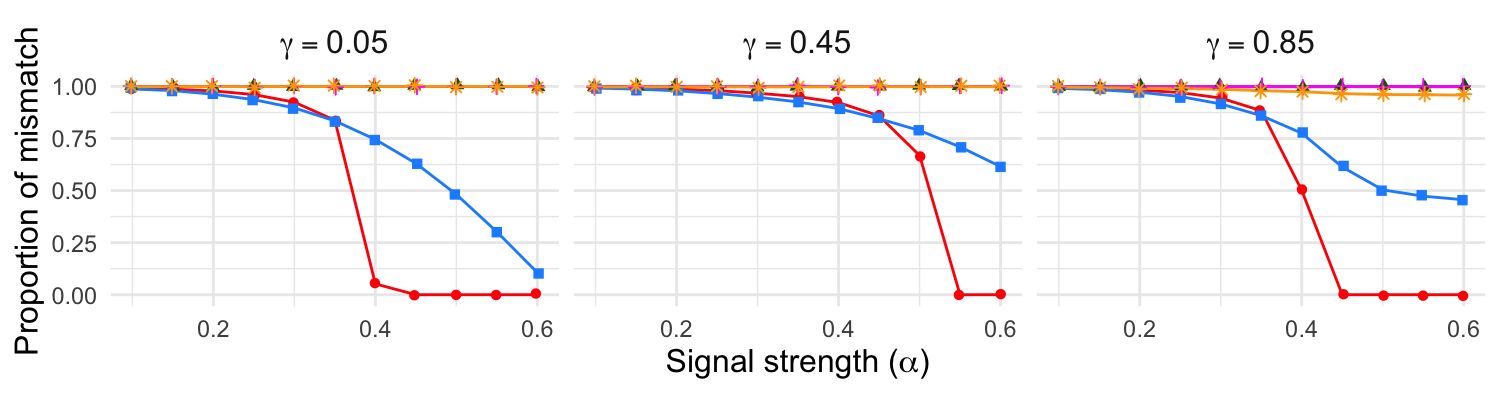}
    \caption{Using $100$ seeds out of $500$ vertices.}
    \label{fig:difficult100}
  \end{subfigure}
  
  \begin{subfigure}{0.8\textwidth}
    \centering
    \includegraphics[width=\textwidth]{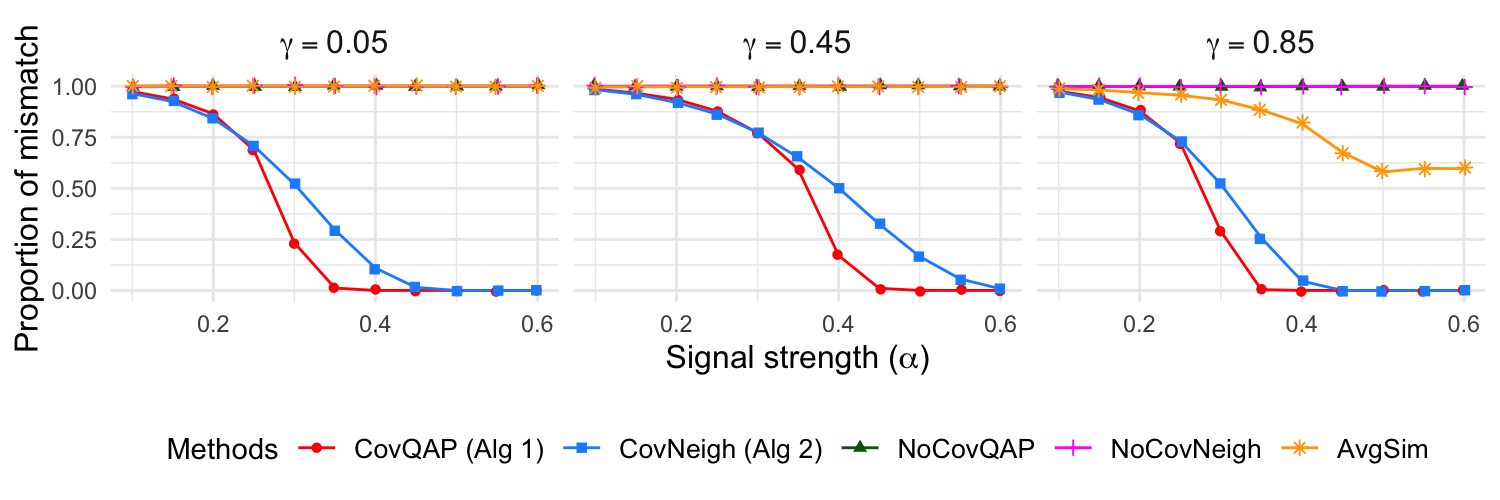}
    \caption{Using $250$ seeds out of $500$ vertices.}
    \label{fig:difficult250}
  \end{subfigure}
  
  \caption{Comparison of different graph matching algorithms with varying covariate effect $(\gamma)$ and signal strength $(\alpha)$ when both the given network and the covariate have an effect in the opposite direction.}
  \label{fig:difficultsetup}
\end{figure}

Figure~\ref{fig:difficultsetup} considers a relatively more difficult setup for matching, where the network and the covariate have opposite effects on $\widetilde{B}$. Compared to Figure~\ref{fig:easysetup}, our proposed methods show a similar behavior, whereas the alternative approaches 
For the low $\gamma$ case, the degree of influence of $A$ dominates the information from the covariates. In such cases, 
Our proposed methods gradually achieved zero proportion of error in matching as the $\alpha$ values increased. However, in contrast with Figure~\ref{fig:easysetup}, the alternative approaches are not able to achieve a good matching performance. In fact, the QAP and the neighborhood-based methods with no covariates perform no better than random guessing, as the network $A$ has a negative effect in this scenario. Meanwhile, AvgSim can detect some signal when the effect of the adjacency matrix $A$ is low. A larger number of seeded nodes helps achieve better performance for our methods. Especially, the improvement of the performance is noticed for Algorithm~\ref{alg2}, which is expected, as the method primarily utilizes the information from the seeded nodes in the neighborhood of the non-seeded ones.

\begin{figure}[h!]
  \centering
  
  \begin{subfigure}{0.8\textwidth}
    \centering
    \includegraphics[width=\textwidth]{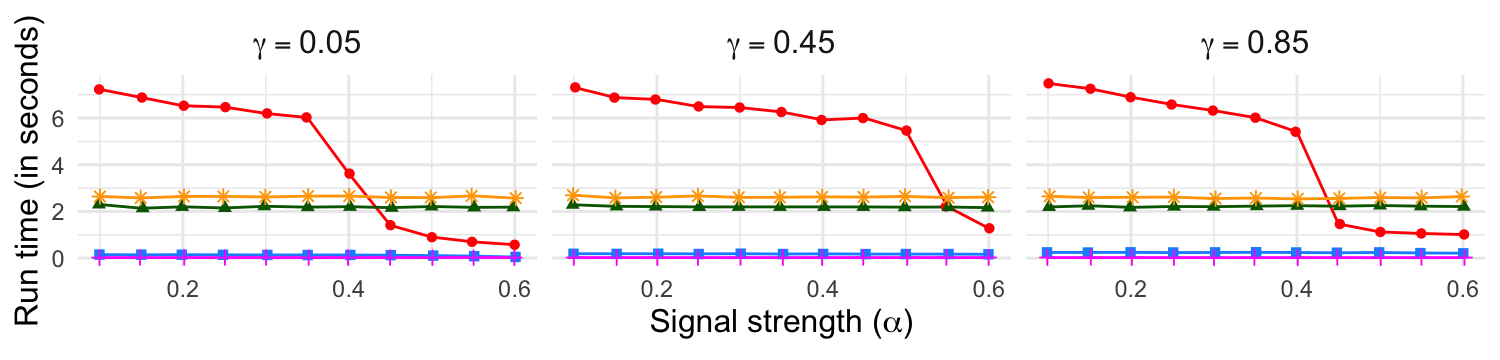}
    \caption{Time comparison under varying covariate effect $(\gamma)$ and signal strength $(\alpha)$.}
    \label{fig:timeSignal}
  \end{subfigure}
  
  \begin{subfigure}{0.8\textwidth}
    \centering
    \includegraphics[width=\textwidth]{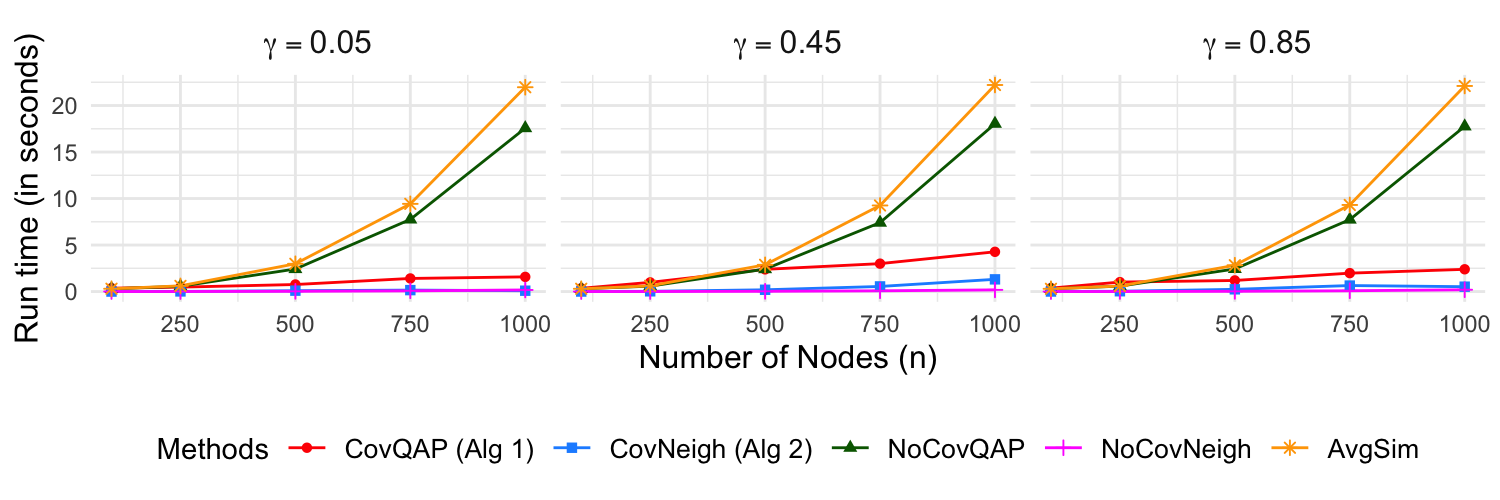}
    \caption{Time comparison under varying covariate effect $(\gamma)$ and graph size $(n)$.}
    \label{fig:timeNodes}
  \end{subfigure}
  
  \caption{Comparison of different graph matching algorithms with respect to run time (in seconds) with varying covariate effect $(\gamma)$.}
  \label{fig:time}
\end{figure}

Next, we present run-time comparison results measured in seconds. Figure~\ref{fig:timeSignal} shows the run time for different graph matching algorithms under varying covariate effects and signal strength. This result is obtained under the same setup as in Figure~\ref{fig:difficult100}. Among the methods, the neighborhood-based methods are orders of magnitude faster than the QAP-based ones, but with a worse performance in terms of matching accuracy as seen in Figure~\ref{fig:difficult100}. Although CovQAP performs the worst in terms of run time under low signal, the speed increases significantly as the signal increases, and it provides better matching accuracy overall. The second comparison examines run time with increasing graph size, where $20\%$ of the nodes are used as seeds. Figure~\ref{fig:timeNodes} presents the results for varying covariate effect and number of nodes with signal strength fixed at $\alpha = 0.55$. The other parameter setup is the same as Figure~\ref{fig:difficult100}. As expected, run time grows with graph size. Our proposed methods, along with NoCovNeigh, run faster than NoCovQAP and AvgSim. Among our methods, CovNeigh is the quickest, performing orders of magnitude faster than CovQAP, trading off accuracy as noted before in Figures~\ref{fig:easysetup} and \ref{fig:difficultsetup}.

\section{Matching collaboration and academic genealogy networks}\label{dataApp}

In this section, we consider the practical application of matching individual records of researchers in statistics coming from two different data sources. The Multi-Attribute Dataset on Statisticians (MADStat) \citep{DVN_V7VUFO_2022} contains records of statistics research papers from 1975 to 2015 published on 36 different journals, from which we obtained a collaboration network between the more than 40,000 authors in the data. In addition, we used data from the Mathematics Genealogy Project (MGP) (\url{https://www.genealogy.math.ndsu.nodak.edu/}) to reconstruct the academic genealogy (doctoral student-advisor relationships) of a subset of statisticians from the MADStat data. 
Thus, the matrix $A$, representing the genealogy network, has entries $A_{ij}=1$ if $i$ is the doctoral advisor of $j$ or viceversa, and 0 otherwise. The matrix $\widetilde{B}$ encodes collaboration relationships, with $\widetilde{B}_{ij}=1$ indicating that the authors $i$ and $j$ have written a paper together. We focus on a subset of statisticians who appear both in the MADStats and MGP data with known correspondence across the two datasets.  This subset, containing $n = 8,627$ nodes,  is selected based on the largest connected component of the collaboration network once the nodes with known correspondence across the two datasets are considered. 
Figure~\ref{fig: networksAB} shows a visualization of the two networks.

In addition to the genealogy information, the MGP dataset also contains covariates related to the doctoral studies of each node, such as graduation year and institution, that we use in our methodology for covariate-assisted graph matching. We consider four  covariate matrices $Y^{(i)} \in \mathbbm{R}^{n \times n}, i = 1, 2, 3, 4$, derived from node covariates and the network information as follows. For a pair of nodes $i,j$,  $Y^{(1)}_{ij}\in\{0,1\}$ indicates if $i$ and $j$ graduated from the same institution, $Y^{(2)}_{ij} \in\{0,1\}$ indicates if the graduate institutions of $i$ and $j$ are situated in the same country, $Y^{(3)}_{ij} \in\{0,1\}$ indicates if $i$ and $j$ have two degrees of academic separation, that is they are academic siblings or grandparent-grandchild ($A_{ik}A_{jk}=1$ for some $k$), and  $Y^{(4)}_{ij}$ is the absolute difference between the graduation years of $i$ and $j$.

Our proposed methods require the estimation of a prediction model for $B$ conditional on the covariates and the network $A$. To estimate ${P}$ as in Equation~\eqref{eq:estimatorP},  we use a logistic regression model with an intercept, the genealogy network and the above covariates. Table~\ref{FittedCoeff} shows the fitted coefficients in one replication of the experiments, with $500$ non-seeds selected at random; we observed that the estimated coefficients and the corresponding standard errors remain similar in other replications or with different numbers of seeds. As expected, advisor-student relationship is the strongest predictor of collaboration, and we obtained positive estimated coefficients for $Y^{(1)}$, $Y^{(2)}$ and $Y^{(3)}$, suggesting that individuals that graduated from the same institution or the same country, or being closely related in the academic family, have higher chances of collaboration compared to individuals that do not share these features. Meanwhile, a large academic age gap in terms of the difference between the graduation years decreases the chances of collaboration.

\begin{table}[]
\centering
\begin{tabular}{|clcc|}
\hline
\multicolumn{1}{|c|}{Term}     & \multicolumn{1}{c|}{Variable description}   & \multicolumn{1}{c|}{\begin{tabular}[c]{@{}c@{}}Fitted \\ coefficient\end{tabular}} & \begin{tabular}[c]{@{}c@{}}Standard \\ error\end{tabular} \\ \hline
\multicolumn{1}{|c|}{Intercept} & \multicolumn{1}{l|}{Intercept term.}   & \multicolumn{1}{c|}{-7.986}   & 0.017   \\ \hline
\multicolumn{1}{|c|}{$A$}    & \multicolumn{1}{l|}{Advisor-student indicator (genealogy network).}  & \multicolumn{1}{c|}{8.703}   & 0.031 \\ \hline
\multicolumn{1}{|c|}{$Y^{(1)}$}        & \multicolumn{1}{l|}{\begin{tabular}[c]{@{}l@{}} Same institution of Ph.D. graduation indicator. 
\end{tabular}}          & \multicolumn{1}{c|}{1.105}   & 0.041  \\ \hline
\multicolumn{1}{|c|}{$Y^{(2)}$}        & \multicolumn{1}{l|}{\begin{tabular}[c]{@{}l@{}} Same country of Ph.D. graduation indicator. \end{tabular}}  & \multicolumn{1}{c|}{0.746}   & 0.019  \\ \hline
\multicolumn{1}{|c|}{$Y^{(3)}$}        & \multicolumn{1}{l|}{\begin{tabular}[c]{@{}l@{}} Academic siblings or grandparent/child).\end{tabular}} & \multicolumn{1}{c|}{3.276}  & 0.042   \\ \hline
\multicolumn{1}{|c|}{$Y^{(4)}$}   & \multicolumn{1}{l|}{\begin{tabular}[c]{@{}l@{}} Absolute difference between graduation years.
\end{tabular}}  & \multicolumn{1}{c|}{-0.028} & 0.001  \\ \hline
\end{tabular}
\caption{Estimated coefficients, descriptions along with standard error to obtain $\hat{P}$. These coefficients were obtained by fitting the logistic regression model to one realization of the experiments with 500 non-seed nodes.}
\label{FittedCoeff}
\end{table}

\begin{table}[h!]
\begin{center}
\begin{tabular}{ |c|c|c|c|c| } 
 \hline
 No. of non-seeds & CovQAP (Alg 1) & NoCovQAP  & CovNeigh (Alg 2) & NoCovNeigh  \\ 
 \hline
 $500$ & $\textbf{65.76} (2.47)$ & $61.82 (2.64)$ & $64.45 (2.32)$ & $59.28 (2.40)$ \\
 \hline
 $1000$ & $\textbf{58.87} (1.80)$ & $55.30 (1.65)$ & $55.48 (1.61)$ & $51.22 (1.72)$ \\
 \hline
  $2000$ & $\textbf{48.66} (1.29)$ & $46.25 (1.27)$ & $43.08 (1.18)$ & $39.60 (1.09)$ \\
 \hline
\end{tabular}
\end{center}
\caption{Average matching accuracy (along with the standard deviation), measured by the percentage of non-seed nodes correctly matched by the corresponding algorithm over 50 replications with random seeds.}
\label{RealDataTab}
\end{table}

We compare the performance of four different graph matching methods considered in the previous section, namely, our proposed algorithms (CovQAP and CovNeigh), and their analog versions for the no-covariate case (NoCovQAP, NoCovNeigh). The performance is measured by randomly selecting seed nodes and calculating the percentage of non-seed nodes that are correctly matched by the algorithm. We consider $500, 1000$ and $2000$ as the number of non-seed nodes,  and the rest of the nodes are considered as seeds. In Table~\ref{RealDataTab}, we summarize the average percentage of matching along with standard deviation over $50$ replications of random seed nodes, where each time we shuffle the matrix $B$ to get different $\widetilde{B}$ to match with $A$. The results clearly demonstrate the capability of all methods to correctly match a large portion of the nodes in the graphs. Moreover, as expected, the percentage of matches decreases as the number of non-seeds increases. Our proposed methods with covariates show a significant improved performance when compared to the corresponding no covariate versions. To compare this performance improvement more precisely, we also performed a paired t-test between the percentage of matches obtained using our method and the corresponding no-covariate version. In all cases, the performance of our methods is statistically significantly better. In Table~\ref{TestTab} we summarize the estimated mean difference in the percentage of matched nodes and the corresponding $95\%$ confidence interval. Overall, our algorithms are able to obtain between $2.41$ to $5.16$\% more nodes correctly matched than the methods that do not use covariates.

\begin{table}[h]
\begin{center}
    \begin{tabular}{|c|c|c|c|}
\hline
\begin{tabular}[c]{@{}c@{}}No. of \\ non-seeds \end{tabular} & Methods  & \begin{tabular}[c]{@{}c@{}}Estimated difference \\ in \% of matched nodes\end{tabular} & 95 \% CI     \\ \hline
\multirow{2}{*}{500}  & CovQAP vs NoCovQAP     & 3.94  & (3.44, 4.44) \\ \cline{2-4} 
 & CovNeigh vs NoCovNeigh & 5.16   & (4.80, 5.52) \\ \hline
\multirow{2}{*}{1000}   & CovQAP vs NoCovQAP     & 3.57  & (3.23, 3.92) \\ \cline{2-4}  & CovNeigh vs NoCovNeigh & 4.26  & (3.88, 4.65) \\ \hline
\multirow{2}{*}{2000}   & CovQAP vs NoCovQAP  & 2.41  & (2.14, 2.68) \\ \cline{2-4}   & CovNeigh vs NoCovNeigh & 3.48   & (3.22, 3.74) \\ \hline
\end{tabular}
\end{center}
\caption{Comparison of the matching performance between our algorithms and their analog versions with no covariates. The table shows the results of a paired t-test for the experiments reported in Table~\ref{RealDataTab}, including the average accuracy difference along with the $95\%$ confidence interval.}
\label{TestTab}
\end{table}

Moreover, our proposed algorithms are flexible to implement alternative predictive machine learning methods to obtain the estimates for the entries in $\hat{P}$ as in Equation~\ref{eq:estimatorP}, which is then used by the proposed algorithms. For the purpose of illustration, we consider a random forest classifier \citep{breiman2001random} in a single replication with $500$ non-seed nodes. We use the set of seed nodes to fit the method, and create the matrix $\widehat{P}$ from the predicted values, which is then used in the remaining steps of our algorithms. Using a random forest with $300$ trees with the \texttt{ranger} R package  \citep{wright2020ranger}, the percentages of correctly matched nodes by Algorithm~\ref{alg1} and \ref{alg2} are $63.27$\% and $59.53$\%, respectively. While these numbers already show the flexibility of our methodology and the potential of other machine learning methods, by tuning various parameters of therandom forest or exploring other classification methods, it might be possible to achieve better matching accuracy, but we leave this exploration as future work.

\section{Discussion}\label{discussion}

In this paper, we addressed the inexact graph matching problem incorporating the use of node and edge covariates, which are often available in practice. Our methods rely on estimating the probability matrix $P$ for one of the graphs based on seeds in a manner similar to regression models with network responses, followed by solving an optimization problem to match the nodes in the entire graph. The conditional framework we introduced extends to a variety of random graph models while effectively incorporating covariate information to improve matching accuracy. Within this framework, we derived bounds for estimation error, established consistency of the estimated latent node correspondences and provided validation for Erd\H{o}s-R\'enyi random graph model. Our proposed methods demonstrated good empirical performance in simulations and real data. In summary, our proposed framework offers a new perspective on utilizing useful covariate information to enhance the graph matching performance across a variety of applications.

The current approach presents certain limitations. First, our methodology assumes the existence of some seed nodes. While these seeds can be manually obtained in many applications, this may not always be possible. An alternative approach that does not require seeds would need to simultaneously optimize over the model fit and node matching (for example, as in \cite{arroyo2021graph}), but this can vastly increase the computational cost. Second, the network regression model used in the paper does not incorporate node latent effects, which are common in the literature \citep{ma2020universal}, as the estimation of these effects requires knowing the node correspondence. Nevertheless, the real data results show that even without these effects the method already shows significant improvements. Finally, while we incorporated covariates within a regression framework, we did not explore alternative strategies, such as string matching, which could be relevant when textual attributes are available for each node. 

There are specific extensions of interest. In this project, we restricted attention to pairs of graphs with the same number of nodes. It is possible to extend the framework to settings where one graph is larger. 
Another potential extension is to consider alternative prediction methods. We have empirically explored the use of random forest, showing promising results. On the theoretical side,  our results are based on the optimal solution of the QAP, which is intractable, but it provides a  baseline to compare with existing work. A comprehensive theoretical study of covariate-assisted graph matching under computationally efficient algorithms is left as future work. These several directions would be interesting and valuable to explore in the future.

\section*{Acknowledgments}
Jes\'us Arroyo acknowledges support from the National Science Foundation under grant DMS-2413553. The authors are grateful to students Adam Authur, Gage Sanders, Xingyu Liu and Yufan Li, who contributed to collecting the data. The authors also acknowledge that portions of this research were conducted with the advanced computing resources provided by Texas A\&M Department of Statistics Arseven Computing Cluster.

\bibliographystyle{apalike}
\bibliography{biblio}
\newpage
\appendix
\section{Appendix}

\subsection{Proof of Proposition~\ref{prop:linearassignmentsol}}

\begin{proof}
[Proof of Proposition~\ref{prop:linearassignmentsol}]
Without loss of generality, suppose that $\mS = \{1, \ldots, s\}$ and that $Q^\ast = I_n$, that is, the nodes of the two graphs are correctly aligned. 
In this case, $\widetilde{B} = B$ and we can write
\begin{align*}
    \e\left[\trace\left(\widetilde{B}_{\wbmS, \mS} P^\top_{\wbmS, \mS}\right)\right] & = \sum_{i=s+1}^n \sum_{k=1}^s \e[\widetilde{B}_{ik}]P_{ik}\\
    & = \sum_{i=s+1}^n \sum_{k=1}^s P_{ik}^2.
\end{align*}
Now, let $Q$ be a different permutation matrix, and $\pi:[n-s]\rightarrow[n-s]$ its corresponding permutation function such that $Q_{ij}=1$ iff $\pi(i) = j$. Then,
\begin{align*}
    \e\left[\trace\left(Q\widetilde{B}_{\wbmS, \mS} P^\top_{\wbmS, \mS}\right)\right] & = \sum_{i=s+1}^n \sum_{k=1}^s \e[Q_{(i-s),[n-s]}\widetilde{B}_{\wbmS, k}]P_{ik}\\
    & = \sum_{i=s+1}^n \sum_{k=1}^s \e[\widetilde{B}_{\pi(i-s)+s, k}]P_{ik}\\
    & = \sum_{i=s+1}^n \sum_{k=1}^s P_{\pi(i-s)+s,k}P_{ik}.
\end{align*}
    Observe that the variables $P_{\pi(i-s)+s,k}, i\in\wbmS, k\in \mS$ are a rearrangement of $P_{ik}, i\in\wbmS, k\in \mS$. Hence, by the rearrangement inequality \citep{hardy1952inequalities},
     \begin{equation}\label{eq:proof-prop-linear asgnment}
         \e\left[\trace\left(\widetilde{B}_{\wbmS, \mS} P^\top_{\wbmS, \mS}\right)\right] \geq   \e\left[\trace\left(Q\widetilde{B}_{\wbmS, \mS} P^\top_{\wbmS, \mS}\right)\right],
     \end{equation}
    which proves that $Q^\ast = I_n$ satisfies Equation~\eqref{eq:prop-linearassignment}. Moreover, equality only holds if $P_{\pi(i-s)+s,k} = P_{ik}$ for all $i\in\{s+1, \ldots, n\}$ and $k\in[s]$. If $Q\neq I_n$, then there exists some $i'\in[n-s]$ such that $\pi(i') \neq i'$, and if $P_{\wbmS, \mS}$ does not have repeated rows, then there exists some $k$ such that $P_{\pi(i')+s,k} \neq P_{i,k}$, which shows that the inequality in \eqref{eq:proof-prop-linear asgnment} is strict, and $Q^\ast$ is the unique solution.
\end{proof}

\subsection{Proof of Theorem~\ref{thm:estimation-error-P}}
Throughout the proofs, to simplify notation, we use $\widetilde{\mathcal{S}}$ to denote the set of edge pairs between nodes contained within the set of seeds, that is,
$$\widetilde{\mathcal{S}} = \{(i,j)\in[n]^2: i,j\in \widetilde{\mathcal{S}}, i>j\}.$$
Observe that $\widetilde{B}_{ij} = B_{ij}$ for $(i,j)\in \widetilde{\mathcal{S}}$, and that $\{B_{ij}: (i,j)\in \widetilde{\mathcal{S}}\}$ are mutually independent random variables.

Before proving Theorem~\ref{thm:estimation-error-P}, we introduce some lemmas that are used to show that the loss function satisfies a local strong convexity property. Recall that for the loss function defined in Equation~\eqref{eq:glm-loss-function}, the gradient and the Hessian matrix are given by
\begin{equation*}
\label{eq:gradient-lossfunction}
    \nabla L(\theta) = \sum_{(i,j)\in \widetilde{\mathcal{S}}}-\left((\widetilde{B}_{ij} - \mu(X_{ij}^\top \theta)) X_{ij}\right),
\end{equation*}
\begin{equation}
    \nabla^2 L(\theta)  = \sum_{(i,j)\in \widetilde{\mathcal{S}}}\mu'(X_{ij}^\top {\theta}) X_{ij}X_{ij}^\top.
    \label{eq:Hessian-matrix-definition}
\end{equation}
In addition, by properties of the link function in a generalized linear model, $\mu(X_{ij}^\top \theta^\ast) = \e[\widetilde{B}_{ij}] = P_{ij}$. We also write $P_{ij}(\theta) = \mu(X_{ij}^\top \theta)$ and $\nabla P_{ij}(\theta) = \mu'(X_{ij}^\top\theta)X_{ij}$.

The following lemma shows that the loss function is strongly convex in a neighborhood of $\theta^\ast$. The proof follows a similar truncation argument as Proposition 2 of the long version of the paper by \cite{negahban2012unified}.

\begin{lemma} \label{lemma:glm-restricted-strong-convexity}
    Suppose that the link function $\mu$ satisfies Assumption \ref{assumption:linkfunction}. 
    Let $\mu'_{\min}$ be the minimum value of $\mu'$  over the set $\mathcal{B}({C_d\|\theta^\ast\|}) = \{t\in\real:|t|\leq C_d\|\theta^\ast\|\}$, with $C_d$ as defined in Assumption~\ref{assump:Bdd-norm-covariates}. That is,
     $$\mu'_{\min}= \min_{t\in\mathcal{B}({C_d\|\theta^\ast\|})} \mu'(t).$$
     Given $c\in(0,1)$, let $\Delta\in\real^d$ be a vector such that $\|\Delta\| \leq \frac{(1-c) \mu'_{\min}}{M_2C_d}$. Then, for any $\widetilde{\theta}\in\real^d$ such that $\|\widetilde{\theta}  - \theta^\ast\| \leq \|\Delta\|$, it holds that
     \begin{equation}
         \Delta^\top \nabla^2L(\widetilde{\theta}) \Delta \geq c \Delta^\top\nabla^2L(\theta^\ast)\Delta \geq  c\lambda_{\min}\left(\nabla^2L(\theta^\ast) \right)\|\Delta\|^2. \label{eq:lemma-glm-restrictedstrongconvexity}
     \end{equation}
\end{lemma}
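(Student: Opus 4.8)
The plan is to prove the two inequalities in \eqref{eq:lemma-glm-restrictedstrongconvexity} separately, starting from the second one, which is essentially free. Since $\mu' > 0$ everywhere by Assumption~\ref{assumption:linkfunction}, the Hessian $\nabla^2 L(\theta^\ast) = \sum_{(i,j)\in\widetilde{\mathcal{S}}} \mu'(X_{ij}^\top\theta^\ast) X_{ij}X_{ij}^\top$ is positive semidefinite, so the variational characterization of the minimum eigenvalue gives $\Delta^\top \nabla^2 L(\theta^\ast)\Delta \geq \lambda_{\min}(\nabla^2 L(\theta^\ast))\|\Delta\|^2$ at once; multiplying through by $c\in(0,1)$ yields the second inequality.

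The substance is the first inequality, and I would establish it through a termwise comparison of the two quadratic forms. Writing both as sums over the seed edge pairs, namely $\Delta^\top\nabla^2 L(\widetilde{\theta})\Delta = \sum_{(i,j)}\mu'(X_{ij}^\top\widetilde{\theta})(X_{ij}^\top\Delta)^2$ and similarly at $\theta^\ast$, it suffices to show that $\mu'(X_{ij}^\top\widetilde{\theta}) \geq c\,\mu'(X_{ij}^\top\theta^\ast)$ for every $(i,j)\in\widetilde{\mathcal{S}}$, since then multiplying each inequality by the nonnegative weight $(X_{ij}^\top\Delta)^2$ and summing preserves the bound. To obtain this termwise inequality, first I would use Cauchy--Schwarz with Assumption~\ref{assump:Bdd-norm-covariates} and the hypothesis $\|\widetilde{\theta}-\theta^\ast\|\leq\|\Delta\|$ to control the argument shift, $|X_{ij}^\top(\widetilde{\theta}-\theta^\ast)| \leq \|X_{ij}\|\,\|\widetilde{\theta}-\theta^\ast\| \leq C_d\|\Delta\|$. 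Then, because $\mu''$ is uniformly bounded by $M_2$ (Assumption~\ref{assumption:linkfunction}), the mean value theorem applied to $\mu'$ gives $|\mu'(X_{ij}^\top\widetilde{\theta}) - \mu'(X_{ij}^\top\theta^\ast)| \leq M_2 C_d\|\Delta\|$.

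The final step converts this additive perturbation into the multiplicative bound, and this is where the definition of $\mu'_{\min}$ enters. I would note that $|X_{ij}^\top\theta^\ast| \leq \|X_{ij}\|\,\|\theta^\ast\| \leq C_d\|\theta^\ast\|$, so the base point $X_{ij}^\top\theta^\ast$ lies in the ball $\mathcal{B}(C_d\|\theta^\ast\|)$ and hence $\mu'(X_{ij}^\top\theta^\ast) \geq \mu'_{\min}$. Combining this with the standing constraint $\|\Delta\| \leq (1-c)\mu'_{\min}/(M_2 C_d)$ yields $M_2 C_d\|\Delta\| \leq (1-c)\mu'_{\min} \leq (1-c)\mu'(X_{ij}^\top\theta^\ast)$, so that $\mu'(X_{ij}^\top\widetilde{\theta}) \geq \mu'(X_{ij}^\top\theta^\ast) - (1-c)\mu'(X_{ij}^\top\theta^\ast) = c\,\mu'(X_{ij}^\top\theta^\ast)$, as required.

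The main obstacle is conceptual rather than computational: arranging the argument so that the \emph{uniform} lower bound $\mu'_{\min}$ applies simultaneously at every seed edge pair. This hinges on recognizing that it is the base point $X_{ij}^\top\theta^\ast$, not the perturbed point $X_{ij}^\top\widetilde{\theta}$, that must remain in $\mathcal{B}(C_d\|\theta^\ast\|)$, and that the global bound $|\mu''|\leq M_2$ is precisely what permits the perturbed point to leave this ball without breaking the estimate. While the paper attributes the scheme to the truncation argument of Negahban et al., under the bounded-covariate Assumption~\ref{assump:Bdd-norm-covariates} the deterministic control on $\|X_{ij}\|$ together with the bound on $\mu''$ makes the mean value theorem comparison go through directly, so no truncation of the design is needed here.
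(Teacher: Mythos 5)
Your proposal is correct and follows essentially the same route as the paper's proof: a termwise comparison showing $\mu'(X_{ij}^\top\widetilde{\theta}) \geq c\,\mu'(X_{ij}^\top\theta^\ast)$ via Cauchy--Schwarz with Assumption~\ref{assump:Bdd-norm-covariates}, the Lipschitz bound on $\mu'$ from $|\mu''|\leq M_2$, and the observation that the base point $X_{ij}^\top\theta^\ast$ lies in $\mathcal{B}(C_d\|\theta^\ast\|)$ so that $\mu'_{\min}$ applies, followed by summing against the weights $(X_{ij}^\top\Delta)^2$. Your closing remark is also accurate: despite the paper's citation of a truncation argument, its actual proof, like yours, needs no truncation because the covariates are deterministically bounded.
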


\begin{proof}[Proof]
    First, observe that the strict convexity of $\mu$ implies that $\mu'(t)>0$ for all $t\in\real$, and hence $\mu'_{\min}>0$, so the upper bound for $\|\Delta\|$ is well defined. 
   We now proceed by analyzing  each $\mu'(X_{ij}^\top \widetilde{\theta})$ appearing on Equation~\eqref{eq:Hessian-matrix-definition} to show that these values are lower bounded by $c\mu'(X_{ij}^\top\theta^\ast)$ in a neighborhood
    of $X_{ij}^\top\theta^\ast$.
    
    By Assumption~\ref{assumption:linkfunction}, $|\mu''(t)|\leq M_2$, which implies  $\mu'$ is Lipschitz continuous with Lipschitz constant bounded by $M_2$. Hence, by Lipschitz continuity and the Cauchy-Schwarz inequality,
    \begin{align*}
        \mu'(X_{ij}^\top \widetilde{\theta}) - \mu'(X_{ij}^\top\theta^\ast) & \geq -M_2|X_{ij}^\top (\widetilde{\theta} - \theta^\ast)|\\
        & \geq -M_2\|X_{ij}\|\|\widetilde{\theta} - \theta^\ast\|\\
        & \geq -M_2C_d \|\Delta\|,
     \end{align*}
    where the last inequality holds by Assumption~\ref{assump:Bdd-norm-covariates}. Therefore, for any $c\in(0,1)$, it holds that
    \begin{align*}
        \mu'(X_{ij}^\top \widetilde{\theta}) - c\mu'(X_{ij}\theta^\ast) & \geq (1-c) \mu'(X_{ij}^\top\theta^\ast)  - M_2C_d \|\Delta\|.
     \end{align*}
     Observe that, by Assumption \ref{assump:Bdd-norm-covariates}, $|X_{ij}^\top \theta^\ast| \leq C_d \|\theta^\ast\|$, hence $\mu'(X_{ij}^\top \theta^\ast)\geq \mu'_{\min}$.  Therefore, under the assumption that $\|\Delta\| \leq \frac{(1-c)\mu'_{\min}}{M_2C_d}$, it holds that 
     \begin{equation*}
         \mu'(X_{ij}^\top \widetilde{\theta}) - c\mu'(X^\top_{ij}\theta^\ast) \geq 0.
         \label{eq:showing-positivedefinite}
     \end{equation*}
     Now, to establish \eqref{eq:lemma-glm-restrictedstrongconvexity}, observe that
     \begin{align*}
         \Delta^\top \nabla^2L(\widetilde{\theta}) \Delta & = \sum_{(i,j)\in \widetilde{\mathcal{S}}} \mu'(X_{ij}^\top \widetilde{\theta}) (\Delta^\top X_{ij})^2\\
         & \geq c \sum_{(i,j)\in \widetilde{\mathcal{S}}} \mu'(X_{ij}^\top {\theta}^\ast) (\Delta^\top X_{ij})^2\\
         & = c \Delta^\top \nabla^2 L(\theta^\ast) \Delta\\
         & \geq c\lambda_{\min}\left(\nabla^2L(\theta^\ast) \right) \|\Delta\|^2.
     \end{align*}
\end{proof}

The next lemma shows that the minimum of the loss function is located in a neighborhood of $\theta^\ast$ with high probability, and hence, the event defined in Lemma~\ref{lemma:glm-restricted-strong-convexity} for $\Delta = \widehat{\theta} - \theta^\ast$ holds.

\begin{lemma}\label{lemma:thetahat-close-to-theta}
Under the conditions of Lemma~\ref{lemma:glm-restricted-strong-convexity}, set $\kappa = \frac{(1-c)\mu_{\min}'}{M_2C_d}$ 
Then, there exists some constant $C$ that only depends on the constants in the assumptions, such that
    \begin{equation}\label{eq:lemma-ltheta+delta-ltheta}
    \p\left(\|\widehat{\theta} - \theta^\ast\| \leq \kappa\right) \geq 1 - (d+1) \exp( - C \lambda_{\min}(\nabla^2 L(\theta^\ast) )).
    \end{equation}
\end{lemma}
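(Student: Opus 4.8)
The plan is to run the standard localization argument for convex $M$-estimators, combining the local strong convexity from Lemma~\ref{lemma:glm-restricted-strong-convexity} with a concentration bound on the gradient at the truth, $\nabla L(\theta^\ast)$. Since $L$ is convex and $\widehat{\theta}$ is its global minimizer, it suffices to show that the recentered excess loss $F(\Delta) := L(\theta^\ast + \Delta) - L(\theta^\ast)$ is strictly positive for every $\Delta$ on the sphere $\|\Delta\| = \kappa$. Indeed, $F$ is convex with $F(0) = 0$, so if $F > 0$ on this sphere then any point outside the ball of radius $\kappa$ would, by convexity along the segment to the origin, force $F \le 0$ somewhere on the sphere --- a contradiction. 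Hence the minimizer $\widehat{\Delta} = \widehat{\theta} - \theta^\ast$ must lie in the closed ball of radius $\kappa$, which is exactly the event in \eqref{eq:lemma-ltheta+delta-ltheta}.

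I would first Taylor-expand $F$ to second order: since $L$ is twice continuously differentiable, for each $\Delta$ there is $\widetilde{\theta} = \theta^\ast + t\Delta$ with $t\in[0,1]$ such that
\[
F(\Delta) = \nabla L(\theta^\ast)^\top \Delta + \tfrac{1}{2}\Delta^\top \nabla^2 L(\widetilde{\theta})\Delta.
\]
On the sphere $\|\Delta\| = \kappa$ we have $\|\widetilde{\theta} - \theta^\ast\| = t\|\Delta\| \le \|\Delta\| = \kappa$, so the hypotheses of Lemma~\ref{lemma:glm-restricted-strong-convexity} hold and give $\Delta^\top \nabla^2 L(\widetilde{\theta})\Delta \ge c\,\lambda_{\min}(\nabla^2 L(\theta^\ast))\,\kappa^2$. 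Bounding the linear term by Cauchy--Schwarz yields
\[
F(\Delta) \ge -\|\nabla L(\theta^\ast)\|\,\kappa + \tfrac{c}{2}\lambda_{\min}(\nabla^2 L(\theta^\ast))\,\kappa^2,
\]
so $F > 0$ on the sphere whenever $\|\nabla L(\theta^\ast)\| < \tfrac{c}{2}\lambda_{\min}(\nabla^2 L(\theta^\ast))\,\kappa$. Consequently the event $\{\|\widehat{\theta} - \theta^\ast\| \le \kappa\}$ contains $\{\|\nabla L(\theta^\ast)\| < \tfrac{c}{2}\lambda_{\min}(\nabla^2 L(\theta^\ast))\,\kappa\}$, and it remains to control the gradient norm.

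Here $\nabla L(\theta^\ast) = -\sum_{(i,j)\in\widetilde{\mathcal{S}}}(\widetilde{B}_{ij} - P_{ij})X_{ij}$ is a sum of independent, mean-zero random vectors (recall $\e[\widetilde{B}_{ij}] = P_{ij}$ on the seed edges) whose summands are bounded in norm by $C_d$ via Assumption~\ref{assump:Bdd-norm-covariates}. I would bound each coordinate $\nabla L(\theta^\ast)_m$ by Bernstein's inequality: the per-coordinate variance is $\sum_{(i,j)}(X_{ij}^{(m)})^2 P_{ij}(1-P_{ij}) \le C_d^2 \sum_{(i,j)} P_{ij}(1-P_{ij}) = C_d^2 \binom{s}{2}\rho_s$, and the summands are bounded by $C_d$. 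Using $\|\nabla L(\theta^\ast)\| \le \sqrt{d}\max_m |\nabla L(\theta^\ast)_m|$ with a union bound over the $d$ coordinates produces the prefactor linear in $d$ matching the stated $(d+1)$.

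The step requiring the most care, and the main obstacle, is verifying that the Bernstein exponent scales like $\lambda_{\min}(\nabla^2 L(\theta^\ast))$ so that the claimed rate $(d+1)\exp(-C\lambda_{\min}(\nabla^2 L(\theta^\ast)))$ emerges. The threshold is of order $t \asymp \lambda_{\min}(\nabla^2 L(\theta^\ast))\kappa/\sqrt{d}$, while the variance proxy is of order $C_d^2 s^2\rho_s$. Invoking Assumption~\ref{assump:eigenval-hessian}, namely $\lambda_{\min}(\nabla^2 L(\theta^\ast)) \ge C L_s s^2$ together with $L_s \ge \rho_s$, shows both the variance term $s^2\rho_s$ and the linear-in-$t$ Bernstein correction are of order at most $\lambda_{\min}(\nabla^2 L(\theta^\ast))$; hence $t^2/(\sigma^2 + C_d t) \asymp \lambda_{\min}(\nabla^2 L(\theta^\ast))$, which is precisely the exponent in \eqref{eq:lemma-ltheta+delta-ltheta}. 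Folding all problem constants ($c$, $\mu'_{\min}$, $M_2$, $C_d$) and the two-sided/coordinatewise factors into a single constant $C$ and the $(d+1)$ prefactor then completes the argument.
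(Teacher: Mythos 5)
Your proposal is correct, and its architecture is identical to the paper's: a second-order Taylor expansion of the excess loss $L(\theta^\ast+\Delta)-L(\theta^\ast)$, the restricted strong convexity of Lemma~\ref{lemma:glm-restricted-strong-convexity} to lower-bound the quadratic term by $\frac{c}{2}\lambda_{\min}(\nabla^2 L(\theta^\ast))\|\Delta\|^2$, Cauchy--Schwarz on the linear term, and then the convexity argument showing that the event $\left\{\|\widehat{\theta}-\theta^\ast\|\leq\kappa\right\}$ contains the event $\left\{\|\nabla L(\theta^\ast)\| < \frac{c}{2}\lambda_{\min}(\nabla^2 L(\theta^\ast))\,\kappa\right\}$. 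The only place you genuinely diverge is the concentration step for the gradient: the paper invokes its Lemma~\ref{UpperBddP_probabilistic_lemma}, a matrix (vector) Bernstein inequality applied to $\nabla L(\theta^\ast)$ as a whole, which directly produces the dimensional prefactor $(d+1)$; you instead apply scalar Bernstein to each coordinate $\nabla L(\theta^\ast)_m$, take a union bound over the $d$ coordinates, and use $\|\nabla L(\theta^\ast)\|\leq\sqrt{d}\max_m|\nabla L(\theta^\ast)_m|$. Your route is more elementary and avoids matrix-concentration machinery, at the price of a prefactor $2d$ rather than $(d+1)$ and an extra factor $1/d$ inside the exponent; since $d$ is fixed, both losses are absorbed into the constant $C$, exactly as you indicate. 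Your verification that the variance proxy $C_d^2 s^2\rho_s$ and the linear Bernstein correction are $O\bigl(\lambda_{\min}(\nabla^2 L(\theta^\ast))\bigr)$ via Assumption~\ref{assump:eigenval-hessian} (using $\lambda_{\min}(\nabla^2 L(\theta^\ast))\geq CL_s s^2$ and $L_s\geq\rho_s$) is precisely the role played in the paper by its (somewhat implicit) condition $V\leq C'\lambda_{\min}(\nabla^2 L(\theta^\ast))$, so the two proofs deliver the same exponent scaling and differ only in technique.
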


\begin{proof}
    By a second-order Taylor expansion, for any $\Delta\in\real^d$ there exists some $v\in[0,1]$ such that
    \begin{equation*}
        L(\theta^\ast + \Delta) - L(\theta^\ast) = \Delta^\top \nabla L(\theta^\ast) + \frac{1}{2}\Delta^\top \nabla^2L(\theta^\ast + v{\Delta}) \Delta.
    \end{equation*}
    We show that if $\|\Delta\| = \kappa$, then the above quantity is positive with high probability. By convexity of $L$, this implies that $\widehat{\theta}$, the minimizer of $L$, is located within the ball centered at $\theta^\ast$ and with radius $\kappa$, and hence, $\|\widehat{\theta} - \theta^\ast\| \leq \kappa$.
    
    Following the same arguments in the proof of Lemma~\ref{lemma:glm-restricted-strong-convexity}, we can show that
    \begin{align*}
        \Delta^\top \nabla^2L(\theta^\ast + v{\Delta}) \Delta & \geq c \Delta^\top \nabla^2L(\theta^\ast) \Delta\\
        & \geq c\lambda_{\min}(\nabla^2 L(\theta^\ast) ) \|\Delta\|^2.
    \end{align*}
    On the other hand, by Cauchy-Schwarz inequality, 
    $$|\Delta^\top \nabla L(\theta^\ast)| \leq \|\Delta \| \| \nabla L(\theta^\ast)\|.$$
    Therefore, 
    \begin{equation*}
        L(\theta^\ast + \Delta)- L(\theta^\ast) \geq \frac{c}{2}\lambda_{\min}(\nabla^2 L(\theta^\ast) )\|\Delta\|^2 -  \|\Delta \| \| \nabla L(\theta^\ast)\|.
    \end{equation*}
    Hence, to show Equation~\eqref{eq:lemma-ltheta+delta-ltheta}, it is enough to show that
    $$\frac{c}{2}\lambda_{\min}(\nabla^2 L(\theta^\ast) ) \|\Delta\|^2 > \|\Delta \| \| \nabla L(\theta^\ast)\| $$
    with high probability, which implies that $L(\theta^\ast) < L(\theta^\ast + \Delta)$, and hence the minimizer $\widehat{\theta}$ is contained within this ball. By Lemma~\ref{UpperBddP_probabilistic_lemma},
    \begin{align*}
        \p\left(  \| \nabla L(\theta^\ast)\| \geq \frac{c}{2}\lambda_{\min}(\nabla^2 L(\theta^\ast)) \|\Delta\| \right) & \leq (d+1)\exp\left(\frac{ -c^2 \lambda_{\min}^2 (\nabla^2 L(\theta^\ast) ) \|\Delta\|^2/8}{V + M_1 C_d c\lambda_{\min}(\nabla^2 L(\theta^\ast) ) \|\Delta\|/6} \right)\\
        & = (d+1) \exp\left( \frac{-c^2\kappa^2/8}{V/\lambda_{\min}(\nabla^2 L(\theta^\ast) ) + c M_1 C_d \kappa/6} \lambda_{\min}(\nabla^2 L(\theta^\ast) )\right)\\
        & \leq (d+1) \exp\left( \frac{-c^2\kappa^2/8}{C' + c M_1 C_d \kappa/6} \lambda_{\min}(\nabla^2 L(\theta^\ast) )\right)\\
        & \leq (d+1) \exp( -C \lambda_{\min}(\nabla^2 L(\theta^\ast) )),
    \end{align*}
  for some constant $C>0$, under the assumption that $V \leq C'\lambda_{\min}(\nabla^2 L(\theta^\ast) )$ for some $C'>0$ that depends on the constants in Assumption~\ref{assump:eigenval-hessian} and the definition of $V$ in Lemma~\ref{UpperBddP_probabilistic_lemma}. Therefore, Equation~\eqref{eq:lemma-ltheta+delta-ltheta} holds.
\end{proof}

The following lemma provides an upper bound in the probabilistic sense of the gradient norm using the exponential concentration inequality. Specifically, we use the Matrix Bernstein inequality in our case, which uses the spectral norm for the matrices. However, the spectral norm is the same as the Euclidean norm for vectors. 

\begin{lemma}\label{UpperBddP_probabilistic_lemma}
    Let $\nabla L(\theta)$ be the gradient of the loss function as defined in Equation~\eqref{eq:gradient-lossfunction}.
    Suppose that Assumptions~\ref{assump:Bdd-norm-covariates} and \ref{assumption:linkfunction} hold, and define the quantity 
    \begin{equation*}
        V = C_d^2M_1^2\sum_{(i,j)\in \widetilde{\mathcal{S}}} P_{ij}(1-P_{ij})
    \end{equation*}
    Then, it holds that
  \begin{equation} \label{eq: UpperBddP_probabilistic_L}
   \mathbbm{P} \left( \| \nabla L(\theta^\ast)\| > \epsilon\right) \leq (d+1) \exp{\left( \frac{-\frac{\epsilon^2}{2}}{V + \frac{M_1C_d\epsilon}{3}} \right)} \hspace{5pt} \text{for all} \hspace{5pt} \epsilon \geq 0.
\end{equation}
\end{lemma}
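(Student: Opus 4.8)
The plan is to recognize $\nabla L(\theta^\ast)$ as a sum of independent, mean-zero random vectors and then apply the Matrix Bernstein inequality, treating each vector summand as a $d\times 1$ matrix. First I would substitute the GLM identity $\mu(X_{ij}^\top\theta^\ast) = P_{ij}$ into the gradient formula of Equation~\eqref{eq:gradient-lossfunction} to write
\[
\nabla L(\theta^\ast) = \sum_{(i,j)\in\widetilde{\mathcal{S}}} W_{ij}, \qquad W_{ij} := \bigl(P_{ij} - \widetilde{B}_{ij}\bigr) X_{ij}.
\]
Because $\e[\widetilde{B}_{ij}] = P_{ij}$, each $W_{ij}$ has mean zero, and since the variables $\{\widetilde{B}_{ij} : (i,j)\in\widetilde{\mathcal{S}}\}$ are mutually independent (these are the seed-to-seed edges, which are aligned since $Q^\ast_{\mS,\mS}=I_s$), the summands $W_{ij}$ are independent. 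This places the problem squarely in the regime of a Bernstein-type concentration bound for a sum of independent bounded random matrices.

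Second, I would verify the two inputs that Matrix Bernstein requires: a uniform almost-sure bound on $\|W_{ij}\|$ and the matrix variance proxy. For the almost-sure bound, $|P_{ij}-\widetilde{B}_{ij}|\le 1$ together with Assumption~\ref{assump:Bdd-norm-covariates} gives $\|W_{ij}\| \le \|X_{ij}\| \le C_d$; since $M_1$ is merely an upper bound on $|\mu'|$ in Assumption~\ref{assumption:linkfunction} one may take $M_1 \ge 1$, so $R := M_1 C_d$ serves as the uniform bound. For the variance, because each summand is $d\times 1$, both matrix variance statistics must be computed. The $d\times d$ statistic is $\sum_{(i,j)}\e[W_{ij}W_{ij}^\top] = \sum_{(i,j)} P_{ij}(1-P_{ij})X_{ij}X_{ij}^\top$, whose spectral norm is at most $C_d^2\sum_{(i,j)}P_{ij}(1-P_{ij})$, while the scalar statistic $\sum_{(i,j)}\e[W_{ij}^\top W_{ij}] = \sum_{(i,j)}P_{ij}(1-P_{ij})\|X_{ij}\|^2$ admits the identical bound. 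Hence the variance proxy is controlled by $V = C_d^2 M_1^2 \sum_{(i,j)\in\widetilde{\mathcal{S}}} P_{ij}(1-P_{ij})$.

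Finally, I would invoke the rectangular form of Matrix Bernstein with dimensions $d$ and $1$, which yields the dimensional prefactor $d+1$, and substitute $R = M_1 C_d$, the variance proxy $V$, and deviation level $\epsilon$ to recover exactly Equation~\eqref{eq: UpperBddP_probabilistic_L}. The step I expect to require the most care is the variance computation: one must track both the $W_{ij}W_{ij}^\top$ and the $W_{ij}^\top W_{ij}$ statistics to apply the rectangular version correctly (this is precisely what produces the $d+1$ factor rather than a bare $d$), and confirm that it is the Bernoulli conditional variance $P_{ij}(1-P_{ij})$ that appears rather than a cruder bound. The mean-zero and independence structure, though essential, follows directly from the conditional model in Equation~\eqref{eq:glm-for-B}, so the concentration step itself is then a routine application once $R$ and $V$ are in hand.
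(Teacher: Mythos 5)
Your proposal follows essentially the same route as the paper's proof: both write $\nabla L(\theta^\ast)$ as a sum of independent mean-zero vectors $(\widetilde{B}_{ij}-P_{ij})X_{ij}$ over seed--seed edges, verify the uniform bound via Assumption~\ref{assump:Bdd-norm-covariates}, compute both matrix variance statistics to get the proxy $V$, and invoke the rectangular Matrix Bernstein inequality to obtain the $(d+1)$ prefactor. Your explicit remark that $M_1$ may be taken $\geq 1$ so that $M_1 C_d$ dominates the almost-sure bound is in fact a point the paper's own proof glosses over (its derivation produces $C_d$ and $C_d^2\sum P_{ij}(1-P_{ij})$ without the $M_1$ factors appearing in the stated bound), so your write-up is, if anything, slightly more careful on that front.
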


\begin{proof} We use the matrix Bernstein inequality 
\citep{MAL-048}
to obtain the bound.
The  gradient of $L(\theta)$ evaluated at $\theta^\ast$, as defined in Equation~\eqref{eq:gradient-lossfunction},  is given by
$$\nabla L(\theta^\ast)= \sum_{(i,j)\in \widetilde{\mathcal{S}}} -\left(\widetilde{B}_{ij} - \mu(X^\top_{ij} \theta^\ast) \right)X_{ij}.$$
Define $W_{ij}\in\real^d$ as  
$W_{ij}= \left(\widetilde B_{ij} - \mu(X^\top_{ij} \theta^\ast) \right)X_{ij}$ for each $(i,j)\in \widetilde{\mathcal{S}}$.
 Note that for a vector, the Euclidean norm is the same as the spectral norm. Hence, below we verify the conditions required for the matrix Bernstein inequality.
\begin{itemize}
    \item For each $i, j\in\mS$, $i\neq j$, $W_{ij}$ is a mean zero random variable. That is, 
    \begin{align*}
        \e[W_{ij}] = & \left(  \e[\widetilde{B}_{ij}] - \mu(X_{ij}\theta^\ast)\right)X_{ij} = 0,
    \end{align*}
     since the expectation is taken conditional on $X$, and  $\e[\widetilde{B}_{ij}] = \e[B_{ij}] = P_{ij} = \mu(X^\top_{ij}\theta^\ast)$ for $i,j \in \mathcal{S}$.
\item For each $i, j\in\mS, i\neq j,$ the norm of  $W_{ij}$ is bounded. That is,
\begin{align*}
        \|W_{ij}\| & = 
        \left\|\left(\widetilde B_{ij} - \mu(X_{ij}^\top\theta^\ast) \right)X_{ij} \right\|
        \\
        & = |B_{ij} - P_{ij}|   \|X_{ij}\| \\
        & \leq C_d  \max \left\{ |1-P_{ij}|, |-P_{ij}| \right\} \leq 
        C_d,
    \end{align*}
    where the upper bound holds by Assumption ~\ref{assump:Bdd-norm-covariates}.

\item Finally we define $v$ as follows,
    \begin{align*}
        v& := \max\left\{ \left\|\e\left[\sum_{(i,j)\in \widetilde{\mathcal{S}}} W_{ij}W_{ij}^\top \right]\right\|, \left\|\e\left[\sum_{(i,j)\in \widetilde{\mathcal{S}}}W_{ij}^\top W_{ij}\right\| \right]\right\}\\
           & = \max \left\{ \left\| \e \left[ \sum_{(i,j)\in \widetilde{\mathcal{S}}}  \left(\widetilde{B}_{ij} - P_{ij} \right)^2 X_{ij}X_{ij}^\top\right]\right\|, \left\| \e\left[ \sum_{(i,j)\in \widetilde{\mathcal{S}}}  \left(\widetilde{B}_{ij} - P_{ij} \right)^2X_{ij}^\top X_{ij}\right]\right\| \right\}.
    \end{align*}
    The second term in the expression satisfies
\begin{align*}
      \left\| \e \left[ \sum_{(i,j)\in \widetilde{\mathcal{S}}}  \left(\widetilde{B}_{ij} - P_{ij} \right)^2 X_{ij}^\top X_{ij}\right]\right\| 
 & = \left\|   \sum_{(i,j)\in \widetilde{\mathcal{S}}} X_{ij}^\top X_{ij} P_{ij}\left(1 - P_{ij} \right) \right\|\\
 &\leq C_d^2 \sum_{(i,j)\in \widetilde{\mathcal{S}}}  P_{ij} (1 - P_{ij})= V.
\end{align*}
Meanwhile, the first term satisfies
\begin{align*}
     \left\| \e \left[ \sum_{(i,j)\in \widetilde{\mathcal{S}}} \left(\widetilde{B}_{ij} - P_{ij} \right)^2 X_{ij} X_{ij}^\top\right]\right\| & = \left\|   \sum_{i,j \in \mathcal{S}} P_{ij}\left(1 - P_{ij}\right) X_{ij}X_{ij}^\top   \right\|\\
 & \leq  \sum_{(i,j)\in \widetilde{\mathcal{S}}} P_{ij}\left(1 - P_{ij}\right)  \left\| X_{ij}X_{ij}^\top   \right\|\\
& \leq C^2_d  \sum_{(i,j)\in \widetilde{\mathcal{S}}}  P_{ij} (1 - P_{ij}) =V,
\end{align*}
 
by using the fact that, $X_{ij}$ being a vector, the spectral norm of $\|X_{ij} X_{ij}^\top\|$ is the same as the Euclidean norm of $\|X_{ij}\|^2 \leq C^2_d$. Note that $V$, being the sum of at most $s(s-1)/2$ many different quantities, is of at most order $s^2$. 
\end{itemize}

Under the above conditions, the matrix Bernstein inequality (Theorem 1.6.2 of \cite{MAL-048}) can be expressed in the following form:

\begin{align*}\label{eq: MatrixBern1}
\begin{split}
     \mathbbm{P} \left( \| \nabla L(\theta)\| > \epsilon\right) & \leq (d+1) \exp{\left( \frac{-\frac{\epsilon^2}{2}}{V + \frac{C_d\epsilon}{3}} \right)} \hspace{5pt} \text{for all} \hspace{5pt} \epsilon \geq 0.
\end{split}
\end{align*}

\end{proof}

With these lemmas, we are now ready to proof Theorem~\ref{thm:estimation-error-P}.

\begin{proof}[Proof of Theorem~\ref{thm:estimation-error-P}]
By a second-order Taylor expansion, there exists some $\widetilde{\theta}$ between $\theta^\ast$ and $\widehat{\theta}$ such that
$$L(\widehat{\theta}) = L(\theta^\ast) + (\widehat{\theta} - \theta^\ast)^\top \nabla L(\theta^\ast) +  \frac{1}{2}(\widehat{\theta} - \theta^\ast)^\top\nabla^2 L(\widetilde{\theta})  (\widehat{\theta} - \theta^\ast).$$
Since $L(\widehat{\theta}) \leq L(\theta)$ for any  $\theta\in\real^d$, 
\begin{align*}
    0 & \geq L(\widehat{\theta)} - L(\theta^\ast)\\
    & =  (\widehat{\theta} - \theta^\ast)^\top \nabla L(\theta^\ast) +  \frac{1}{2}(\widehat{\theta} - \theta^\ast)^\top\nabla^2 L(\widetilde{\theta})  (\widehat{\theta} - \theta^\ast).
\end{align*}
Given $c\in(0,1)$, let $\kappa= \frac{(1-c)\mu'_{\min}}{M_2C_d}$ as defined in Lemma~\ref{lemma:glm-restricted-strong-convexity}. Consider the event
$$\mathcal{E} = \{\|\widehat{\theta}-\theta^\ast\| \leq \kappa\}. $$ 
Under the event $\mathcal{E}$, by Lemma~\ref{lemma:glm-restricted-strong-convexity} and Assumption~\ref{assump:eigenval-hessian}, we obtain,

\begin{equation*}
    \nabla L(\theta^\ast)^\top (\widehat\theta - {\theta^\ast}) + \frac{CL_s s^2}{2}\| \theta^\ast - \hat{\theta} \|^2 \leq 0. \hspace{5pt} 
\end{equation*}
Following the application of the Cauchy-Schwarz inequality, the upper bound on the norm difference of $\widehat{\theta}$ and $\theta^\ast$ takes the following form
\begin{equation}\label{eq: UpperBddTheta}
    \frac{CL_s s^2}{2}\left\|\theta^\ast - \hat{\theta} \right\|^2 \leq \left\|\nabla L(\theta^\ast)\right\| \left\|\theta^\ast - \hat{\theta}\right\| \implies 
        \left \| \theta^\ast - \hat{\theta}\right \| \leq \frac{2}{CL_s s^2} \left\|\nabla L(\theta^\ast)\right \|.
\end{equation}
Note that, given the graph $A$, the node and edge covariates  
$Y$ and $Z$, respectively,
the error in estimating $P_{ij}$ by $\widehat{P}_{ij}$ solely depends on the error in estimating $\theta^\ast$ by $\widehat{\theta}$. 
Therefore, using Equation~\eqref{eq: UpperBddTheta} we can obtain an upper bound for the absolute error regarding $P_{ij}$. In the following lemma, we demonstrate the upper bound for $\left|\widehat{P}_{ij} - P_{ij}\right|$. Note that, $\widehat{P}_{ij} \equiv P_{ij}(\widehat{\theta})$ and $P_{ij} \equiv P_{ij}(\theta^\ast)$. Hence, using Taylor series expansion, we have,
$$\widehat{P}_{ij} = P_{ij} + (\widehat{\theta} - \theta^\ast)^\top \nabla P_{ij}(\bar{\theta}),$$
for some $\bar{\theta}$ between $\theta^\ast$ and $\widehat{\theta}$. Hence, 
\begin{equation}\label{eq: UpperBddP_Step1}
    \begin{split}
        \left| \widehat{P}_{ij} - P_{ij} \right| 
        & \leq \left\| (\widehat{\theta} - \theta^\ast)^\top \nabla P_{ij}(\bar{\theta}) \right\| 
        \leq \left\| \theta^\ast - \widehat{\theta} \right\| \left\| \nabla P_{ij}(\bar{\theta})\right\| ,
    \end{split}
\end{equation}
where the rightmost inequality follows from the Cauchy-Schwarz inequality. Using the fact that $\nabla P_{ij}(\bar\theta) = \mu'(X_{ij}^\top\bar{\theta}) X_{ij}$, we can rewrite Equation~\eqref{eq: UpperBddP_Step1} as  $ \left| \widehat{P}_{ij} - P_{ij} \right| \leq \left\|  \theta^\ast - \widehat{\theta} \right\|  |\mu'(X_{ij}^\top\bar{\theta})|\left\| X_{ij} \right\|$.  Moreover, according to Assumption~\ref{assump:Bdd-norm-covariates} the Euclidean norm of the vectors $X_{ij}$'s is bounded by $C_d$, and by Assumption~\ref{assumption:linkfunction}, the derivative of $\mu$ is bounded by $M_1$. 
Hence,
    \begin{equation*}
        \left| \widehat{P}_{ij} - P_{ij} \right| \leq M_1C_d \left\| \widehat{\theta} - \theta^\ast \right\|. 
    \end{equation*}
    Note that the above bound holds for every $i,j\in[n], i < j$, and hence, taking the maximum over this set, we get
        \begin{equation*}
        \max_{i,j\in[n]:j>i} \left| \widehat{P}_{ij} - P_{ij} \right| \leq M_1C_d \left\| \widehat{\theta} - \theta \right\|. 
    \end{equation*}
    Further, using  the inequality in Equation~\eqref{eq: UpperBddTheta}, we obtain 
    \begin{equation*}
         \max_{i,j\in[n]:j>i}\left| \widehat{P}_{ij} - P_{ij} \right| \leq \frac{2 M_1 C_d}{CL_s s^2} \left\| \nabla L(\theta^\ast) \right\|.
    \end{equation*}
    The above inequality holds under the event $\mathcal{E}$. For a given $\epsilon>0$, using the law of total probability, we can obtain
\begin{align}
    \mathbbm{P} \left( \max_{i,j\in[n]:j>i} \left| \widehat{P}_{ij} - P_{ij} \right| > \epsilon\right) & \leq \mathbbm{P} \left( \left\{\max_{i,j\in[n]:j>i} \left| \widehat{P}_{ij} - P_{ij} \right| > \epsilon\right\}\cap \mathcal{E}\right) + \mathbbm{P}(\mathcal{E}^c) \nonumber\\
    & \leq  \mathbbm{P} \left( \left\{\frac{2 M_1 C_d}{CL_s s^2} \left\| \nabla L(\theta^\ast) \right\| > \epsilon\right\}\cap \mathcal{E}\right) + \mathbbm{P}(\mathcal{E}^c) \nonumber\\
     & \leq  \mathbbm{P} \left( \frac{2 M_1 C_d}{CL_s s^2} \left\| \nabla L(\theta^\ast) \right\| > \epsilon\right) + \mathbbm{P}(\mathcal{E}^c).\label{eq:theorem-proof-P_ij>epsilon}
\end{align}
The gradient $\nabla L(\theta^\ast)$ is random.
Therefore, using Lemma~\ref{UpperBddP_probabilistic_lemma}, we observe that, 
\begin{align}\label{eq: UpperBddP_probabilistic_P}
      \mathbbm{P} \left( \frac{2 M_1 C_d}{CL_s s^2} \left\| \nabla L(\theta^\ast) \right\| > \epsilon\right)
      & \leq (d+1) \exp{\left( \frac{-\frac{C^2 L^2_s s^4\epsilon^2}{8C_d^2M_1^2}}{V + \frac{CL_s s^2\epsilon}{6}} \right)}\nonumber\\
      & \leq (d+1) \exp{\left( \frac{-\frac{CL_s s^2\epsilon^2}{8C_d^2M_1^2}}{\frac{V}{CL_s s^2} + \frac{\epsilon}{6}} \right)}.
\end{align}
Specific choices of $\epsilon$ as a function of the size of the seeded set $\mathcal{S}$ ensure the right-hand side of Equation~\eqref{eq: UpperBddP_probabilistic_P} decays polynomially as a function of $s$. In particular, the choice of $\epsilon = c_1\frac{\sqrt{\log s}}{L_s^{1/2} s}$ for an appropriate choice of $c_1$ ensures that the right-hand side satisfies
\begin{align*}
    (d+1) \exp{\left( \frac{-\frac{C^2L ^2_s s^4 \epsilon^2}{8C_d^2M_1^2}}{\frac{V}{CL_s s^2} + \frac{\epsilon}{6}} \right)} \leq  (d+1) \exp{\left( \frac{-\frac{c_1C L_s s^2}{8C_d^2M_1^2\rho s^2} \log(s)}{\frac{V}{CL_s s^2} + \frac{\epsilon}{6}} \right)} \leq \frac{d+1}{s}.
\end{align*}
The last inequality uses the fact that $V = C_d^2 M_1^2 \frac{s(s-1)}{2} \rho_s$ and $L_s \geq \rho_s$ by Assumption~\ref{assump:eigenval-hessian}.
Hence, we have that
\begin{align}\label{eq:upperbound-pij}
      \mathbbm{P}\left( \max_{i, j\in [n]: j>i} \left| \widehat{P}_{ij} - P_{ij} \right| > c_1  \frac{\sqrt{\log s}}{L_s^{1/2} s} \right)
     \leq \frac{d+1}{s} + \mathbbm{P}(\mathcal{E}^c).
\end{align}
On the other hand, using Lemma~\ref{lemma:thetahat-close-to-theta}, we see that there exists a constant $C'>0$ that only depends on the other constants in the assumptions, such that
\begin{align}\label{eq:proofthm-boundforE}
    \mathbbm{P}(\mathcal{E}^c) & \leq (d+1) \exp \left( -C' \lambda_{\min}(\nabla^2 L(\theta^\ast)) \right)\nonumber\\
    & \leq (d+1) \exp \left( -C' C L_s s^2 \right)\nonumber\\
    & \leq (d+1) \exp(-C'' L_ss^2) \leq (d+1) \exp( -  \log(s)),
\end{align}
under the assumption that $L_s \geq  \log(s)/(C''s^2)$. Therefore, combining Equations~\eqref{eq:upperbound-pij} and \eqref{eq:proofthm-boundforE}, we obtain that
\begin{equation*}
    \mathbbm{P}\left( \max_{i, j\in [n]: j>i} \left| \widehat{P}_{ij} - P_{ij} \right| > c_1  \frac{\sqrt{\log s}}{L_s^{1/2} s} \right) \leq \frac{2(d+1)}{s}.
\end{equation*}
\end{proof}

\subsection{Proof of Theorem~\ref{thm:consistency}}

Similar to the previous theorem, before proving Theorem~\ref{thm:consistency}, we introduce some lemmas required in the subsequent proof. 
We define the functions $M_P(\cdot), \hat{M}_P(\cdot): \Pi_n \rightarrow \mathbbm{R}$ using the Frobenius norm as 
\begin{equation*}
    M_P(\Psi^*) = \|P - \Psi^*P{\Psi^*}^\top\|_F^2,
\end{equation*}
\begin{equation*}
    \hat{M}_P(\Psi^*) = \|P - \Psi^*B{\Psi^*}^\top\|_F^2 - \|P - B\|_F^2,
\end{equation*}
for  $\Psi^* \in \Pi_n$. The function $M_P(\Psi^*)$ computes the difference between a matrix $P$ and a permuted version of it. Meanwhile, the function $\widehat{M}_P(\cdot)$ is used to compare the value of the loss function at $P$ and $\hat{P}$, that is,
\begin{equation*}\label{eq: MQ_quantities}
     \hat{M}_P(Q) = \|P - QBQ^\top\|_F^2 - \|P - B\|_F^2,
\end{equation*}
\begin{equation*}
    \hat{M}_{\hat{P}}(Q) = \|\hat{P} - QBQ^\top\|_F^2 - \|\hat{P} - B\|_F^2.
\end{equation*} 
Observe that the value of $M_P(Q)$ is the value of the loss function in Equation~\eqref{eq: QuadOptim-Expectation} when $P$ is known and $B$ is substituted by its expectation. Meanwhile, $\widehat{M}_{\widehat{P}}(Q)$ computes the difference between the loss function evaluated at $Q$ and $M_{\widehat{P}}(Q^\ast)$, i.e., the loss function evaluated at the true permutation matrix $Q^\ast$. The following lemma shows the relationship between these quantities. 
   
\begin{lemma}\label{HatM_Phat_and_P_lemma}
The quantities $M_P(Q), \hat{M}_{\hat{P}}(Q)$ and $\hat{M}_P(Q)$ satisfy
\begin{align*}
     \hat{M}_P(Q) = M_P(Q) - \Delta_1, 
\end{align*}
\begin{equation*}
    \hat{M}_{\hat{P}}(Q) = \hat{M}_P(Q) + \Delta_2,
\end{equation*}
where
\begin{equation}
    \Delta_1 = 4 \sum_{(i,j)\in\mathcal{Q}}\left( [Q B Q^\top]_{ij} - [Q P Q^\top]_{ij} \right)\left( P_{ij} - [QPQ^\top]_{ij} \right),\label{eq:def_delta_1}
\end{equation}
\begin{equation}
    \Delta_2 = 4 \sum_{(i,j)\in\mathcal{Q}}(\hat{P}_{ij} - P_{ij})\left( B_{ij} - [Q B Q^\top]_{ij}\right),\label{eq:def_delta_2}
\end{equation}
where $\mathcal{Q}$ is defined in Equation~\eqref{eq:set-Q}.
\end{lemma}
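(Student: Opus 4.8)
The plan is to prove both identities by direct expansion of the squared Frobenius norms, using two elementary structural facts. Write $P_Q = QPQ^\top$ and $B_Q = QBQ^\top$ for brevity, and let $\langle M, N\rangle = \sum_{i,j}M_{ij}N_{ij}$ denote the Frobenius inner product, so that $\|M\|_F^2 = \langle M, M\rangle$. The first fact is that a permutation preserves the Frobenius norm, i.e.\ $\|QMQ^\top\|_F = \|M\|_F$; in particular $\|P_Q - B_Q\|_F^2 = \|Q(P-B)Q^\top\|_F^2 = \|P-B\|_F^2$. The second is a localization property: since $Q$ only permutes the rows and columns indexed by $\mathcal{K} = \{i : Q_{ii}=0\}$, one has $[QMQ^\top]_{ij} = M_{ij}$ whenever $(i,j)\notin\mathcal{Q}$, so every difference $M_{ij} - [QMQ^\top]_{ij}$ vanishes off $\mathcal{Q}$.

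For the first identity I would split $P - B_Q = (P - P_Q) + (P_Q - B_Q)$ and expand the square,
\begin{equation*}
\|P - B_Q\|_F^2 = \|P - P_Q\|_F^2 + 2\langle P - P_Q,\ P_Q - B_Q\rangle + \|P_Q - B_Q\|_F^2.
\end{equation*}
The first term on the right is $M_P(Q)$, and by the norm-invariance fact the last term equals $\|P-B\|_F^2$, which cancels the subtracted term in the definition of $\hat{M}_P(Q)$. Hence $\hat{M}_P(Q) = M_P(Q) + 2\langle P - P_Q,\ P_Q - B_Q\rangle$, and it remains to identify the cross term with $-\Delta_1$. By the localization property only indices in $\mathcal{Q}$ and their transposes contribute to this inner product, and by symmetry of all matrices involved the sum over ordered off-diagonal pairs is twice the sum over $\{(i,j): i>j\}$ defining $\mathcal{Q}$; rewriting $P_Q - B_Q = -(B_Q - P_Q)$ then reproduces both the factor $4$ and the sign in $\Delta_1$.

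For the second identity I would instead form the difference $\hat{M}_{\hat{P}}(Q) - \hat{M}_P(Q)$ and expand each squared norm around $P$ with perturbation $\hat{P} - P$:
\begin{align*}
\|\hat{P} - B_Q\|_F^2 - \|P - B_Q\|_F^2 &= 2\langle P - B_Q,\ \hat{P} - P\rangle + \|\hat{P} - P\|_F^2,\\
\|\hat{P} - B\|_F^2 - \|P - B\|_F^2 &= 2\langle P - B,\ \hat{P} - P\rangle + \|\hat{P} - P\|_F^2.
\end{align*}
Subtracting the two lines, the quadratic terms $\|\hat{P} - P\|_F^2$ cancel and the linear terms combine into $2\langle B - B_Q,\ \hat{P} - P\rangle$. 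Since $B_{ij} - [B_Q]_{ij}$ vanishes off $\mathcal{Q}$, this inner product localizes to $\mathcal{Q}$, and the same symmetry argument as before yields $\Delta_2 = 4\sum_{(i,j)\in\mathcal{Q}}(\hat{P}_{ij} - P_{ij})(B_{ij} - [B_Q]_{ij})$, which is exactly the relation $\hat{M}_{\hat{P}}(Q) = \hat{M}_P(Q) + \Delta_2$.

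The computation is elementary throughout, so the main care lies in the bookkeeping. The factor $4$ in $\Delta_1$ and $\Delta_2$ arises from the factor $2$ in the binomial cross term multiplied by the factor $2$ incurred when an inner product over all ordered off-diagonal index pairs is rewritten as a sum over $\mathcal{Q}$ (indexed by $i>j$). The one step that needs genuine justification beyond mere expansion is the localization claim $[QMQ^\top]_{ij} = M_{ij}$ for $(i,j)\notin\mathcal{Q}$: this is what both restricts the cross terms to $\mathcal{Q}$ and, in the first identity, validates the cancellation of $\|P-B\|_F^2$, making the lemma's sums over $\mathcal{Q}$ (rather than over all index pairs) correct.
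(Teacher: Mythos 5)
Your proof is correct and follows essentially the same route as the paper's: both identities are obtained by expanding the squared Frobenius norms around $P$ (respectively $QPQ^\top$), invoking permutation invariance of the norm and the localization of differences to $\mathcal{Q}$ together with symmetry to produce the factor $4$. Your organization of the second identity as a difference of two expansions (so the $\|\hat{P}-P\|_F^2$ terms cancel explicitly) is a cosmetic variant of the paper's direct expansion, not a different argument.
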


\begin{proof}
First, we prove the relationship between $\hat{M}_P(Q)$ and $M_P(Q)$ as follows,
\begin{equation*}
    \begin{split}
        \hat{M}_P(Q) & = \| P - Q B Q^\top \|_F^2 - \|P - B\|_F^2 \\
        & = \|\left(P - QPQ^\top\right) + \left(QPQ^\top - QBQ^\top\right)\|_F^2 - \|P - B\|_F^2\\
        & = \|P - QPQ^\top\|_F^2 -2 \sum_{i,j}\left( [Q B Q^\top]_{ij} - [Q P Q^\top]_{ij} \right)\left( P_{ij} - [QPQ^\top]_{ij} \right) \\
        & \ \ \ \ + \|Q(P -B)Q^\top\|_F^2 - \|P-B\|_F^2\\
        & = M_P(Q) - \Delta_1,
    \end{split}
\end{equation*}
where we used the fact that $\|Q(P-B)Q^\top\|_F^2 = \|P-B\|_F^2$ and that
$$\sum_{i,j}\left( [Q B Q^\top]_{ij} - [Q P Q^\top]_{ij} \right)\left( P_{ij} - [QPQ^\top]_{ij} \right) = 2 \sum_{(i,j)\in\mathcal{Q}}\left( [Q B Q^\top]_{ij} - [Q P Q^\top]_{ij} \right)\left( P_{ij} - [QPQ^\top]_{ij} \right), $$
since $P_{ij} = [QPQ^\top]_{ij}$ for $(i,j)\notin \mathcal{Q}$. $i>j$.

Now, we provide the connection between $\hat{M}_{\hat{P}}(Q)$ and $\hat{M}_P(Q)$.
\begin{equation*}
    \begin{split}
        \hat{M}_{\hat{P}}(Q) & = \| \hat{P} - Q B Q^\top \|_F^2 - \|\hat{P} - B\|_F^2 \\
       & = \|(\hat{P} - P) + (P - Q B Q^\top)\|_F^2 - \| (\hat{P} - P) + (P - B) \|_F^2 \\
       & = \|P - Q B Q^\top\|_F^2 + 2\sum_{i,j}(\hat{P}_{ij} - P_{ij})(P_{ij} - [Q B Q^\top]_{ij}) - \|P - B\|_F^2\\
       &\ \ \ \ - 2\sum_{i,j}(\hat{P}_{ij} - P_{ij})(P_{ij} - B_{ij}) \\
       & = \hat{M}_P(Q) + 2 \sum_{i,j}(\hat{P}_{ij} - P_{ij})\left( (P_{ij} - [Q B Q^\top]_{ij} - (P_{ij} - B_{ij}) \right) \\
       & = \hat{M}_P(Q) + 2 \sum_{i,j}(\hat{P}_{ij} - P_{ij})\left(  B_{ij}  - [Q B Q^\top]_{ij} \right)\\
       & = \hat{M}_P(Q) + \Delta_2.
    \end{split}
\end{equation*}
Again, we use the fact that $B_{ij}  = [Q B Q^\top]_{ij} $ for $(i,j)\notin\mathcal{Q}, i>j$, and hence,
$$2 \sum_{i,j}(\hat{P}_{ij} - P_{ij})\left(  B_{ij}  - [Q B Q^\top]_{ij} \right) = 4  \sum_{(i,j)\in\mathcal{Q}}(\hat{P}_{ij} - P_{ij})\left(  B_{ij}  - [Q B Q^\top]_{ij} \right).$$
\end{proof}

From the previous relationships, it holds that $Q^\ast$ is the optimal solution to the problem in \eqref{eq: QuadOptim} whenever $\widehat{M}_{\widehat{P}}(Q)> 0$ for all $Q\neq Q^\ast$. The next lemmas provides a probabilistic upper bound for the event $\left\{\hat{M}_{\hat{P}}(Q) \leq 0\right\}$.

\begin{lemma}\label{ProbAsSumProbs_lemma}Given a permutation matrix $Q\in\Pi_n$, 
the event $\left\{\hat{M}_{\hat{P}}(Q) \leq 0\right\}$ can be written as a subset of a union of events as follows, 
\begin{equation}\label{eq: ProbAsSumProbs_lemma}
     \left\{\hat{M}_{\hat{P}}(Q) \leq 0 \right\} \subseteq  \Theta_{1, Q} \cup \Theta_{2, Q} \cup \Theta_3,
\end{equation}
where $\Theta_{1,Q}, \Theta_{2,Q}$ and $\Theta_3$ are defined as
$$\Theta_{1,Q} = \left\{ \Delta_1 \geq \frac{M_P(Q)}{2} \right\},$$ 
$$\Theta_{2, Q} = \left\{\sum_{(i,j) \in\mathcal{Q}} B_{ij} \geq \frac{M_P(Q)sL_s^{1/2}}{16c_1\sqrt{\log s} }\right\},$$ 
$$\Theta_3 =  \left\{ \max_{i,j} \left|\hat{P}_{ij} - P_{ij} \right| \geq c_1  \frac{\sqrt{\log s}}{s L_s^{1/2}} \right\}.$$
\end{lemma}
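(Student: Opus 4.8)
The plan is to prove the stated set containment through its contrapositive, working outcome by outcome (the claim is a deterministic inclusion of events, so no probabilistic machinery is needed here). Concretely, I will show that on the complementary event $\Theta_{1,Q}^c \cap \Theta_{2,Q}^c \cap \Theta_3^c$ the quantity $\hat{M}_{\hat{P}}(Q)$ is strictly positive, which precludes $\{\hat{M}_{\hat{P}}(Q)\le 0\}$. The starting point is the decomposition from Lemma~\ref{HatM_Phat_and_P_lemma},
$$\hat{M}_{\hat{P}}(Q) = \hat{M}_P(Q) + \Delta_2 = M_P(Q) - \Delta_1 + \Delta_2.$$
It therefore suffices to show that, off the three events, $\Delta_1 < M_P(Q)/2$ and $\Delta_2 > -M_P(Q)/2$, since then $\hat{M}_{\hat{P}}(Q) > M_P(Q) - M_P(Q)/2 - M_P(Q)/2 = 0$. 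The thresholds defining the three events are calibrated so that each contributes exactly half of $M_P(Q)$.

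The bound on $\Delta_1$ is immediate: on $\Theta_{1,Q}^c$ we have $\Delta_1 < M_P(Q)/2$ by the definition of $\Theta_{1,Q}$. All the work lies in bounding $|\Delta_2|$. Starting from the expression for $\Delta_2$ in Equation~\eqref{eq:def_delta_2} and applying the triangle inequality gives
$$|\Delta_2| \le 4\sum_{(i,j)\in\mathcal{Q}} \bigl|\hat{P}_{ij} - P_{ij}\bigr|\,\bigl|B_{ij} - [QBQ^\top]_{ij}\bigr|.$$
On $\Theta_3^c$ each factor $|\hat{P}_{ij}-P_{ij}|$ is at most $c_1\sqrt{\log s}/(sL_s^{1/2})$, which I pull out of the sum. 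Since $B$ is a $0/1$ matrix, $|B_{ij}-[QBQ^\top]_{ij}| \le B_{ij}+[QBQ^\top]_{ij}$, so the remaining sum is controlled by $\sum_{(i,j)\in\mathcal{Q}}(B_{ij}+[QBQ^\top]_{ij})$.

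The key step, and the only place beyond routine manipulation, is the combinatorial identity $\sum_{(i,j)\in\mathcal{Q}}[QBQ^\top]_{ij} = \sum_{(i,j)\in\mathcal{Q}}B_{ij}$. This holds because the permutation underlying $Q$ fixes every index outside $\mathcal{K}$ and permutes $\mathcal{K}$ within itself; hence the induced map on index pairs sends the collection of unordered pairs meeting $\mathcal{K}$ (which is exactly $\mathcal{Q}$) bijectively onto itself, and, $B$ being symmetric, the two sums coincide. Consequently $\sum_{(i,j)\in\mathcal{Q}}(B_{ij}+[QBQ^\top]_{ij}) = 2\sum_{(i,j)\in\mathcal{Q}}B_{ij}$, which yields
$$|\Delta_2| \le 8\, c_1 \frac{\sqrt{\log s}}{sL_s^{1/2}} \sum_{(i,j)\in\mathcal{Q}} B_{ij}.$$

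Finally, on $\Theta_{2,Q}^c$ the sum $\sum_{(i,j)\in\mathcal{Q}}B_{ij}$ is strictly below $M_P(Q)\,sL_s^{1/2}/(16\,c_1\sqrt{\log s})$; substituting this into the last display makes the constants and the factors $c_1$, $\sqrt{\log s}$, $s$, $L_s^{1/2}$ cancel, giving $|\Delta_2| < M_P(Q)/2$ and in particular $\Delta_2 > -M_P(Q)/2$. Combining this with $\Delta_1 < M_P(Q)/2$ in the decomposition establishes $\hat{M}_{\hat{P}}(Q) > 0$ on $\Theta_{1,Q}^c \cap \Theta_{2,Q}^c \cap \Theta_3^c$, which is the required inclusion. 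I expect the identity $\sum_{(i,j)\in\mathcal{Q}}[QBQ^\top]_{ij} = \sum_{(i,j)\in\mathcal{Q}}B_{ij}$, together with keeping the ordering convention $i>j$ and the symmetry of $B$ straight, to be the only delicate point; everything else is a direct consequence of the definitions of the three events.
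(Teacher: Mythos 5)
Your proof is correct and is essentially the paper's argument in contrapositive form: both rest on the decomposition $\hat{M}_{\hat{P}}(Q) = M_P(Q) - \Delta_1 + \Delta_2$ from Lemma~\ref{HatM_Phat_and_P_lemma}, the bound $|\Delta_2| \le 8\max_{i,j}|\hat{P}_{ij}-P_{ij}|\sum_{(i,j)\in\mathcal{Q}}B_{ij}$, and the same calibration so that $\Delta_1$ and $\Delta_2$ are each controlled by $M_P(Q)/2$, with the complements of $\Theta_{1,Q}$, $\Theta_{2,Q}$, $\Theta_3$ supplying exactly the inequalities the paper splits into a union of bad events. As a minor bonus, your explicit bijection argument for $\sum_{(i,j)\in\mathcal{Q}}[QBQ^\top]_{ij} = \sum_{(i,j)\in\mathcal{Q}}B_{ij}$ cleanly justifies the step that the paper's displayed chain states with a sign typo (it writes $|B_{ij}-[QBQ^\top]_{ij}| \le |B_{ij}| - |[QBQ^\top]_{ij}|$ where the right-hand side should be a sum).
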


\begin{proof}
Using Lemma~\ref{HatM_Phat_and_P_lemma}, we can conclude that $\hat{M}_{\hat{P}}(Q) = M_P(Q) - \Delta_1 + \Delta_2$. From the definition of $\Delta_2$ in Equation~\eqref{eq:def_delta_2}, we have
\begin{align*}
    \begin{split}
        |\Delta_2| = 2 \left|\sum_{i,j} \left( \hat{P}_{ij} - P_{ij} \right) \left( B_{ij} - [QBQ^\top]_{ij} \right)\right| & \leq 2 \max_{i,j} \left| \hat{P}_{ij} - P_{ij} \right| \sum_{i, j} \left|B_{ij} - [Q B Q^\top]_{ij} \right| \\
        & \leq 8 \max_{i,j} \left| \hat{P}_{ij} - P_{ij} \right| \sum_{(i,j)\in\mathcal{Q}} B_{ij}.
    \end{split}
\end{align*}
The last inequality comes from the fact that $B_{ij} = [QBQ^\top]_{ij}$ for all $i,j$ with $i\in\mathcal{K}^C$ and $j\in\mathcal{K}^C$, and hence, using the definition of $\mathcal{Q}$ and the symmetry of the matrix $B$,
\begin{align*}
    \sum_{i, j} \left| B_{ij} - [QBQ^\top]_{ij}\right| & = \sum_{j\in\mathcal{K}}\sum_{i\in\mathcal{K}^C}\left| B_{ij} - [QBQ^\top]_{ij}\right| + \sum_{i\in\mathcal{K}}\sum_{j\in\mathcal{K}^C}  \left| B_{ij} - [QBQ^\top]_{ij}\right| \\
    & \quad\quad + \sum_{j\in\mathcal{K}}\sum_{i\in\mathcal{K}}  \left| B_{ij} - [QBQ^\top]_{ij}\right|\\ 
    & = 2\sum_{(i,j)\in\mathcal{Q}}\left| B_{ij} - [QBQ^\top]_{ij}\right|\\
    & \leq 2\sum_{(i,j)\in\mathcal{Q}}\left(\left| B_{ij}\right| - \left|[QBQ^\top]_{ij}\right|\right)\\
    & \leq 4\sum_{(i,j)\in\mathcal{Q}}B_{ij}.\\
\end{align*}
 Combining these, in the following, we write the event  $\left\{\hat{M}_{\hat{P}}(Q) \leq 0 \right\}$ as union of multiple events:
\begin{equation*}
    \begin{split}
       \left\{ \hat{M}_{\hat{P}}(Q) \leq 0 \right\} & = \left\{ M_P(Q) - \Delta_1 + \Delta_2 \leq 0 \right\}\\
       & \subseteq \left\{-\Delta_1  + \frac{M_P(Q)}{2} \leq 0\right\} \cup \left\{\Delta_2  + \frac{M_P(Q)}{2} \leq 0\right\}\\
       & \subseteq \left\{ \Delta_1\geq \frac{M_P(Q)}{2} \right\} \cup \left\{ |\Delta_2| \geq \frac{M_P(Q)}{2} \right\}\\
       & \subseteq \left\{ \Delta_1 \geq \frac{M_P(Q)}{2} \right\} \cup \left\{\max_{i,j} \left| \hat{P}_{ij} - P_{ij} \right| \sum_{(i,j)\in\mathcal{Q}} B_{ij} \geq \frac{M_P(Q)}{16}  \right\} \\
       & \subseteq \left\{ \Delta_1 \geq \frac{M_P(Q)}{2} \right\} \cup \left\{ \sum_{(i,j)\in\mathcal{Q}} B_{ij} \geq \frac{M_P(Q) s L_s^{1/2}}{16c_1  \sqrt{\log s}} \right\}\\
       & \ \ \ \ \cup \left\{ \max_{i,j} \left| \hat{P}_{ij} - P_{ij}\right| \geq c_1 \frac{\sqrt{\log s}}{L_s^{1/2}s}\right\}\\
       & = \Theta_{1, Q} \cup \Theta_{2, Q} \cup \Theta_3.
\end{split}
\end{equation*}
\end{proof}

In the following two lemmas we provide individual upper bounds for the probability of the events $\Theta_{1, Q}$ and $\Theta_{2, Q}$. Specifically, depending on $M_P(Q)$, Lemma~\ref{ProbtermFirst_lemma} provides an upper bound for 
$\mathbbm{P}\left(\Delta_1 > \frac{M_P(Q)}{2} \right)$.

\begin{lemma}\label{ProbtermFirst_lemma}
For the event $\Theta_{1, Q}$ as defined in Lemma~\ref{ProbAsSumProbs_lemma}, its probability can be upper bounded as
\begin{equation}\label{eq: ProbtermFirst}
        \mathbbm{P}\left(\Delta_1 \geq \frac{M_P(Q)}{2} \right) \leq \exp{\left( -\frac{3}{400} M_P(Q) \right)}.
\end{equation}
\end{lemma}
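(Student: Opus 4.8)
The plan is to recognize $\Delta_1$ as a sum of independent, centered random variables and to apply Bernstein's inequality, with the total variance controlled by $M_P(Q)$ itself. Writing $c_{ij} = P_{ij} - [QPQ^\top]_{ij}$, which is deterministic given $A,Y,Z$, and $u_{ij} = [Q(B-P)Q^\top]_{ij}$, the expression in Equation~\eqref{eq:def_delta_1} becomes $\Delta_1 = 4\sum_{(i,j)\in\mathcal{Q}} c_{ij}\, u_{ij}$. First I would note that if $\pi$ denotes the permutation associated with $Q$, then $u_{ij} = (B-P)_{\pi(i),\pi(j)}$ is a single centered off-diagonal entry of $B-P$. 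Since $\pi$ is a bijection and the constraint $i>j$ makes $\mathcal{Q}$ range over distinct unordered pairs, the variables $\{u_{ij}:(i,j)\in\mathcal{Q}\}$ are distinct entries of the symmetric array $B-P$ with independent off-diagonal entries, hence mutually independent and mean zero conditionally on $A,Y,Z$.

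Next I would control the summands $w_{ij} := 4 c_{ij} u_{ij}$. Because every entry of $P$ lies in $[0,1]$, we have $|c_{ij}|\le 1$ and $|u_{ij}|\le 1$, so $|w_{ij}|\le 4$ almost surely, and $\mathbbm{E}[w_{ij}] = 0$. For the variances, $\mathrm{Var}(w_{ij}) = 16\, c_{ij}^2\, P_{\pi(i),\pi(j)}(1-P_{\pi(i),\pi(j)}) \le 4\, c_{ij}^2$. The key identity is that the only nonzero entries of $P - QPQ^\top$ are those indexed by $\mathcal{Q}$ together with their symmetric counterparts, so by symmetry
\[
\sum_{(i,j)\in\mathcal{Q}} c_{ij}^2 = \frac{1}{2}\,\|P - QPQ^\top\|_F^2 = \frac{1}{2}\,M_P(Q),
\]
and therefore the total conditional variance obeys $\sum_{(i,j)\in\mathcal{Q}}\mathrm{Var}(w_{ij}) \le 2\,M_P(Q)$.

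Finally I would apply the one-sided Bernstein inequality to the independent, mean-zero, bounded variables $w_{ij}$ with threshold $t = M_P(Q)/2$. With a variance proxy $V \le 2M_P(Q)$ and uniform bound $b = 4$, Bernstein gives $\mathbbm{P}(\Delta_1 \ge t) \le \exp\left(-\frac{t^2/2}{V + bt/3}\right)$, and substituting $t = M_P(Q)/2$ makes the denominator a constant multiple of $M_P(Q)$, so the exponent becomes $-c\,M_P(Q)$ for an absolute constant $c$. Tracking the (deliberately crude) constants then yields the stated bound $\exp\left(-\frac{3}{400} M_P(Q)\right)$; in fact the tighter bookkeeping above already gives a smaller exponent, which a fortiori implies the claim.

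The hard part will be the bookkeeping that turns the variance into a \emph{linear} function of $M_P(Q)$: this rests on correctly identifying the independence structure of $\{u_{ij}\}$ and on the Frobenius-norm identity $\sum_{(i,j)\in\mathcal{Q}} c_{ij}^2 = \frac{1}{2}M_P(Q)$. Once this linear scaling is established, the subexponential tail is an immediate consequence of Bernstein; the only remaining care is to pick the constants loosely enough that the Bernstein denominator and the prefactor combine into the claimed value.
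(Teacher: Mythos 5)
Your proposal is correct and follows essentially the same route as the paper: both recognize $\Delta_1$ as a sum of independent, mean-zero, bounded summands indexed by $\mathcal{Q}$ and apply (one-sided) Bernstein's inequality with threshold $t = M_P(Q)/2$ and total variance bounded linearly in $M_P(Q)$, so the exponent becomes a constant multiple of $M_P(Q)$. The only difference is bookkeeping: you use the Bernoulli variance bound $P_{ij}(1-P_{ij})\leq \tfrac14$ and the exact identity $\sum_{(i,j)\in\mathcal{Q}} c_{ij}^2 = \tfrac12 M_P(Q)$ to get variance at most $2M_P(Q)$ and hence $\exp\left(-\tfrac{3}{64}M_P(Q)\right)$, whereas the paper bounds the variance more crudely by $16 M_P(Q)$ and lands exactly on $\exp\left(-\tfrac{3}{400}M_P(Q)\right)$; your tighter bound implies the stated one.
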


\begin{proof}
    Recall from Equation~\eqref{eq:def_delta_1} that 
    $$\Delta_1 
    = 4 \sum_{(i,j)\in\mathcal{Q}}\left( [Q (B-P) Q^\top]_{ij}\right)\left( P_{ij} - [QPQ^\top]_{ij} \right).$$ The above is a sum of independent random variables $U_{ij} = 4[Q(B-P)Q^\top]_{ij},$ for 
    $i>j$. Since $\mathbbm{E}(B) = P$, the variables $U_{ij}$ have zero-mean for all $i, j$. This implies $\mathbbm{E}(\Delta_1) = 0$. Additionally we also have, $-1 \leq P_{ij} - [Q P Q^\top]_{ij} \leq 1$ and $-1 \leq [Q B Q^\top]_{ij} - [Q P Q^\top]_{ij} \leq 1$ which leads to 
    $$\left|\left( [Q B Q^\top]_{ij} - [Q P Q^\top]_{ij} \right)\left( P_{ij} - [QPQ^\top]_{ij} \right)\right| \leq 1.$$ 
    Therefore, by applying the Bernstein's inequality, we have 
\begin{equation}\label{firstterm}
\begin{split}
    \mathbbm{P}\left(\Delta_1 \geq \frac{M_P(Q)}{2} \right) & \leq \exp{\left( \frac{-\frac {1}{2} \frac{M^2_Q(P)}{4}}{\sum_{i,j}\mathbbm{E}\left[ U^2_{ij} \left( P_{ij} - [QPQ^\top]_{ij} \right)^2\right] + \frac 43  \frac{M_P(Q)}{2}} \right)} \\
    & \leq \exp{\left( \frac{- M^2_Q(P)}{ 128 M_P(Q) + \frac{16}{3}  M_P(Q)}\right)} \\
   & =  \exp{\left( -\frac{3}{400} M_P(Q) \right)},
\end{split}
\end{equation}
 where the last inequality holds due to the fact that $[QBQ^\top]_{ij}$ are Bernoulli random variables with mean $[QPQ^\top]_{ij}$ and variance $[QPQ^\top]_{ij}(1 - [QPQ^\top]_{ij})$. Therefore, the following holds,
\begin{align*}
\begin{split}
    &  \mathbbm{E}\left[ U^2_{ij}\left( P_{ij} - [QPQ^\top]_{ij} \right)^2\right] \leq 16\left( P_{ij} - [QPQ^\top]_{ij} \right)^2.
\end{split}
\end{align*}
Hence, it follows that
\begin{align*}
    \sum_{i,j} \mathbbm{E}\left[ U^2_{ij} \left( P_{ij} - [QPQ^\top]_{ij} \right)^2\right] & \leq 16\sum_{i,j} \left( P_{ij} - [QPQ^\top]_{ij} \right)^2\\
    & = 16\|P - QPQ^\top\|_F^2 = 16M_P(Q).
\end{align*}
\end{proof}

Similar to Lemma~\ref{ProbtermFirst_lemma},  Lemma~\ref{ProbtermSecond_lemma} provides an upper bound for the probability of the event $\Theta_{2, Q}$.

\begin{lemma}\label{ProbtermSecond_lemma}
 Under Assumption~\ref{assump:Diff-P-permutedP}, the upper bound of the probability of the event $\Theta_{2, Q}$ is given by the following,
\begin{equation}\label{eq: ProbtermSecond}
\begin{split}
     \mathbbm{P}\left( \sum_{(i,j)\in\mathcal{Q}} B_{ij} \geq \frac{sL_s^{1/2}M_P(Q)}{16c_1\sqrt{\log s}} \right) \leq \exp\left(-\frac{3}{8}\sum_{(i,j)\in\mathcal{Q}} {P_{ij}}\right).
\end{split}
    \end{equation}
\end{lemma}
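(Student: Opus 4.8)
The plan is to view $\sum_{(i,j)\in\mathcal{Q}} B_{ij}$ as a sum of independent Bernoulli random variables and apply Bernstein's inequality, after using Assumption~\ref{assump:Diff-P-permutedP} to guarantee that the threshold sits at least a factor of two above the mean. This mirrors the strategy of Lemma~\ref{ProbtermFirst_lemma}, and the exact constant $\tfrac38$ in the statement is a strong hint that a Bernstein-type bound in the regime ``threshold $\geq 2\times$ mean'' is what is intended.

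First I would set $\mu := \sum_{(i,j)\in\mathcal{Q}} P_{ij}$, which equals $\mathbbm{E}\left[\sum_{(i,j)\in\mathcal{Q}} B_{ij}\right]$ because the entries $\{B_{ij}\}$ are independent $\text{Ber}(P_{ij})$ variables. Writing the threshold as $T := \frac{sL_s^{1/2}M_P(Q)}{16c_1\sqrt{\log s}}$, the second branch of the maximum in Assumption~\ref{assump:Diff-P-permutedP}, namely $M_P(Q)\geq c_2^\ast\frac{\sqrt{\log s}}{L_s^{1/2}s}\sum_{(i,j)\in\mathcal{Q}}P_{ij}$, cancels the factors $s$, $L_s^{1/2}$ and $\sqrt{\log s}$ to give $T\geq \frac{c_2^\ast}{16c_1}\mu$. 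Choosing $c_2^\ast$ large enough relative to $c_1$ (it suffices to take $c_2^\ast\geq 32 c_1$) forces $T\geq 2\mu$, so the deviation $t := T-\mu$ satisfies $t\geq \mu$.

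Next I would apply Bernstein's inequality to the centered sum. Since each summand lies in $[0,1]$ and the total variance is $\sigma^2 = \sum_{(i,j)\in\mathcal{Q}} P_{ij}(1-P_{ij})\leq \mu$, we have
$$\mathbbm{P}\left(\sum_{(i,j)\in\mathcal{Q}} B_{ij}\geq T\right) = \mathbbm{P}\left(\sum_{(i,j)\in\mathcal{Q}}(B_{ij}-P_{ij})\geq t\right)\leq \exp\left(-\frac{t^2/2}{\sigma^2 + t/3}\right).$$
The exponent $\frac{t^2/2}{\sigma^2+t/3}$ is increasing in $t$, so using $t\geq \mu$ together with $\sigma^2\leq \mu$ gives the lower bound $\frac{\mu^2/2}{\mu + \mu/3} = \frac{3}{8}\mu$, which yields the claimed inequality $\exp\left(-\frac{3}{8}\sum_{(i,j)\in\mathcal{Q}}P_{ij}\right)$.

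The only delicate point is the constant bookkeeping: one must verify that the constants $c_1$ from Theorem~\ref{thm:estimation-error-P} and $c_2^\ast$ from Assumption~\ref{assump:Diff-P-permutedP} are compatible, so that $T\geq 2\mu$ holds uniformly over every $k\geq 2$ and every $Q\in\Pi_{n,k}$ with $Q_{ii}=1$ for $i\in\mS$. This is precisely the role of the second term in the maximum of Assumption~\ref{assump:Diff-P-permutedP}, and once $T\geq 2\mu$ is secured the Bernstein step and the monotonicity argument are routine.
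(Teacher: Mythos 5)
Your proposal is correct and follows essentially the same route as the paper's proof: center the sum at its mean, invoke the second branch of Assumption~\ref{assump:Diff-P-permutedP} with $c_2^\ast \geq 32c_1$ to guarantee the deviation $t$ is at least the mean $\mu$, and apply Bernstein's inequality with $\sigma^2 \leq \mu$ to land on the exponent $\tfrac{3}{8}\mu$. The only cosmetic difference is that you argue via monotonicity of the exponent in $t$, while the paper rewrites the exponent as $\xi/\bigl(\tfrac{2\sigma^2}{\xi}+\tfrac{2}{3}\bigr)$ and bounds the ratio directly — these are algebraically equivalent.
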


\begin{proof}
Observe that the entries of $B_{ij}$ on the set $(i,j)\in\mathcal{Q}$ are independent, as $i>j$. Define $$\xi =  \frac{s L_s^{1/2} M_P(Q)}{16 c_1  \sqrt{\log s}} - \sum_{(i,j)\in\mathcal{Q}} P_{ij}.$$ Note that, by Assumption~\ref{assump:Diff-P-permutedP}, $\xi >0$. Hence, by using Bernstein's inequality,
\begin{equation}\label{eq: sumB_termtwo}
\begin{split}\mathbbm{P}\left(\sum_{(i,j)\in\mathcal{Q}} B_{ij} \geq  \frac{s L_s^{1/2} M_P(Q)}{16 c_1  \sqrt{\log s}} \right) & = \mathbbm{P}\left(\sum_{(i,j)\in\mathcal{Q}} B_{ij} \geq \sum_{(i,j)\in\mathcal{Q}} P_{ij} + \xi \right) \\
    & \leq \exp \left( \frac{-\frac 12 \xi^2}{\sum_{(i,j)\in\mathcal{Q}} P_{ij}(1-P_{ij}) + \frac 13 \xi} \right),\\
    & = \exp \left( \frac{- \xi}{\frac{2}{\xi}\sum_{(i,j)\in\mathcal{Q}} P_{ij}(1-P_{ij}) + \frac 23} \right),\\
    & \leq \exp\left(\frac{-\xi}{2 + \frac{2}{3}}\right) = \exp\left(-\frac{3}{8}\xi\right)\\
    & \leq \exp\left(-\frac{3}{8}\sum_{(i,j)\in\mathcal{Q}} {P_{ij}}\right).
\end{split}
\end{equation}
The last two lines follows from Assumption~\ref{assump:Diff-P-permutedP}. In particular, taking $c_2^\ast>32 c_1$, we have
$$M_P(Q) > c_2^\ast\frac{\sqrt{\log s}}{L_s^{1/2}s}\sum_{(i,j)\in\mathcal{Q}} P_{ij},$$
which implies that
\begin{align*}
\xi   =  \frac{s L_s^{1/2} M_P(Q)}{16 c_1  \sqrt{\log s}} - \sum_{(i,j)\in\mathcal{Q}} P_{ij}
& \geq \sum_{(i,j)\in\mathcal{Q}} P_{ij} > 0.
\end{align*}
\end{proof}

Now with these lemmas, we proceed to prove Theorem~\ref{thm:consistency}.

\begin{proof}[Proof of Theorem~\ref{thm:consistency}]
Recall that $\hat{M}_Q(\hat{P})$ is defined as follows,
\begin{equation}\label{eq: hatMQPhat}
    \begin{split}
        \widehat{M}_Q(\widehat{P}) & =  \| \hat{P} -  Q^\prime \widetilde{B} Q^{\prime^\top}\|_F^2 - \| \hat{P} - Q^\ast \widetilde{B} {Q^\ast}^\top\|_F^2, \\
        & = \| \hat{P} - QBQ^\top \|_F^2 - \| \hat{P} - B \|_F^2,
    \end{split}
\end{equation}
where $Q^\prime {Q^\ast}^\top = Q.$
Therefore, without loss of generality, we can consider the simpler case where the true permutation matrix is the identity. The graph matching algorithm described in Algorithm~\ref{alg1} should retrieve the identity matrix as the true permutation ($Q^\ast = I_n$). So, for the algorithm to work, we would like $\|\hat{P} - B\|_F^2$ to be smaller than $\|\hat{P} - Q B Q^\top\|_F^2$ for any permutation matrix $Q\neq I_n$ with $Q_{\mS,\mS}=I_s$ (due to the constraints). Hence we can write the following,
\begin{equation*}
    \begin{split}
        \left\{ \hat{Q} \neq Q^* \right\} & = \left\{ \text{There exists} \hspace{5pt} Q \neq I, Q_{\mS,\mS}=I_s, \hspace{5pt} 
        \text{such that}\hspace{5pt} \hat{M}_Q(P) \leq 0\right\}\\
        & \subseteq \bigcup_{Q \neq I, Q_{\mS,\mS}=I_s} \Theta_{1,Q} \cup \Theta_{2,Q}\cup \Theta_3\\
        & \subseteq  \left( \bigcup_{k=2}^{n-s} \bigcup_{Q \in\Pi_{n,k}} \Theta_{1,Q} \cup \Theta_{2,Q}\right)\cup \Theta_3.
    \end{split}
\end{equation*}
The last line holds by the fact that we can express $\Pi_n$ as $\Pi_n = \{I\}\cup \bigcup_{k=2}^n \Pi_{n,k}$, and the constraint $Q_{\mS,\mS}=I_s$ implies that there are at most $n-s$ nodes permuted by $Q$.
This implies, in terms of probability, 
\begin{equation}\label{eq: hatQneqQ}
\begin{split}
      \mathbbm{P}\left( \hat{Q} \neq Q^* \right) & \leq \mathbbm{P}\left(\left( \bigcup_{k=2}^{n-s} \bigcup_{Q \in\Pi_{n,k}} \Theta_{1,Q} \cup \Theta_{2,Q}\right)\cup \Theta_3\right) \\ 
      &\leq   \sum_{k=2}^{n-s} \sum_{Q\in\Pi_{n,k}}\Big(\mathbbm{P}(\Theta_{1,Q}) + \mathbbm{P}(\Theta_{2,Q})\Big) + \mathbbm{P}(\Theta_3)\\
      & \leq \sum_{k=2}^{n-s} \sum_{Q\in\Pi_{n,k}} \left[\mathbbm{P} \left( \Delta_1 \geq \frac{M_P(Q)}{2} \right) + \mathbbm{P}  \left(\sum_{(i,j)\in\mathcal{Q}} B_{ij } \geq \frac{M_P(Q)sL_s^{1/2}}{16c_1 \sqrt{\log s}}\right) \right] \\
      & \quad\quad + \mathbbm{P}\left(  \max_{i,j} \left| \hat{P}_{ij} - P_{ij} \right| \geq c_1  \frac{\sqrt{\log s}}{L_s^{1/2}s} \right).
\end{split}
\end{equation}

Thus, we need to show that the probability $\widehat{M}_Q(\widehat{P}) \leq 0$ is very small. To achieve this, we would like to upper bound the RHS of Equation~\eqref{eq: hatQneqQ} such that it is negligible with growing graph size $n$. Hence, we upper bound the above mentioned probability term as sum of probability terms obtained using Lemma~\ref{ProbtermFirst_lemma}, Lemma~\ref{ProbtermSecond_lemma} and Theorem~\ref{thm:estimation-error-P}.

To show that the above probability is upper bounded by a term such that it is very small as the graph size increases, we require Assumption~\ref{assump:Diff-P-permutedP} which depends on the number of labels permuted denoted by $k$ out of $n$. Under this assumption, we have the following bounds on the right hand side of Equation~\eqref{eq: hatQneqQ}. This leads to
\begin{align}
    \mathbbm{P}\left( \Delta_1 \geq \frac{M_P(Q)}{2} \right) & \leq \exp \left(-\frac{3}{{400}} M_P(Q) \right)\nonumber\\
    & \leq \exp{\left(-2 k \log n \right)} = \frac{1}{n^{2k}}, \label{eq:Pdelta1}
\end{align}
where we assume $c_2^\ast > 800/3$ in Assumption~\ref{assump:Diff-P-permutedP}. For the second term in Equation~\eqref{eq: hatQneqQ}, we use the fact that
\begin{align}\label{eq: BdProbTerms}
    \mathbbm{P}\left(\sum_{(i,j)\in\mathcal{Q}} B_{ij} \geq \frac{M_P(Q)s L_s^{1/2}}{16c_1  \sqrt{\log s}}\right)  & \leq \exp\left(-\frac{3}{8}\sum_{(i,j)\in\mathcal{Q}} P_{ij}\right)\\ 
   &   \leq \exp\left(-2k\log n \right), \nonumber
\end{align}
where we take $c_1^\ast > 16/3$ in Assumption~\ref{assump:Sparsity}.

Therefore, given graphs $A, \widetilde{B}$ and the covariates, under assumptions \ref{assump:eigenval-hessian}, \ref{assump:Bdd-norm-covariates}, \ref{assumption:linkfunction}, \ref{assump:Sparsity} and \ref{assump:Diff-P-permutedP}, as well as the assumptions in and Theorem~\ref{thm:estimation-error-P}, we have the following using Equations~\eqref{eq:Pdelta1} and \eqref{eq: BdProbTerms} and observing the fact that $|\Pi_{n,k}| \leq n^k$, we have
\begin{equation}
\begin{split}
 \mathbbm{P}(\widehat{Q} \neq Q^*) & 
 \leq \sum_{k=2}^{n-s} \sum_{Q\in\Pi_{n,k}} [\exp (-2 k \log n)+ \exp (-2 k \log n)] + \frac{2(d+1)}{s} \\
 & \leq \sum_{k=2}^{n-s} n^{k}[\exp (-2 k \log n)+ \exp (-2 k \log n)] + \frac{2(d+1)}{s} \\
& = \sum_{k=2}^{n-s} \frac{2}{n^{k}} + \frac{2(d+1)}{s} \\
& \leq \frac{2}{n(n-1)} + \frac{2(d+1)}{s} \\
& = \left(\frac{4}{n^2}\right) + \frac{2(d+1)}{s} \leq 2 \left( \frac{2}{n^2} + \frac{d+1}{s} \right)\leq c \left(\frac{1}{n^2} + \frac{1}{s}\right)`.
\end{split}    
\end{equation}

Hence, $\mathbbm{P}\left(\widehat{Q} \neq Q^* \right) =  o(1)$ as the graph size and the number of seeds grow. 

\end{proof}

\newpage

\subsection{Proof of Proposition~\ref{prop: eigenval-hessian}}

Before proving the following proposition, we define some asymptotic notations for sequences $a_n$ and $b_n$ that we use for the rest of the appendix. $(i)\, a_n \prec b_n \implies a_n = o(b_n), \ (ii)\, a_n \preceq b_n \implies a_n = O(b_n), \ (iii)\, a_n \succ b_n \implies a_n = \omega(b_n)$.

Before we complete the proof of Proposition~\ref{prop:Bd-Diff-P-permutedP}, we prove an auxiliary result regarding $\rho_s$.

\begin{lemma}\label{lemma:rho-s}
    Suppose that $A\sim \operatorname{ER}(p_n)$, $Y\sim \operatorname{ER}(q_n)$, let $P_{ij} = \theta^\ast_0 + \theta^\ast_1 A_{ij} + \theta^\ast_2 Y_{ij}$ and $\rho_s = \frac{2}{s(s-1)}\sum_{(i,j) \in \widetilde{S}} P_{ij}(1-P_{ij})$. Then, given a sequence $\{t_n\}$ under the assumptions that (i) $t_n s^2\to\infty$ as $n\to\infty$, and (ii) $t_n \geq \operatorname{Var}(P_{ij}(1-P_{ij}))$, we have
    \begin{align}
    \mathbbm{P}\left( \mathbbm{E}(\rho_s)-t_n \leq  \rho_s \leq \mathbbm{E}(\rho_s)+ t_n \right) & \to 1\label{eq:proof-boundrho}
\end{align}
as $n\to\infty$.
\end{lemma}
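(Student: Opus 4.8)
The plan is to recognize $\rho_s$ as a sample mean of independent and identically distributed bounded random variables and then apply Chebyshev's inequality. First I would set $W_{ij} = P_{ij}(1-P_{ij})$ for $(i,j)\in\widetilde{S}$, so that $\rho_s = \frac{1}{N}\sum_{(i,j)\in\widetilde{S}} W_{ij}$ with $N = s(s-1)/2$. Since $A_{ij}$ and $Y_{ij}$ are independent across distinct edge pairs, with $A_{ij}\sim\text{Ber}(p_n)$ and $Y_{ij}\sim\text{Ber}(q_n)$, the variables $\{W_{ij}\}_{(i,j)\in\widetilde{S}}$ are i.i.d. Moreover, in the regime considered (for instance, under $\sqrt{p_n}+\sqrt{q_n}\le 1-\nu$ from Proposition~\ref{prop: eigenval-hessian}) we have $P_{ij}\in[0,1]$, hence $W_{ij}\in[0,1/4]$ is bounded and in particular has finite variance $\sigma^2 := \operatorname{Var}(W_{ij})$.

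By independence, $\operatorname{Var}(\rho_s) = \sigma^2/N$. The event in the statement is exactly $\{|\rho_s - \mathbbm{E}(\rho_s)| \le t_n\}$, so Chebyshev's inequality gives
$$\mathbbm{P}\left(|\rho_s - \mathbbm{E}(\rho_s)| > t_n\right) \le \frac{\operatorname{Var}(\rho_s)}{t_n^2} = \frac{\sigma^2}{N t_n^2}.$$
Next I would invoke the two hypotheses. Condition (ii), $t_n \ge \sigma^2$, reduces the bound to $\frac{1}{N t_n}$; and since $N = s(s-1)/2 \ge s^2/4$ for all $s\ge 2$, this is at most $\frac{4}{s^2 t_n}$. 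Condition (i), $t_n s^2 \to\infty$, then forces $\frac{4}{s^2 t_n}\to 0$, which establishes \eqref{eq:proof-boundrho}.

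The argument is elementary and I do not anticipate a genuine obstacle. The only points requiring care are confirming that the $W_{ij}$ are genuinely independent across the $s(s-1)/2$ seed pairs, so that the variance of the average decays like $1/N$ rather than remaining of order $\sigma^2$, and noting that boundedness of $P_{ij}$ makes $\sigma^2$ finite so that condition (ii) is meaningful. One could replace Chebyshev by Bernstein's or Hoeffding's inequality, exploiting the boundedness of $W_{ij}$ to obtain exponential rather than polynomial decay, but Chebyshev already suffices for the stated $o(1)$ conclusion.
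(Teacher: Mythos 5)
Your proposal is correct, and it takes a genuinely different route from the paper. The paper centers the same summands, setting $H_{ij} = P_{ij}(1-P_{ij}) - \mathbbm{E}\left(P_{ij}(1-P_{ij})\right)$, notes $|H_{ij}|$ is bounded, and applies Bernstein's inequality to the sum, which after rescaling by $2/(s(s-1))$ yields a tail bound of the form $2\exp\bigl(-c\, s^2 t_n\bigr)$ (up to constants) once condition (ii) is used to absorb the variance ratio $\operatorname{Var}(H_{ij})/t_n \leq 1$ into a constant; condition (i) then sends the exponential to zero. You instead observe that $\rho_s$ is an average of $N = s(s-1)/2$ i.i.d.\ bounded variables and apply Chebyshev, using condition (ii) to reduce $\sigma^2/(N t_n^2)$ to $1/(N t_n)$ and condition (i) to kill it. Both arguments exploit the two hypotheses in exactly the same role, and your verification of the i.i.d.\ structure and of $N \geq s^2/4$ is sound. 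The trade-off is the one you already identify: Bernstein buys an exponentially small failure probability, which would matter if this lemma were later union-bounded over a growing family of events or if explicit rates were needed, whereas Chebyshev gives only the polynomial rate $O\bigl(1/(s^2 t_n)\bigr)$. Since the lemma is invoked (in the proof of Proposition~\ref{prop: eigenval-hessian}) solely to conclude a $1 - o(1)$ statement, nothing is lost by your more elementary argument.
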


\begin{proof}
     Define $$H_{ij} = P_{ij}(1-P_{ij}) - \mathbbm{E} \left(P_{ij}(1-P_{ij})\right). $$ 
Observe that $0\leq P_{ij}(1-P_{ij}) \leq 1/2$, therefore
\begin{align*}
    |H_{ij} | & \leq \max\left\{ |P_{ij}(1-P_{ij}) |, \mathbbm{E}(|P_{ij}(1-P_{ij})|) \right\} \leq \frac{1}{2}.
\end{align*}
Therefore, given any $t>0$, by Bernstein's inequality,
\begin{align*}
    \mathbbm{P}\left( \left|\sum_{(i,j) \in \widetilde{S}} H_{ij}\right| \geq t \right) & \leq 2 \exp \left(\frac{ - t^2/2}{\frac{s(s-1)}{2}\operatorname{Var}(H_{ij}) + \frac{t}{6}} \right)
\end{align*}
Letting $t' = \frac{2}{s(s-1)}t$, we can expres the previous inequality in terms of $\rho_s$ as
\begin{align*}
    \mathbbm{P}\left( \left|\rho_s - \mathbbm{E}(\rho_s)\right| \geq t' \right) & \leq 2 \exp \left(\frac{ - \frac{s(s-1)}{4}t'}{\frac{\operatorname{Var}(H_{ij})}{t'} + \frac{1}{3s(s-1)}} \right)
\end{align*}

Letting $t' = t_n$, the right-hand side of the above inequality goes to zero as $n\to\infty$. Therefore,
\begin{align*}
    \mathbbm{P}\left( \mathbbm{E}(\rho_s)-t_n \leq  \rho_s \leq \mathbbm{E}(\rho_s)+t_n \right) & \to 1.
\end{align*}

\end{proof}

Now we are ready to complete the proof of Proposition~\ref{prop: eigenval-hessian}.

\begin{proof}[Proof of Proposition~\ref{prop: eigenval-hessian}]

Under the linear model assumption with a graph $A$ and a single covariate $Y$, the expression for the Hessian matrix $\nabla^2 L(\theta^\ast)$ is given by 
\begin{align*}
    \nabla^2 L(\theta^\ast) =  \begin{pmatrix}
\frac{s(s-1)}{2} & \sum_{i,j \in \mathcal{S}, i>j} A_{ij} & \sum_{i,j \in \mathcal{S}, i>j} Y_{ij}\\
\sum_{i,j \in \mathcal{S}, i>j} A_{ij} & \sum_{i,j \in \mathcal{S}, i>j} A^2_{ij} & \sum_{i,j \in \mathcal{S}, i>j} A_{ij}Y_{ij} \\
\sum_{i,j \in \mathcal{S}, i>j} Y_{ij} & \sum_{i,j \in \mathcal{S}} A_{ij}Y_{ij} & \sum_{i,j \in \mathcal{S}, i>j} Y_{ij}.
\end{pmatrix}.
\end{align*}
In order to proceed with verifying Assumption~\ref{assump:eigenval-hessian}, we decompose the Hessian matrix into signal and noise terms as follows, 
\begin{align*}
    \nabla^2 L(\theta^\ast) = \overbrace{\mathbbm{E}\left( \nabla^2 L(\theta^\ast) \right)}^{\substack{\text{Signal}}} + \overbrace{\left\{ \nabla^2 L(\theta^\ast) -  \mathbbm{E}\left(\nabla^2 L(\theta^\ast)\right)\right\}}^{\substack{\text{Noise}}}.
\end{align*}
We focus on the signal term first. Denote the eigenvalues of $\nabla^2L(\theta^\ast)$ by $\lambda_i(\nabla^2L(\theta^\ast))$, for $i = 1, 2, 3$. Then by Weyl's theorem (Theorem $4.3.1$ in \cite{horn2012matrix})
the eigenvalues of $\nabla^2 L(\theta^\ast)$ are bounded as $$\max_{i}\left| \lambda_i \left(\nabla^2 L(\theta^\ast)\right) - \lambda_i \left(\mathbbm{E}\left(\nabla^2 L(\theta^\ast)\right)\right) \right| \leq \| E \|_{\operatorname{op}},$$ 
where $\|E\|_{\operatorname{op}} = \operatorname{inf}\{c \geq 0: \|Ev\| \leq c \|v\| \hspace{5pt} \text{for all} \hspace{5pt} v \in \mathbb{R}^3\}$ is the matrix operator norm of $E$, defined as
$$E = \nabla^2 L(\theta^\ast) -  \mathbbm{E}\left(\nabla^2 L(\theta^\ast)\right). $$
Therefore, 
\begin{align}\label{MinEigHessian}
     \lambda_{\min}\left( \nabla^2 L(\theta^\ast) \right) \geq \lambda_{\min} \left( \mathbbm{E} \left( \nabla^2 L(\theta^\ast)\right)  \right) - \| E \|_{\operatorname{op}},
\end{align}
 Now, consider the signal term given by
\begin{align*}
    \mathbbm{E} \left( \nabla^2 L(\theta) \right) =  \frac{s(s-1)}{2} \begin{pmatrix}
        1 & p_n & q_n \\
        p_n & p_n & p_nq_n \\
        q_n & p_nq_n & q_n\\
    \end{pmatrix}.
\end{align*}
In order to bound its eigenvalues, we write 
$\mathbbm{E} \left( \nabla^2 L(\theta^\ast) \right) = D^{1/2}\Gamma D^{1/2},$ for $\Gamma, D\in\real^{3 \times 3}$ matrices,
where $D$ is a diagonal matrix defined as
$$D = \diag\left(\mathbbm{E} \left( \nabla^2 L(\theta^\ast) \right)\right) = \frac{s(s-1)}{2}\begin{pmatrix}
    1 & 0 & 0\\
    0 & p_n & 0 \\
    0 & 0 & q_n\\
\end{pmatrix}$$
and therefore, we have
$$\Gamma = D^{-1/2} \mathbbm{E} \left( \nabla^2 L(\theta^\ast) \right) D^{-1/2} = \begin{pmatrix}
    1 & \sqrt{p_n} & \sqrt{q_n} \\
    \sqrt{p_n} & 1 & \sqrt{p_nq_n}\\
    \sqrt{q_n} & \sqrt{p_nq_n} & 1\\
\end{pmatrix}.$$
By applying Gershgorin's disc theorem we can bound any eigenvalue of $\Gamma$ which is denoted by $\lambda(\Gamma)$. According to the theorem,  every eigenvalue of $\Gamma$ lies within at least one of the Gershgorin discs defined by $\operatorname{Disc}(1, R_i)$ with center at $1$ and radius $R_i$ for all $i$.  Here, $R_i>0$ is defined as the sum of the absolute values of the non-diagonal entries of $\Gamma$ in row $i$, that is, $R_i = \sum_{j \neq i} \left| \Gamma_{ij}\right|$. That is, $ \left| \lambda(\Gamma) - 1 \right| \leq R_i$ for some $i$ holds for any eigenvalue of $\Gamma$. Then, the minimum eigenvalue of $\Gamma$ (denoted by $\lambda_{\min}(\Gamma)$) satisfies
\begin{align*}
    1- R_i \leq \lambda_{\min}(\Gamma) \leq 1 + R_i \hspace{5pt} \text{for some} \hspace{5pt} i = 1, 2, 3.
\end{align*}
Note that, in particular, 
$$R_1 = \sqrt{p_n} +\sqrt{q_n},$$ $$R_2 = \sqrt{p_n}+\sqrt{p_nq_n},$$  
$$R_3 =\sqrt{q_n}+\sqrt{p_nq_n}.$$ 
Moreover, $\sqrt{p_n} + \sqrt{q_n} < 1$ implies both $\sqrt{p_n} + \sqrt{p_nq_n} < 1$ and $\sqrt{q_n} + \sqrt{p_nq_n} < 1$. Therefore, given any fixed $n$, the lower bound on $\lambda_{\min}(\Gamma)$ is given by
    $$\lambda_{\min}(\Gamma) \geq \nu > 0,$$  
    where $0< \nu \leq  1 - \sqrt{p_n} - \sqrt{q_n}$ from condition $(i)$ of the proposition. 
Since $\mathbbm{E}\left( \nabla^2 L(\theta) \right) = D^{1/2}\Gamma D^{1/2}$, it follows that,
\begin{align}\label{eq: MinsignalEig}
    \lambda_{\min}\left( \mathbbm{E} (\nabla^2 L(\theta) \right) \geq \lambda_{\min}(\Gamma) \min\left(\diag(D)\right) \geq \nu \frac{s(s-1)}{2} \min \{p_n, q_n\}.
\end{align}

Next we focus on the noise matrix $E$, given by $$E = \nabla^2 L(\theta) - \mathbbm{E}(\nabla^2 L(\theta)) = \begin{pmatrix}
 0 & W_1 & W_2\\
 W_1 & W_1 & W_3\\
 W_2 & W_3 & W_2
\end{pmatrix},$$ where 
$$W_1 = \sum_{i,j \in \mathcal{S}, i>j} (A_{ij} - p_n),$$ 
$$W_2 = \sum_{i,j \in \mathcal{S}, i>j} (Y_{ij} - p_n),$$ 
$$W_3 = \sum_{i,j \in \mathcal{S}, i>j} (A_{ij}Y_{ij} - p_n q_n).$$ 
The spectral norm of $E$ satisfies $$\|E\|_{\operatorname{op}} = \sqrt{\lambda_{\max}(E^\top E)} \leq \sqrt{\operatorname{Tr}(E^\top E)} = \sqrt{4(3W_1^2 + 3W_2^2 + 2W_3^3)},$$ 
since $$E^\top E = \begin{pmatrix}
    W_1^2 + W_2^2 & W_1^2 + W_2W_3 & W_1W_3 + W_2^2\\
    W_1^2 + W_2W_3 & 2W_1^2 + W_3^2 & W_1W_2 + W_1W_3 + W_2W_3 \\
    W_1W_3 + W_2^2 & W_1W_2 + W_1W_3 + W_2W_3 & 2W_2^2 + W_3^2\\
\end{pmatrix}.$$
To check that the lower bound in Equation~\eqref{MinEigHessian} is positive, consider $p_n \leq q_n$ for the rest of the proof (the case $q_n < p_n$ is analogous). Then, using Equation~\eqref{MinEigHessian} and Equation~\eqref{eq: MinsignalEig} we have
\begin{align}\label{eq:lambda_minBdd}
\begin{split}
    \lambda_{\min}(\nabla^2 L(\theta^\ast)) & \geq \nu \frac{s(s-1)}{2}  p_n - \|E\|_{\operatorname{op}}\\
   & \geq \nu \frac{s(s-1)}{2} p_n - \sqrt{3W_1^2 + 3W_2^2 + 2W_3^2}.
\end{split}
\end{align}
We apply the Bernstein's inequality for $W_1, W_2$ and $W_3$  (Theorem $1.6.1$ from \cite{MAL-048}), 
which are defined as sums of mean-zero independent random variables. 
In the first concentration bound for $W_1$, 
let $\varepsilon_1 = C_1 s^2p_n$, where $C_1>0$ is some constant. 
Observe that the condition (ii) of the statement implies that $q_n = o(s^2 p_n^2)$,  which implies $s^2p_n \gg \frac{q_n}{p_n} \geq 1$ and hence, $s^2p_n \to \infty$ as $n\to\infty$. Then, we have,
\begin{align}
\label{concntr1}
\mathbbm{P}\left( \left|\sum_{i,j \in \mathcal{S}, i>j} \left(A_{ij} - p_n\right) \right| > \varepsilon_1\right) & \leq 2\exp \left( \frac{-\frac {1}{2} C_1^2 s^4p^2_n}{\frac{s(s-1)}{2}p_n(1-p_n) + \frac {1}{3} C_1s^2p_n} \right)\nonumber\\
&  \leq 2\exp \left( \frac{-\frac {1}{2} C_1^2s^2p_n}{1 + \frac {C_1}{3}} \right) \to 0.
\end{align}

Similarly as before, in the second concentration bound we set $\varepsilon_2 = C_2s^2p_n$,
where $C_2>0$ is some constant. 
Same as before, the 
condition $(ii)$ of the proposition, $q_n \ll s^2 p^2_n$, implies $s^2 p_n^2/ q_n \to \infty$. Then, we have
\begin{align}
\label{concntr2}
    \mathbbm{P}\left( \left|\sum_{i,j \in \mathcal{S}, i>j} \left(Y_{ij} - q_n\right)\right| > \varepsilon_2 \right) 
         & \leq 2\exp \left( \frac{-\frac 12 C_2^2 s^4p^2_n}{\frac{s(s-1)}{2}q_n(1-q_n) + \frac 13 C_2 s^2p_n} \right)\nonumber\\
    & \leq2 \exp \left( \frac{-\frac 12 C_2^2 s^2p_n^2}{q_n + \frac 13 C_2} \right) \to 0.
\end{align}

Finally, in the third concentration inequality for $W_3$, we define $\varepsilon_3 = C_3s^2 p_n $, where $C_3>0$ is some constant. Condition $(ii)$ of the proposition ensures that $s^2p_n\to\infty$. Then, we have
\begin{align}
\label{concntr3}
    \mathbbm{P}\left( \left|\sum_{i,j \in \mathcal{S}, i>j} \left(A_{ij}Y_{ij} - p_nq_n\right)\right| > \varepsilon_3 \right) & \leq 2\exp \left( \frac{-\frac 12 C_3^2 s^4p^2_n}{\frac{s(s-1)}{2}p_nq_n(1-p_nq_n) + \frac 13 C_3 s^2p_n} \right) \nonumber\\
    & \leq 2\exp \left( \frac{-\frac 12 C_3^2s^2p_n}{1 + \frac 13 C_3} \right) \to 0.
\end{align}

Hence, combining Equations~\eqref{concntr1}, \eqref{concntr2} and \eqref{concntr3}, we obtain 
$$ \mathbbm{P}\left(|W_1| \leq \varepsilon_1,  |W_2| \leq \varepsilon_2, |W_3| \leq \varepsilon_3  \right) \geq 1-\mathbbm{P}\left( |W_1| > \varepsilon_1 \right) - \mathbbm{P}\left( |W_2 |> \varepsilon_2 \right) - \mathbbm{P}\left( |W_3| > \varepsilon_3 \right)\rightarrow 1.$$
Take $\varepsilon^\ast = \sqrt{3\varepsilon^2_1 + 3\varepsilon^2_2 + 2\varepsilon^2_3}$. This implies
\begin{align*}
  \mathbbm{P}\left(\|E\|_{\operatorname{op}} \leq \varepsilon^* \right)  & =  \mathbbm{P}\left( \sqrt{3W^2_1 + 3W^2_2 + 2W^2_3} \leq \varepsilon^\ast \right)\\
  & \geq \mathbbm{P}(W_1^2 < \epsilon_1^2,\  W_2^2 < \epsilon_2^2, \ W_3^2 < \epsilon_3^2)\\
 & \geq 1-\mathbbm{P}\left( |W_1| > \varepsilon_1 \right) - \mathbbm{P}\left( |W_2 |> \varepsilon_2 \right) - \mathbbm{P}\left( |W_3| > \varepsilon_3 \right)\rightarrow 1.
\end{align*}
 where  . Then, from Equation~\eqref{eq:lambda_minBdd} we can write, for all $0 \leq \delta \leq 1$ there exists $n_1 \in \mathbb{N}$ such that for $n\geq n_1$,
\begin{align*}
    \mathbbm{P}\left( \lambda_{\min} \left( \nabla^2 L(\theta^\ast)\right) \geq \nu\frac{s(s-1)}{2} p_n - \varepsilon^\ast  \right) > 1 - \delta.
\end{align*}
Recall that $\varepsilon_1 = C_1 s^2p_n, \varepsilon_2 = C_2 s^2p_n$ and $\varepsilon_3 = C_3 s^2p_n$. This implies that
$$\varepsilon^\ast = \left(3C_1^2 s^4p_n + 3 C_2^2s^4p_n^2 + 2C_3^2 s^4 p_n^2\right)^{1/2} = C_4 s^2p_n.$$
Take $C_4 \leq \frac{\nu s(s-1)}{2s^2}$. 
Then,
\begin{align*}
      \mathbbm{P} \left( \lambda_{\min}\left(\nabla^2 L(\theta^\ast) \right) \geq \frac{s^2}{4}\nu p_n \right) & \geq \mathbbm{P} \left( \lambda_{\min}\left(\nabla^2 L(\theta^\ast) \right) \geq \frac{s(s-1)}{2}\nu p_n \right)\\
    & \geq \mathbbm{P}\left( \|E\|_{\operatorname{op}} \leq \varepsilon^\ast\right)\to 1.
\end{align*}

We get similar conclusions for the case $q_n \leq p_n$.
    Therefore, combining the two case of $p_n \leq q_n$ and $q_n \leq p_n$, 
    we obtain the following which verifies Assumption~\ref{assump:eigenval-hessian}, 
    \begin{align*}
        \mathbbm{P}\left( \lambda_{\min}\left(\nabla^2 L(\theta^\ast) \right) \geq \frac{s^2}{4}\nu\min\{p_n, q_n\} \right) = 1- o(1).
    \end{align*}
This ensures that the minimum eigenvalue of the Hessian is positive and bounded away from zero. Finally, in order to show that $L_n \geq \rho_s$, let $L_n = c \min\{p_n, q_n\}$ for some constant $c>0$. Under the assumption that $\theta_0^\ast + \theta_1^\ast p_n + \theta_2^\ast q_n = O(\min\{p_n, q_n\})$, we have
$$\mathbbm{E}(\rho_s) \leq \mathbbm{E}(P_{ij}) = \theta_0^\ast + \theta_1^\ast p_n + \theta_2^\ast q_n \leq c'\min\{p_n, q_n\},$$
 and hence, if  $s^2\min\{p_n, q_n\}\to\infty$, by Lemma~\ref{lemma:rho-s},
\begin{align*}
    \mathbbm{P}\left( L_n \geq \rho_s\right) \geq \mathbbm{P}\left(\mathbbm{E}[\rho_s] + c'\min\{p_n, q_n\} \geq \rho_s \right)\to 1.
\end{align*}
Therefore, combining everything together,
\begin{align*}
        \mathbbm{P}\left( \lambda_{\min}\left(\nabla^2 L(\theta^\ast) \right) \geq \frac{s^2}{4}\nu\min\{p_n, q_n\}, c\min\{p_n, q_n\}\geq \rho_s  \right) = 1- o(1).
    \end{align*}
\end{proof}

\begin{lemma}\label{lemma: BernStein_ER}
    Suppose that $A\sim \operatorname{ER}(p_n)$, $Y\sim \operatorname{ER}(q_n)$ and let $P_{ij} = \theta^\ast_0 + \theta^\ast_1 A_{ij} + \theta^\ast_2 Y_{ij}$ and $\tau_n = \min\{p_n, q_n\}$. Then,
    \begin{equation*}
      \mathbbm{P}\left( \frac{1}{\tau_n^{1/2}}\sum_{(i, j) \in \mathcal{Q}} P_{ij} > |\mathcal{Q}|t_n \right) \leq  \exp\left(\frac{-|\mathcal{Q}|\left( \sqrt{\tau_n} t_n - \mathbbm{E}(P_{ij}) \right)}{\frac{1}{2|\mathcal{Q}|} + \frac{1}{6|\mathcal{Q}|\mathbbm{E}(P_{ij})}}\right)
    \end{equation*}
    for any $t_n \geq \mathbbm{E}(P_{ij})/\sqrt{\tau_n}$.
\end{lemma}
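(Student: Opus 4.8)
The plan is to read the left-hand side as a one-sided tail bound for a sum of independent, bounded, mean-zero random variables and apply Bernstein's inequality, exactly in the style already used in the proofs of Lemmas~\ref{ProbtermFirst_lemma} and \ref{ProbtermSecond_lemma}. The key structural observation is that, under the Erd\H{o}s--R\'enyi model, the entries $\{A_{ij}\}_{i>j}$ and $\{Y_{ij}\}_{i>j}$ are independent across edge pairs and $A\perp Y$, so the summands $P_{ij} = \theta_0^\ast + \theta_1^\ast A_{ij} + \theta_2^\ast Y_{ij}$, $(i,j)\in\mathcal{Q}$, each depend on a disjoint block of the underlying Bernoulli variables and are therefore mutually independent. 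Since the model is homogeneous, $\mathbbm{E}(P_{ij}) = \theta_0^\ast + \theta_1^\ast p_n + \theta_2^\ast q_n$ is the same for every pair, which is why the bound can be written in terms of a single quantity $\mathbbm{E}(P_{ij})$.

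The next step is to center the sum. Multiplying the event through by $\sqrt{\tau_n}$ and subtracting the mean, the event in the statement becomes
\begin{equation*}
\left\{ \sum_{(i,j)\in\mathcal{Q}} \left(P_{ij} - \mathbbm{E}(P_{ij})\right) > |\mathcal{Q}|\left(\sqrt{\tau_n}\,t_n - \mathbbm{E}(P_{ij})\right)\right\}.
\end{equation*}
The hypothesis $t_n \geq \mathbbm{E}(P_{ij})/\sqrt{\tau_n}$ guarantees that $\sqrt{\tau_n}\,t_n - \mathbbm{E}(P_{ij}) \geq 0$, so the deviation on the right is nonnegative and this is a genuine upper-tail event to which Bernstein's inequality applies.

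I would then verify the two ingredients Bernstein requires. For boundedness, $|P_{ij} - \mathbbm{E}(P_{ij})| \leq 1$ since $P_{ij}\in[0,1]$ as a valid edge probability, giving the Bernstein constant $M=1$. For the variance, I would use the elementary bound $P_{ij}^2 \leq P_{ij}$ valid on $[0,1]$ to obtain $\operatorname{Var}(P_{ij}) \leq \mathbbm{E}(P_{ij}^2) \leq \mathbbm{E}(P_{ij})$, so that the total variance satisfies $\sum_{(i,j)\in\mathcal{Q}}\operatorname{Var}(P_{ij}) \leq |\mathcal{Q}|\,\mathbbm{E}(P_{ij})$. Feeding these two bounds into Bernstein's inequality and rearranging the exponent so that the deviation enters linearly (the same dividing step used in Lemma~\ref{ProbtermSecond_lemma}) yields the claimed form.

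I do not anticipate a substantive probabilistic obstacle: the content is entirely standard once independence and the $[0,1]$ boundedness are established. The only delicate point is the bookkeeping in the final rearrangement of the exponent and the choice of which crude variance bound to carry through, so the main effort is in matching the exact algebraic form of the stated denominator rather than in any concentration argument.
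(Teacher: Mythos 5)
Your plan is in fact the paper's own proof: independence of the $P_{ij}$ across pairs, centering via $R_{ij}=P_{ij}-\mathbbm{E}(P_{ij})$, the bound $|R_{ij}|\leq 1$, the variance bound $\operatorname{Var}(P_{ij})\leq\mathbbm{E}(P_{ij})$, and Bernstein's inequality; all of those ingredients are correct. The genuine gap is exactly the step you set aside as ``bookkeeping'': the rearrangement does \emph{not} produce the claimed form, and cannot. Writing $w=\sqrt{\tau_n}\,t_n-\mathbbm{E}(P_{ij})$ and $u=|\mathcal{Q}|w$, Bernstein gives
\begin{equation*}
\mathbbm{P}\Bigl(\textstyle\sum_{(i,j)\in\mathcal{Q}}R_{ij}>u\Bigr)\;\leq\;\exp\left(\frac{-u^2/2}{|\mathcal{Q}|\operatorname{Var}(P_{ij})+u/3}\right)\;=\;\exp\left(\frac{-|\mathcal{Q}|\,w}{\frac{2\operatorname{Var}(P_{ij})}{w}+\frac{2}{3}}\right),
\end{equation*}
and since the denominator on the right is at least $2/3$, this argument can never yield an exponent of magnitude larger than $\tfrac{3}{2}|\mathcal{Q}|w$, i.e.\ linear in $|\mathcal{Q}|$. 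The denominator in the statement, $\frac{1}{2|\mathcal{Q}|}+\frac{1}{6|\mathcal{Q}|\mathbbm{E}(P_{ij})}$, carries extra $1/|\mathcal{Q}|$ factors, so the stated exponent is of order $|\mathcal{Q}|^2w\,\mathbbm{E}(P_{ij})$, which exceeds $\tfrac{3}{2}|\mathcal{Q}|w$ as soon as $|\mathcal{Q}|\gtrsim 1/\mathbbm{E}(P_{ij})$. No algebraic rearrangement closes a gap of a factor $|\mathcal{Q}|$; indeed the statement as written is false. Take $\theta_0^\ast=\theta_2^\ast=0$, $\theta_1^\ast=1$ (so $P_{ij}=A_{ij}$), $p_n=q_n=1/2$, and $\sqrt{\tau_n}\,t_n=0.9$: the hypothesis $t_n\geq\mathbbm{E}(P_{ij})/\sqrt{\tau_n}$ holds, yet $\mathbbm{P}(\sum_{(i,j)\in\mathcal{Q}}P_{ij}>0.9|\mathcal{Q}|)\geq 2^{-|\mathcal{Q}|}$, which is larger than the claimed bound $\exp(-0.48|\mathcal{Q}|^2)$ for every $|\mathcal{Q}|\geq 2$. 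For what it is worth, the paper's own proof stumbles at this same spot: in passing from its Equation~\eqref{eq: BernSteinER} to \eqref{eq: BernSteinER-2}, the two terms of the denominator are divided by the inconsistent factors $4|\mathcal{Q}|$ and $4|\mathcal{Q}|w$, which is where the spurious $1/|\mathcal{Q}|$'s enter.

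A second, related gap: you use the hypothesis only through $w\geq 0$. Even the correct, linear-in-$|\mathcal{Q}|$ bound requires $2\operatorname{Var}(P_{ij})/w$ to be bounded by a constant, hence a lower bound $w\gtrsim\operatorname{Var}(P_{ij})$. The paper secures this by invoking $\sqrt{\tau_n}\,t_n>2\mathbbm{E}(P_{ij})$ (so that $w\geq\mathbbm{E}(P_{ij})\geq\operatorname{Var}(P_{ij})$), a condition strictly stronger than the lemma's stated hypothesis. The statement your outline actually proves --- and the one that suffices for the lemma's only use, in the proof of Proposition~\ref{prop:Bd-Diff-P-permutedP} --- is: if $\sqrt{\tau_n}\,t_n\geq 2\mathbbm{E}(P_{ij})$, then $\mathbbm{P}\bigl(\tau_n^{-1/2}\sum_{(i,j)\in\mathcal{Q}}P_{ij}>|\mathcal{Q}|t_n\bigr)\leq\exp\bigl(-\tfrac{3}{8}|\mathcal{Q}|(\sqrt{\tau_n}\,t_n-\mathbbm{E}(P_{ij}))\bigr)$, obtained by the same division step as in Lemma~\ref{ProbtermSecond_lemma}. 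You should prove that corrected claim rather than assert that the algebra reaches the stated one, because it does not.
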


\begin{proof}
We have the expected value of $P_{ij}$ given by, $\mathbbm{E}(P_{ij}) = \theta^\ast_0 + \theta^\ast_1 p_n + \theta^\ast_2 q_n$. We then define 
$$R_{ij} = P_{ij} - \mathbbm{E}(P_{ij}) = \theta_1^\ast(A_{ij} - p_n) + \theta_2^\ast(Y_{ij} - q_n).$$
Therefore, the $R_{ij}$'s are mean-zero random variables, and we have the following bound,
$$|R_{ij}| \leq 1.$$
For any $t^\prime >0$, we can evaluate $\mathbbm{P}\left(\sum_{(i, j) \in \mathcal{Q}} P_{ij} > t^\prime \right)$ in terms of $R_{ij}$ as follows,
\begin{align*}
    \begin{split}
        \mathbbm{P}\left( \sum_{(i, j) \in \mathcal{Q}} P_{ij} > t^\prime \right) & =  \mathbbm{P}\left( \sum_{(i, j) \in \mathcal{Q}} \left\{ P_{ij} - \mathbbm{E}(P_{ij})\right\} > t^\prime - \sum_{(i, j) \in \mathcal{Q}} \mathbbm{E}(P_{ij}) \right) \\
        & = \mathbbm{P}\left( \sum_{(i, j) \in \mathcal{Q}} R_{ij} > t^\prime - | \mathcal{Q}| \mathbbm{E}(P_{ij})  \right).
    \end{split}
\end{align*}
We further define $t^{\prime \prime} =t' - |\mathcal{Q}|\mathbbm{E}(P_{ij})$.
Since 
$$\mathbbm{E}(R^2_{ij}) = \operatorname{Var}(P_{ij}) = {\theta^\ast_1}^2p_n(1-p_n) + {\theta^\ast_2}^2 q_n (1-q_n),$$ 
by Bernstein's inequality, we have the following,
\begin{align}
\label{eq: BernSteinER}
\begin{split}
     \mathbbm{P}\left( \sum_{(i, j) \in \mathcal{Q}} P_{ij} > t^\prime \right) & = \mathbbm{P}\left( \sum_{(i, j) \in \mathcal{Q}} R_{ij} > t^{\prime\prime} \right) \\
     & \leq \exp \left( \frac{-\frac{1}{2}{t^{\prime\prime}}}{\frac{|\mathcal{Q}|\operatorname{Var}(P_{ij}) }{t^{\prime\prime}} + \frac{1}{3} }  \right).
\end{split}
\end{align}
In particular, for the choice of 
$t^\prime = \sqrt{\tau_n} | \mathcal{Q}| t_n,$
we have
\begin{align*}
    t^{\prime \prime}  & = |\mathcal{Q}|\left( \sqrt{{\tau_n}} t_n - \mathbbm{E}(P_{ij}) \right),
\end{align*}
we can simplify that
\begin{align}
\label{eq: BernSteinER-2}
\begin{split}
     \mathbbm{P}\left( \sum_{(i, j) \in \mathcal{Q}} P_{ij} > \sqrt{\tau_n} | \mathcal{Q}| t_n \right)      & \leq \exp \left( \frac{- |\mathcal{Q}|\left(\sqrt{\tau_n} t_n - \mathbbm{E}(P_{ij}) \right)}{\frac{\operatorname{Var}(P_{ij}) }{2 |\mathcal{Q}|\left( \sqrt{\tau_n} t_n - \mathbbm{E}(P_{ij}) \right)} + \frac{1}{6 |\mathcal{Q}|\left( \sqrt{\tau_n} t_n - \mathbbm{E}(P_{ij}) \right)} }  \right).
\end{split}
\end{align}
Therefore, under the assumption that
$$\sqrt{\tau_n}t_n > 2\mathbbm{E}(P_{ij}) \geq  \mathbbm{E}(P_{ij}) + \operatorname{Var}(P_{ij}), $$
we have that 
$$  \mathbbm{P}\left( \sum_{(i, j) \in \mathcal{Q}} P_{ij} > \sqrt{\tau_n}t_n \right)\leq \exp\left(\frac{-|\mathcal{Q}|\left( \sqrt{\tau_n} t_n - \mathbbm{E}(P_{ij}) \right)}{\frac{1}{2|\mathcal{Q}|} + \frac{1}{6|\mathcal{Q}|\mathbbm{E}(P_{ij})}}\right).$$ 

\end{proof}

\subsection{Proof of Proposition~\ref{prop:Bd-Diff-P-permutedP}}

\begin{proof}
Let $Q\in\Pi_n$ be a permutation. Given a graph with covariates of size $n$, where the graph $A \sim \operatorname{ER}(p_n)$ and the covariate $Y\sim \operatorname{ER}(q_n),$  under the model 
$$P = \theta^*_0 + \theta^*_1A + \theta^*_2Y,$$ 
with $B|A, Y \sim \operatorname{Ber}(P)$ we can write the expression of $M_P(Q)$ as 
\begin{align*}
\begin{split}
      M_P(Q) & =  \|P - QPQ^\top\|^2_F\\
      & = \|\theta^*_1 \left( A - QAQ^\top \right) + \theta^*_2\left( Y - QYQ^\top \right) \|^2_F \\
      & = {\theta^*_1}^2\|A-QAQ^\top\|^2_F + {\theta^*_2}^2\|Y-QYQ^\top\|^2_F + \operatorname{Tr} \left[ 2\theta^*_1 \theta^*_2 (A - QAQ^\top)(Y - QYQ^\top)  \right].
\end{split}
\end{align*} 
Here, the cross product term $\operatorname{Tr} \left[ 2\theta^*_1 \theta^*_2 (A - QAQ^\top)(Y - QYQ^\top)  \right]$ has expected value equal to zero, as we have 
$$ \mathbbm{E} \left[\Tr\left\{ (A - QAQ^\top)(Y - QYQ^\top)  \right\} \right] = \Tr\left[ p_nq_n \left(J - QJQ^\top \right)^2 \right] = 0,$$ where $J = \mathrm{1}_{n \times n},$ the matrix of all ones. This implies the expected value of $M_P(Q)$ is
\begin{align*}
    \mathbbm{E}\left( M_P(Q) \right) = {\theta^*_1}^2 \mathbbm{E}\left\| A - QAQ^\top \right\|^2_F + {\theta^*_2}^2 \mathbbm{E}\left\| Y - QYQ^\top \right\|^2_F.
\end{align*} 

We now obtain a lower bound on the expected value of $M_P(Q)$ similar to Lemma $4$ from \cite{lyzinski2015graph}. Define 
$$\Delta_A = \left\{(i,j) \in [n]\times [n]: i>j, Q_{ii}+ Q_{jj}\leq 1, \text{ and }Q_{ij}+ Q_{ji}\leq 1\right\},$$ 
that is, $\Delta_A$ contains the indexes of the lower triangular elements of $A$ for which the permutation $Q$ permutes the element $(i,j)$ but $A_{ij}$ and $[QAQ^\top]_{ij}= A_{k_i, k_j}$ (with $Q_{ik_i} = Q_{j,k_j}=1$) represent different random variables. Note that $Q_{ii}+Q_{jj}\leq 1$ implies that either $k_i=i$ or $k_j=j$ but not both, and conversely for $Q_{ij} + Q_{ji}$, which guarantees that $(k_i,k_j)\neq (i,j), (j,i)$.
Then, we can write
$$\|A - QAQ^\top\|^2_F = 2\sum_{(i,j) \in \Delta_A} \left(A_{ij} - [QAQ^\top]_{ij} \right)^2,$$ 
which is a sum of Bernoulli random variables. For given any $Q \in \Pi_{n, k}$, the bound on the expected value of $ \left(A_{ij} - [QAQ^\top]_{ij} \right)^2$ where $(i,j) \in \Delta_A$, is obtained as below,
\begin{align*}
    \begin{split}
        \mathbbm{E}\left[ \left(A_{ij} - [QAQ^\top]_{ij} \right)^2 \right] & = \mathbbm{E}\left(A^2_{ij}\right) + \mathbbm{E}\left(\left[QAQ^\top\right]^2_{ij} \right) - 2 \mathbbm{E}\left( A_{ij} \cdot \left[ QAQ^\top\right]_{ij} \right)\\
        & = \mathbbm{E}\left(A^2_{ij}\right) + \mathbbm{E}\left(\left[QAQ^\top\right]^2_{ij} \right) - 2 \mathbbm{E}\left(A_{ij} \right) \cdot \mathbbm{E} \left( \left[QAQ^\top \right]_{ij}\right) \\
        & = 2p_n - 2p_n^2 = 2p_n(1-p_n).
    \end{split}
\end{align*}
 Also, define 
 $$\Delta_A^\prime =  \left\{ (i, j) \in [n] \times [n] \ \text{with} \ i \in \mathcal{K} \ \text{or} \ j \in \mathcal{K} :i>j,  Q_{ij} = Q_{ji} = 1\right\}$$
 to be the labels among the permuted edge pairs with the same values for $A_{ij}$ and $\left[QAQ^\top \right]_{ij}$ (that is, $k_i =j$ and $k_j = i$). Write $N_{\Delta_A} = |\Delta_A|$ and $N_{\Delta_A^\prime} =  |\Delta_A^\prime|$.  
 Note that, $N_{\Delta_A^\prime} \leq k$, and we can write 
 \begin{equation}\label{eq:size-Ndelta}
     2N_{\Delta_A} =  n^2 - (n - k)^2 - N_{\Delta_A^\prime} \geq 2nk - k^2 - k \geq k(n-1)\geq \frac{kn}{2}.
 \end{equation} 
 Then, the expected value of  $ \sum_{(i,j) \in \Delta_A} \left(A_{ij} - [QAQ^\top]_{ij} \right)^2$  can be bounded as 
 $$   \mathbbm{E}\left[\sum_{(i,j) \in \Delta_A} \left(A_{ij} - [QAQ^\top]_{ij} \right)^2 \right] \geq p_n(1-p_n) N_{\Delta_A} \geq p_n(1-p_n) \frac{kn}{4}.$$
Because $Y\sim \operatorname{ER}(q_n)$ with $q_n \in [\tau_n, 1-\tau_n]$, in an exact similar way, we have, 
$$ \mathbbm{E}\left[\sum_{(i,j) \in \Delta_Y } \left(Y_{ij} - [QYQ^\top]_{ij} \right)^2 \right] \geq q_n(1-q_n) \frac{kn}{4}.$$
Therefore, using these two, we obtain the following bound on $\mathbbm{E}(M_P(Q))$ given by
\begin{align}
    \mathbbm{E}(M_P(Q)) & \geq \frac{kn}{4}\left( p_n(1-p_n){\theta^*_1}^2 + q_n(1-q_n){\theta^*_2}^2 \right)\nonumber \\
    & \geq \tau_n\frac{kn}{8}\left({\theta^*_1}^2 + {\theta^*_2}^2 \right). \label{eq:lower-bound-EMpq}
\end{align}

Next, we use Proposition~3.2 of \cite{kim2002asymmetry} in a similar way as in  \cite{lyzinski2015graph}. For a given $Q\in\Pi_{n,k}$,  $M_P(Q)$ is a function of $N$ many independent Bernoulli random variables ($A_{ij}$ and $Y_{ij}$), with  $2kn \geq N \geq \frac{kn}{2}$ (as shown in Equation~\eqref{eq:size-Ndelta}), with parameters $p_n, q_n\in[\tau_n,1/2]$. Moreover, changing the value of any one of the Bernoulli random variables changes the value of $M_P(Q)$ by at most $ 4 \left( |\theta^*_1| + |\theta^*_2| \right)^2$.  Set 
$$\chi^2 = 16 \left( |\theta^*_1| + |\theta^*_2| \right)^4\frac{N}{2}(p_n(1-p_n) + q_n(1-q_n)).$$
Therefore, using Proposition~3.2 of \cite{kim2002asymmetry}, we obtain that, for 
any $t$ such that
$$0 \leq t < \frac{2\chi^2}{4 \left( |\theta^*_1| + |\theta^*_2| \right)^2} = 4 \left( |\theta^*_1| + |\theta^*_2| \right)^2N(p_n(1-p_n) + q_n(1-q_n)),$$
it holds that
\begin{align}\label{eq:bound_mpq_prob}
    \mathbbm{P}\left( \left|M_P(Q)- \mathbbm{E}(M_P(Q)) \right| > t\right) & \leq 2\exp\left(\frac{-t^2}{4\chi^2}\right)\nonumber \\
    & \leq 2 \exp \left( - \frac{t^2}{32 \left( |\theta^*_1| + |\theta^*_2| \right)^4N(p_n(1-p_n) + q_n(1-q_n))} \right)\nonumber\\
    & \leq 2 \exp \left( - \frac{t^2}{64kn\left( |\theta^*_1| + |\theta^*_2| \right)^4(p_n + q_n)} \right).
\end{align}

Subsequently, combining Equation~\eqref{eq:lower-bound-EMpq} with Equation~\eqref{eq:bound_mpq_prob}, the following probabilistic bound is obtained as
\begin{align}\label{eq:IneqLemma2Ly}
\begin{split}
     \mathbbm{P}\left(M_P(Q) < \tau_n \frac{kn}{8}\left( {\theta^*_1}^2 + {\theta^*_2}^2 \right) - t \right) &
    \leq \mathbbm{P}\left(M_P(Q) < \mathbbm{E}(M_P(Q)) - t \right)\\
   &  \leq \mathbbm{P}\left(\left| M_P(Q) - \mathbbm{E}(M_P(Q)) \right| > t \right)\\
   & \leq 2 \exp \left( - \frac{t^2}{64 kn\left(|\theta^*_1| + |\theta^*_2|\right)^4(p_n + q_n)} \right).
\end{split}
\end{align}
In particular, we choose 
    $$t = \frac{kn}{16} {\tau_n}(|\theta_1^\ast| +|\theta_2^\ast|)^2 < N\tau_n (|\theta_1^\ast| +|\theta_2^\ast|)^2.$$ 
Therefore, using Equation~\eqref{eq:IneqLemma2Ly} we get the following,
\begin{align}\label{eq: largeM}
\begin{split}
      \mathbbm{P} \left( M_P(Q) < \frac{\tau_n kn({\theta^*_1}^2 + {\theta^*_2}^2)}{16}  \right) & \leq 2 \exp \left( - \frac{kn\tau_n^2}{256} \right).
\end{split}
\end{align}

Using this, we proceed to show that Assumption~\ref{assump:Diff-P-permutedP} holds with high probability, which requires proving 
\begin{align*}
    \mathbbm{P} \left( M_P(Q) \geq c_2^\ast \max \left\{ k \log n, \frac{ \sqrt{\log s}}{L^{1/2}_n s}\sum_{(i, j) \in \mathcal{Q}}P_{ij} \right\}\text{ for all }Q\in\Pi_{n}, Q_{ii} =1, i\in \mS \right) \to 1.
\end{align*}

For any given $t^\ast >0$, we can write following,
\begin{align}
\label{eq: M_QER_large}
\begin{split}
     & \mathbbm{P} \left( M_P(Q) \geq c_2^\ast \max \left\{ k \log n, \frac{  \sqrt{\log s}}{L^{1/2}_n s}\sum_{(i, j) \in \mathcal{Q}}P_{ij} \right\} \right) \\
     & \geq \mathbbm{P}\left( \left\{ M_P(Q) \geq t^\ast \right\} \cap \left\{ t^\ast \geq c_2^\ast \max \left\{ k \log n, \frac{ \sqrt{\log s}}{L_n^{1/2} s}\sum_{(i, j) \in \mathcal{Q}}P_{ij} \right\}  \right\} \right)\\
     & = 1 - \mathbbm{P}\left( \left\{ M_P(Q) < t^\ast \right\} \cup \left\{ t^\ast < c_2^\ast \max \left\{ k \log n, \frac{  \sqrt{\log s}}{L_n^{1/2} s}\sum_{(i, j) \in \mathcal{Q}}P_{ij} \right\}  \right\} \right)\\
     & \geq 1 - \mathbbm{P}\left( M_P(Q) < t^\ast \right) - \mathbbm{P} \left(  t^\ast < c_2^\ast \max \left\{ k \log n, \frac{  \sqrt{\log s}}{L_n^{1/2} s}\sum_{(i, j) \in \mathcal{Q}}P_{ij} \right\} \right).
\end{split}
\end{align}

In particular, we choose 
$$t^\ast = \frac{\tau_n kn({\theta^\ast_1}^2 + {\theta^\ast_2}^2)}{16}.$$ 
Therefore,  the condition 
$${\theta^\ast_1}^2 + {\theta^\ast_2}^2 \geq \frac{16 c_2^\ast  \log n}{\tau_n n},$$
implies that $t^\ast \geq c_2^\ast k \log n$. Then, in Equation~\eqref{eq: M_QER_large} we obtain the following,
\begin{align}\label{eq:p-mpq-proper}
    \begin{split}
        & \mathbbm{P} \left( M_P(Q) \geq c_2^\ast \max \left\{ k \log n, \frac{ \sqrt{\log s}}{L_n^{1/2} s}\sum_{(i, j) \in \mathcal{Q}}P_{ij} \right\} \right) \\
        & \geq 1 - \mathbbm{P}\left( M_P(Q) < \frac{\tau_n kn({\theta^\ast_1}^2 + {\theta^\ast_2}^2)}{16} \right) - \mathbbm{P} \left( \frac{c_2^\ast   \sqrt{\log s}}{L_n^{1/2} s}\sum_{(i, j) \in \mathcal{Q}}P_{ij} > \frac{\tau_n kn({\theta^\ast_1}^2 + {\theta^\ast_2}^2)}{16} \right)
    \end{split}
\end{align}

Now, using Equation~\eqref{eq: largeM},
\begin{align*}
    \mathbbm{P}\left( M_P(Q) < \frac{\tau_n kn({\theta^\ast_1}^2 + {\theta^\ast_2}^2)}{16}, \forall Q\in\Pi_{n,k}, k\geq 2\right) & \leq \sum_{k\geq 2}\sum_{Q\in\Pi_{n,k}} \mathbbm{P}\left( M_P(Q) < \frac{\tau_n kn({\theta^\ast_1}^2 + {\theta^\ast_2}^2)}{16},\right)\\
    &  \leq \sum_{k\geq 2}n^k\mathbbm{P}\left( M_P(Q) < \frac{\tau_n kn({\theta^\ast_1}^2 + {\theta^\ast_2}^2)}{16}\right)\\
    &  = \sum_{k\geq 2} 2\exp \left(k\log n - \frac{kn\tau_n^2}{256(p_n + q_n)} \right)\\
    &  \leq \sum_{k\geq 2} 2\exp \left(-k\log n\right)\leq \sum_{k\geq 2} \frac{2}{n^{k}} \leq \frac{4}{n^2},
\end{align*} 
where we used the assumption that

$$512 \frac{\log n}{n} \leq \frac{\tau_n^2}{p_n + q_n}.$$
Moreover, using Lemma~\ref{lemma: BernStein_ER} we have 
\begin{align*}
\mathbbm{P} \left( \frac{c_2^\ast  \sqrt{\log s}}{L_n^{1/2} s}\sum_{(i, j) \in \mathcal{Q}}P_{ij} > \frac{\tau_n kn({\theta^\ast_1}^2 + {\theta^\ast_2}^2)}{16} \right) & = 
\mathbbm{P} \left( \frac{1}{L_n^{1/2} }\sum_{(i, j) \in \mathcal{Q}}P_{ij} > \frac{s\tau_n kn({\theta^\ast_1}^2 + {\theta^\ast_2}^2)}{16c_2^\ast  \sqrt{\log s}} \right)\\
& \leq  \exp\left(-ckn\mathbbm{E}(P_{ij})\right)\\
& \leq \exp(-k\log n)\\
& = \frac{1}{n^k}
\end{align*}
where $c>0$ is a constant. The above is valid as long as 
$$\frac{s \tau_n ({\theta^\ast_1}^2 + {\theta^\ast_2}^2)}{16c_2^\ast  \sqrt{\log s}}\geq \frac{\mathbbm{E}(P_{ij})}{\sqrt{\tau_n}},$$
$$\frac{\log n}{n} \leq c\mathbbm{E}(P_{ij}).$$
Therefore, taking the union over all permutation matrices, 
\begin{align*}
    \mathbbm{P} \left( \frac{c_2^\ast  \sqrt{\log s}}{L_n^{1/2} s}\sum_{(i, j) \in \mathcal{Q}}P_{ij} > \frac{\tau_n kn({\theta^\ast_1}^2 + {\theta^\ast_2}^2)}{2},\forall Q\in\Pi_n \right) & \leq \sum_{k\geq 2}\sum_{Q\in\Pi_{n,k}} \frac{1}{n^k}\leq \frac{2}{n^2}.
\end{align*}
Combining the above in Equation~\eqref{eq:p-mpq-proper},
$$\mathbbm{P} \left( M_P(Q) \geq c_2^\ast \max \left\{ k \log n, \frac{ \sqrt{\log s}}{L_n^{1/2} s}\sum_{(i, j) \in \mathcal{Q}}P_{ij} \right\}, \forall Q\in\Pi_n \right) \geq 1 - \frac{6}{n^2} \to 1.$$
\end{proof}

\subsection{Proof of Theorem~\ref{thm:ERconsistency}}

\begin{proof}
According to Theorem~\ref{thm:consistency}, for a given graphs $A, \Tilde{B}$ and $d$ many covariate information, under Assumptions~\ref{assump:eigenval-hessian}, \ref{assump:Bdd-norm-covariates}, \ref{assumption:linkfunction}, \ref{assump:Sparsity} and \ref{assump:Diff-P-permutedP}, we have $\hat{Q} = Q^*$ with high probability for large $n$. 

Under the  Erd\H{o}s-R\'enyi model where $A \sim \operatorname{ER}(p_n)$ and single covariate $Y\sim\operatorname{ER}(q_n)$, Assumption~\ref{assump:eigenval-hessian} is satisfied from Proposition~\ref{prop: eigenval-hessian} with probability $1-o(1)$. On the other hand, Proposition~\ref{prop:Bd-Diff-P-permutedP} ensures $P$ and permuted version of $P$ are different enough as in Assumption~\ref{assump:Diff-P-permutedP} with probability $1-o(1)$. 

For the other assumptions, observe that Assumption~\ref{assump:Bdd-norm-covariates} holds for a binary covariate $Y_{ij}$, andAssumption~\ref{assumption:linkfunction} holds for the model with identity link function.  Finally, using Bernstein's inequality, Assumption \ref{assump:Sparsity} holds with probability $1-o(1)$ if 
$$\frac{\log n}{n} \lesssim \theta_0^\ast + \theta_1^\ast p_n + \theta_2^\ast q_n.$$

Hence, all the  are satisfied under the Erd\H{o}s-R\'enyi model with probability at least $1-o(1)$. Therefore,
\begin{align*}
    \mathbbm{P}(\widehat{Q} = Q^*)  & = \mathbbm{P}(\widehat{Q} = Q^*|\text{assumptions hold})\mathbbm{P}(\text{assumptions hold}) \\
    & \quad\quad +  \mathbbm{P}(\widehat{Q} = Q^*|\text{assumptions do not hold})\mathbbm{P}(\text{assumptions do not hold})\\
    & \leq  \left( 1- c\left(\frac{1}{n^2} + \frac{1}{s}\right)\right)(1 - o(1)) + o(1)\\
    & \to 1,
\end{align*}
as $s,n\to \infty$.

\end{proof}

\end{document}